\newcommand{\up}[2][]{\mathcal{W}_{#1}(#2)}
\newcommand{\F}{\mathcal{F}}
\newcommand{\II}{\mathcal{I}}
\renewcommand{\O}{\mathcal{O}}
\newcommand{\U}{\mathsf{U}}
\newcommand{\N}{\mathbb{N}}
\newcommand{\W}{\mathcal{W}}
\newcommand{\Z}{\mathbb{Z}}
\newcommand{\dom}{\mathrm{dom}}
\renewcommand{\int}[1]{\llbracket #1\rrbracket}
\newcommand{\res}[1]{|_{#1}}
\newcommand{\im}[1]{\mathrm{im}(#1)}
\newcommand{\restr}[2]{#1\res{#2}}
\newcommand{\slot}{\raisebox{-.7mm}{\textcolor{gray}{\text{\Large$\blacksquare$}}}}
\newcommand{\card}{\mathsf{Card}}
\newcommand{\Eq}{\mathsf{Eq}}
\newcommand{\Ev}{\mathsf{Ev}}
\newcommand{\Od}{\mathsf{Od}}
\newcommand{\NC}{\mathsf{NC}}
\newcommand{\ND}{\mathsf{ND}}
\newtheorem{theorem}{Theorem}[section]
\newtheorem{lemma}{Lemma}[section]
\newtheorem{corollary}{Corollary}[section]
\newtheorem{proposition}{Proposition}[section]
\theoremstyle{definition}
\newtheorem{definition}{Definition}[section]
\theoremstyle{remark}
\newtheorem{example}{Example}[section]
\newtheorem{remark}{Remark}[section]
\newtheorem{claim}{Claim}
\newcommand{\pic}[2][-.5mm]{\raisebox{#1}{\includegraphics{Images/#2}}}
\title{Local generation of languages}
\author{Mathieu Hoyrup}
\begin{document}
\maketitle
\begin{abstract}
Given a language, which in this article is a set of strings of some fixed length, we study the problem of producing its elements by a procedure in which each position has its own local rule. We introduce a way of measuring how much communication is needed between positions. The communication structure is captured by a simplicial complex whose vertices are the positions and the simplices are the communication channels between positions. The main problem is then to identify the simplicial complexes that can be used to generate a given language. We develop the theory and apply it to a number of languages.
\end{abstract}

\tableofcontents

\section{Introduction}

In this article, a language is a set~$L\subseteq A^I$ where~$A$ and~$I$ are finite sets. Typically, we will take~$A=\{0,1\}$ and~$I_n=[0,n-1]$ for some positive~$n\in \N$.

Our goal is to generate~$L$ by a procedure in which the elements of~$I$ choose values in~$A$ in order to form any element of~$L$, and communicate as little as possible. More precisely, a procedure is a function~$f:B^J\to A^I$ for some finite sets~$B,J$, and it generates~$L$ if its image is~$L$. Each output cell~$i\in I$ only reads the values of the input on a subset~$\W_f(i)\subseteq J$, called its input window. Communication between elements of~$I$ is possible when their input windows intersect, and our goal is to find procedures in which input windows intersect as little as possible.

The communication structure of such a function~$f$ can be captured by a simplicial complex over vertex set~$I$, called the communication complex of~$f$ and written~$K_f$. For instance, there is an edge between two vertices~$i,j\in I$ if~$\W_f(i)\cap\W_f(j)\neq\emptyset$, there is a triangle joining~$i,j,k$ if~$\W_f(i)\cap\W_f(j)\cap\W_f(k)\neq\emptyset$, and so on. One can think of the inputs cells in~$J$ as the simplices of a complex and the output cells as the vertices, reading the values assigned to their incident simplices. 
We say that a simplicial complex~$K$ generates~$L$ if there is a function~$f$ generating~$L$, whose communication complex~$K_f$ is contained in~$K$. Given a language~$L$, we want to identify the minimal complexes that can generate~$L$. 

We introduce the main definitions and study their fundamental properties. We then identify, for several languages, the minimal simplicial complexes generating those languages. In many cases, it turns out they are graphs, but sometimes higher-dimensional complexes are necessary. In several cases, we are able to characterize the minimal complexes generating the languages; in other cases, the problem turns out to be more difficult and we identify some of them but fail to obtain a complete characterization. For instance, the language of binary strings having exactly one occurrence of~$1$ belongs to the second category.

The idea of vertices having local rules determining their values or state depending on the values of their neighborhoods is similar in spirit to what happens in cellular automata \cite{Kari05}, non-uniform cellular automata \cite{DFP12}, or automata networks \cite{GM90,Gad19,GGPT21}. However, in our case the input values are not placed on the vertices, but on the edges or more general simplices. More importantly, those models are usually studied as dynamical systems evolving over time, but we do not allow any form of iteration: the function is to be applied once, so information cannot circulate by successive applications of the rules. In particular, two vertices that are not directly connected by an edge cannot communicate at all, even if they are joined by a path. We also observe that in the theory of automata networks, the input and output cells belong to the same set and the binary relation reflecting the dependency between them is captured by the interaction graph \cite{Gad19}; this graph can be translated in our setting by a simplicial complex whose vertices are the output cells and the simplices are input cells, and a vertex depends on a simplex if it belong to it.

The problem of producing sequences respecting global specifications by local computations is very classical in distributed computing, and has been investigated via many different graph-based models: distributed reactive systems \cite{PnueliR90}, distributed networks \cite{NS95}, distributed graph automata \cite{R15}, distributed environments \cite{FG18}. Simplicial complexes are also used in the study of distributed algorithms via combinatorial topology \cite{HS99,HKR13}. The problem of language generation can be expressed in that framework, but the task that we consider is unusual in distributed computing and it is not clear whether the techniques developed in this area can help solving our problem. We discuss this point in Section \ref{sec_distributed}.

The article \cite{FH24} also explores the idea of generating a language with restrictions on the access that output cells have on the input cells. However, the restrictions studied in \cite{FH24} are not directly about the communication between the output cells, but on the maximal number of input cells that are visible by each one of them. Although these two types of restrictions do have an impact on each other, their precise relationship is not straightforward.

In Section \ref{sec_framework}, we introduce the main concepts and prove their structural properties. In Section \ref{sec_simple_examples} we study several families of languages and obtain in each case a characterization of the minimal complexes generating them. In Section \ref{sec_complicated_examples}, we study two families of languages whose analysis turns out to be more difficult and present partial results. In Section \ref{sec_distributed} we discuss how the language generation problem can be translated in the framework of distributed computing via combinatorial topology. We conclude in Section \ref{sec_conclusion} with possible future directions.


\section{Generation of a language}\label{sec_framework}
In this section we introduce the main concepts of the article.

We start with a few notations. Let~$A,I$ be two sets. If~$x\in A^I$ and~$i\in I$, then we denote~$x(i)$ by~$x_i$. If~$W\subseteq I$ and~$x\in A^I$,~$\restr{x}{W}\in A^W$ is the restriction of~$x$ to~$W$. For~$n\in \N$, let~$I_n=[0,n-1]$. We will usually denote~$A^{I_n}$ by~$A^n$.

\subsection{Visibility}
Let~$A,B$ and~$I,J$ be finite sets and~$f:B^J\to A^I$ be a function. The elements of~$I$ will be called the \textbf{output cells}, and the elements of~$J$ the \textbf{input cells}. The evaluation of~$f$ should be thought as follows: each output cell observes the content of some input cells and applies a local rule to determine its content. We write~$f_i(x)$ for~$f(x)_i$.

The following notion captures the idea that an output cell may only have a partial view on the input, as through a window. It appears in several places with different names, for instance \cite{Gad19,FH24}.
\begin{definition}[Input windows]
Let~$f:B^J\to A^I$. To each~$i\in I$ we associate its \textbf{input window}~$\W_f(i)$, which is the smallest subset~$W$ of~$J$ such that for all~$x,y\in B^J$, if~$\restr{x}{W}=\restr{y}{W}$, then~$f_i(x)=f_i(y)$.
\end{definition}
\begin{proposition}\label{prop_well_defined}
This notion is well-defined, i.e.~there is indeed a smallest such set~$W$.
\end{proposition}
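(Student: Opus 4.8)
The plan is to show that the collection of "sufficient" sets is closed under intersection, so that the smallest one is obtained by intersecting all of them. Call a set $W\subseteq J$ \emph{sufficient for $i$} if for all $x,y\in B^J$ with $\restr{x}{W}=\restr{y}{W}$ we have $f_i(x)=f_i(y)$. The whole set $J$ is trivially sufficient, so the family is nonempty; the claim is that the intersection of all sufficient sets is again sufficient, hence is the desired smallest set $\W_f(i)$.

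First I would reduce to a binary statement: it suffices to prove that if $W_1$ and $W_2$ are both sufficient for $i$, then $W_1\cap W_2$ is sufficient for $i$; the general case then follows since $J$ is finite, so the intersection of \emph{all} sufficient sets equals a finite intersection of sufficient sets. So fix $x,y\in B^J$ with $\restr{x}{W_1\cap W_2}=\restr{y}{W_1\cap W_2}$, and the goal is $f_i(x)=f_i(y)$. The key step is to introduce an interpolating point $z\in B^J$ defined by $z_j = x_j$ for $j\in W_1$ and $z_j=y_j$ for $j\notin W_1$. This is well-defined as a function on $J$. Now $\restr{z}{W_1}=\restr{x}{W_1}$ by construction, so sufficiency of $W_1$ gives $f_i(z)=f_i(x)$. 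On the other hand I claim $\restr{z}{W_2}=\restr{y}{W_2}$: for $j\in W_2\setminus W_1$ we have $z_j=y_j$ directly, and for $j\in W_2\cap W_1$ we have $z_j=x_j=y_j$ using the hypothesis $\restr{x}{W_1\cap W_2}=\restr{y}{W_1\cap W_2}$. Hence sufficiency of $W_2$ gives $f_i(z)=f_i(y)$, and combining, $f_i(x)=f_i(z)=f_i(y)$, as required.

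Finally I would record that $W:=\bigcap\{W' : W'\text{ sufficient for }i\}$ is then sufficient (being a finite intersection of sufficient sets, by the previous paragraph and finiteness of the powerset of $J$), and it is contained in every sufficient set by definition, so it is the smallest subset of $J$ with the defining property of $\W_f(i)$; this proves the proposition. I expect the only real subtlety to be the construction of the interpolant $z$ and the bookkeeping check that $\restr{z}{W_2}=\restr{y}{W_2}$ splits correctly into the two cases $j\in W_1$ and $j\notin W_1$; everything else is immediate. One could alternatively phrase the same argument symmetrically by changing finitely many coordinates of $x$ to those of $y$ one block at a time, but the single interpolant is the cleanest version.
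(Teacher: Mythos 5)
Your proof is correct and uses essentially the same idea as the paper: interpolating between $x$ and $y$ via a point that agrees with $x$ on a sufficient set and with $y$ elsewhere. The only difference is organizational — you reduce to the binary intersection and invoke finiteness, while the paper runs the same interpolation along the full chain of sufficient sets in one pass (the alternative you mention at the end).
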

\begin{proof}
There are finitely many sets~$W$ satisfying the condition that~$\restr{x}{W}=\restr{y}{W}$ implies~$f_i(x)=f_i(y)$. Let~$W_0,\ldots,W_k$ be an enumeration of these sets, and let~$\W_f(i)$ be their intersection. Let~$x,y$ coincide on~$\W_f(i)$. We can build a sequence of inputs~$x=x_0,x_1,\ldots,x_{k+1}=y$ such that for each~$j\leq k$,~$x_j$ and~$x_{j+1}$ coincide on~$W_j$, implying that~$f_i(x)=f_i(x_0)=\ldots=f_i(x_{k+1})=f_i(y)$. One simply let~$x_{j+1}$ coincide with~$x$ on~$W_0\cap\ldots \cap W_j$ and with~$y$ elsewhere. The final string~$x_{k+1}$ coincides with~$x$ hence with~$y$ on~$\W_f(i)$, and with~$y$ elsewhere, so~$x_{k+1}=y$.
\end{proof}

It will also be useful to adopt the viewpoint of an input cell and identify the output cells that can see it.
\begin{definition}[Dual windows]
Let~$f:B^J\to A^I$. We define the \textbf{dual window} of~$j\in J$ as
\[
\W^f(j)=\{i\in I:j\in\W_f(i)\}.
\]
\end{definition}

The \textbf{visibility diagram} of~$f$ is the relation
\[
\{(i,j):j\in\W_f(i)\}=\{(i,j):i\in\W^f(j)\}\subseteq I\times J.
\]
It will be be depicted as in Figure \ref{fig_vis_diag}, where a square in column~$i$ and row~$j$ indicates that~$j\in\W_f(i)$, i.e.~that~$i$ has access to the content of~$j$, or equivalently that~$i\in\W^f(j)$, i.e.~that~$j$ is visible by~$i$. In network automata, there is no distinction between input and output cells and the visibility diagram is represented as a graph, called the interaction graph \cite{Gad19}. In our setting, it can be seen as a bipartite graph, but the representation as a boolean matrix is more readable.

\begin{example}
Let~$f:\{0,1\}^{\{a,b,c,d\}}\to \{0,1\}^{\{A,B,C\}}$ be defined by
\[
f(a,b,c,d)=(a\times b,b\times c,c\times d).
\]
Its visibility diagram is shown in Figure \ref{fig_vis_diag} and summarizes the following information:
\[
\begin{array}{lcl}
\W_f(A)=\{a,b\}&&\W^f(a)=\{A\}\\
\W_f(B)=\{b,c\}&&\W^f(b)=\{A,B\}\\
\W_f(C)=\{c,d\}&&\W^f(c)=\{B,C\}\\
&&\W_f(d)=\{C\}
\end{array}
\]

\begin{figure}[ht]
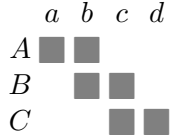

\begingroup
\arraycolsep=1pt
\begin{equation*}
\begin{array}{cccccc}
&a&b&c&d\\
A&\slot&\slot&&\\
B&&\slot&\slot&\\
C&&&\slot&\slot&
\end{array}
\end{equation*}
\endgroup
\caption{The visibility diagram of~$f:\{0,1\}^{\{a,b,c,d\}}\to \{0,1\}^{\{A,B,C\}}$}\label{fig_vis_diag}
\end{figure}
\end{example}

It is convenient to introduce the following notation.
\begin{definition}
For~$S\subseteq I$, let
\begin{align*}
\up[f]{S}&=\bigcap_{i\in S}\W_f(i)\\
&=\{j\in J:S\subseteq \W^f(j)\}.
\end{align*}
\end{definition}
Intuitively,~$\up[f]{S}$ is the set of input cells that are visible by all the output cells in~$S$. The fact that~$\up[f]{S}$ is non-empty can be interpreted as the possibility, for all the output cells in~$S$, to communicate all together in order to choose their contents.

\subsection{Communication complex}
When an input cell is read by several output cells, it allows these ones to communicate in order to choose their content. We introduce an object that captures this communication structure.

We first recall the classical notion of simplicial complex and refer to \cite{Kozlov07} for more information.
\begin{definition}[Simplicial complex]
A \textbf{simplicial complex} (or \textbf{complex}) over a set~$I$ is a collection~$K$ of subsets of~$I$ which is downwards closed: if~$S\subseteq T$ and~$T\in K$, then~$S\in K$. The elements of~$K$ are called \textbf{simplices}.
\end{definition}
A simplicial complex is entirely determined by its maximal simplices. We will often abuse notation and write~$K=\{S_0,\ldots,S_k\}$ where the~$S_i$'s are its maximal simplices: it should be understood that~$K$ is the simplicial complex induced by the~$S_i$'s, i.e.~the collection of subsets of the~$S_i$'s.

One may think of a simplicial complex as a hypergraph, where the elements of~$I$ are the vertices, the pairs are the edges and the larger simplices are hyperedges.

We then arrive at the two most important definitions of the article.
\begin{definition}[Communication complex]
To a function~$f:B^J\to A^I$ we associate its \textbf{communication complex}~$K_f$, which is the simplicial complex over~$I$ defined for~$S\subseteq I$ by:
\begin{equation*}
S\in K_f\iff \up[f]{S}=\bigcap_{i\in S}\W_f(i)\neq\emptyset.
\end{equation*}
\end{definition}
It is indeed a simplicial complex: if~$S\in K_f$ and~$T\subseteq S$, then~$T\in K_f$.

\begin{definition}[Language generation]
Let~$L\subseteq A^I$. We say that a function~$f:B^J\to A^I$ \textbf{generates}~$L$ if~$\im{f}=L$, and that a simplicial complex~$K$ over~$I$ \textbf{generates}~$L$ if there exists a function~$f$ generating~$L$ such that~$K_f\subseteq K$.
\end{definition}
Of course, if~$K\subseteq K'$ and~$K$ generates~$L$, then so does~$K'$. Therefore, our main goal will be to identify the minimal complexes generating a given language~$L$, where minimal is understood in the sense of inclusion between complexes.

\begin{remark}[The trivial generation procedure]
For any~$L\subseteq A^I$, the full complex~$K=\{I\}$ always generates~$L$ because all the cells can collectively agree on the element of~$L$ to be produced. Formally, let~$B=L$,~$J=\{0\}$ and~$f:L^J\to A^I$ be the ``identity'': its input is an element~$x\in L=L^J$, and its output is~$x\in A^I$. There is a single input cell in~$J$, and all the output cells read it. Whatever the communication complex~$K_f$ is, it is contained in the full complex. 
\end{remark}

It will sometimes be convenient to express the communication complex of~$f$ using its dual windows.

\begin{proposition}[Communication complex via dual windows]\label{prop_complex_dual_windows}
Let~$f:B^J\to A^I$. Its communication complex~$K_f$ is the simplicial complex induced by the dual windows of~$f$, i.e.~for every set~$S\subseteq I$,
\[
S\in K_f\iff \exists j\in J,S\subseteq \W^f(j).
\]
\end{proposition}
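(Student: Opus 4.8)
The statement to prove is essentially an unfolding of definitions, so the plan is to chase through the definitions of $K_f$, of $\W_f(i)$ via the intersection $\up[f]{S}$, and of the dual window $\W^f(j)$. The core identity needed is $\up[f]{S}=\{j\in J:S\subseteq\W^f(j)\}$, which is actually already recorded in the displayed definition of $\up[f]{S}$; so the proof amounts to observing that this set is nonempty precisely when there exists $j\in J$ with $S\subseteq\W^f(j)$.

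Concretely, I would argue as follows. Fix $S\subseteq I$. By definition, $S\in K_f$ iff $\up[f]{S}\neq\emptyset$. Now $\up[f]{S}=\bigcap_{i\in S}\W_f(i)$, and an element $j\in J$ lies in this intersection iff $j\in\W_f(i)$ for every $i\in S$, i.e.\ iff $i\in\W^f(j)$ for every $i\in S$ (using the definition of the dual window), i.e.\ iff $S\subseteq\W^f(j)$. Hence $\up[f]{S}=\{j\in J:S\subseteq\W^f(j)\}$, and this set is nonempty iff $\exists j\in J,\ S\subseteq\W^f(j)$. Combining, $S\in K_f\iff\exists j\in J,\ S\subseteq\W^f(j)$, which is the claimed characterization.

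It then remains to note that the right-hand condition indeed defines the simplicial complex \emph{induced by} the dual windows in the sense of the paper's convention: a set belongs to it iff it is contained in one of the generating sets $\W^f(j)$, which is exactly the downward closure of the family $\{\W^f(j):j\in J\}$. I would spell this out in one sentence to make the phrase ``simplicial complex induced by the dual windows'' precise, and perhaps remark that downward closure is automatic from the form $S\subseteq\W^f(j)$. There is no real obstacle here — the only thing to be careful about is not to conflate ``$\W^f(j)$ is a simplex of $K_f$'' (true, and it is a maximal one only sometimes) with the weaker ``every simplex is contained in some $\W^f(j)$'' (which is what we actually need and prove).
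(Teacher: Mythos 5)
Your proof is correct and follows exactly the paper's own (one-line) argument: unfold $S\in K_f$ as $\bigcap_{i\in S}\W_f(i)\neq\emptyset$ and rewrite membership in that intersection as $S\subseteq\W^f(j)$ via the definition of the dual window. The extra care you take about what ``induced by the dual windows'' means is a reasonable elaboration but not a departure from the paper's proof.
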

In particular, every dual window belongs to~$K_f$, and every maximal simplex of~$K_f$ is some dual window.
\begin{proof}
Let~$S\subseteq I$. One has
\begin{align*}
S\in K_f&\iff \bigcap_{i\in S}\W_f(i)\neq\emptyset\\
&\iff \exists j\in J, S\subseteq \W^f(j).\qedhere
\end{align*}
\end{proof}

\begin{remark}[Varying input alphabets]
When designing a generation procedure, it is sometimes convenient to allow different alphabets for the input cells: for every~$j\in J$, we choose a finite set~$B_j$ and consider functions~$f:\prod_j B_j\to A^I$. Such a function can always be converted into a function~$g:B^J\to A^I$ for a sufficiently large alphabet~$B$, generating the same language and whose communication complex is the same as the one that could be defined for~$f$.
\end{remark}

\paragraph{Examples.} We give a few corner case examples.
\begin{example}[Constant function]
Let~$f:B^J\to A^I$ be constant (i.e.~$\im{f}$ is a singleton). One has~$\W_f(i)=\emptyset$ and~$\W^f(j)=\emptyset$ for all~$i\in I$ and~$j\in J$, and~$K_f$ is the empty complex.

Therefore, if~$L\subseteq A^I$ is a singleton, then the only minimal complex generating~$L$ is the empty complex. Intuitively, no communication is needed because each output cell has a predetermined constant value.
\end{example}
\begin{example}[Identity function]
Let~$f:A^I\to A^I$ be the identity, with~$|A|\geq 2$. For each~$i\in I$, one has~$\W_f(i)=\W^f(i)=\{i\}$, and~$K_f$ is the complex made of singletons, i.e.~$K_f$ is a set of vertices with no edge or higher-order simplices between them.

The only minimal complex generating the full language~$L=A^I$ is the complex made of singletons. Intuitively, no communication between output cells is needed, but every cell needs to read some varying input to produce a varying output.
\end{example}

\begin{example}[Constant sequences]\label{ex_constant}
Assume that~$|A|\geq 2$ and let~$L\subseteq A^I$ be the set of constant sequences. We show that the only complex generating~$L$ is the full complex over~$I$. Let~$f$ generate~$L$. As all the functions~$f_i$ are equal,~$W:=\W_f(i)$ does not depend on~$i\in I$. As~$|A|\geq 2$, each~$f_i$ is non-constant so~$W=\W_f(i)$ is non-empty. Therefore,~$\bigcap_{i\in I}\W_f(i)=W\neq\emptyset$, so~$K_f$ is the full complex.
\end{example}

We will see that most languages are generated by intermediate complexes, sometimes by graphs, sometimes by higher-dimensional complexes.

Our general goal is to identify the family of complexes that generate a given language, and more precisely the minimal ones. Certain languages admit a unique minimal generating complex, other languages admit a larger family of generating complexes.

More generally, we want to relate the properties of a language to the properties of the complexes generating it. For instance, the connectivity properties of complexes generating a language gives information on the correlations or dependencies between values that the language puts on positions in~$I$.

We now develop all the fundamental properties of communication complexes and language generation. Most of them are very natural and simple to prove, and constitute a toolbox that will be very useful when investigating particular languages.


\subsection{Composition}
It frequently happens that a language~$L$ can be obtained from another language~$M$ by applying a surjective function~$f:M\to L$. Any generation procedure for~$M$ then induces a generation procedure for~$L$ by composition. We analyze how it transforms the underlying communication complexes.
\begin{definition}[Image of a complex]
Let~$f:B^J\to A^I$ be continuous and~$K$ be a simplicial complex over~$J$. We define the simplicial complex~$f_*(K)$ over~$I$ as the simplicial complex induced by the sets
\[
f_*(S):=\bigcup_{j\in S}\W^f(j)
\]
for all~$S\in K$. In other words, a subset of~$I$ belongs to~$f_*(K)$ if and only if it is contained in one of these sets.
\end{definition}
%

\begin{proposition}[Image of a language]\label{prop_image}
Let~$f:B^J\to A^I$ and~$g:C^H\to B^J$. One has
\[
K_{f\circ g}\subseteq f_*(K_g).
\]

In particular, if a complex~$K$ generates a language~$M\subseteq B^J$, then~$f_*(K)$ generates the language~$f(M)\subseteq A^I$.
\end{proposition}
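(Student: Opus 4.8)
The plan is to first establish the compositionality of input windows and then read off the statement about communication complexes from it. The key auxiliary fact is that for every output cell $i\in I$,
\[
\W_{f\circ g}(i)\subseteq\bigcup_{j\in\W_f(i)}\W_g(j).
\]
To prove this, set $W=\bigcup_{j\in\W_f(i)}\W_g(j)\subseteq H$ and take $x,y\in C^H$ with $\restr{x}{W}=\restr{y}{W}$. For each $j\in\W_f(i)$ we have $\W_g(j)\subseteq W$, so $x$ and $y$ agree on $\W_g(j)$, hence $g_j(x)=g_j(y)$; thus $g(x)$ and $g(y)$ agree on $\W_f(i)$, and therefore $f_i(g(x))=f_i(g(y))$. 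This shows that $W$ is one of the sets defining $\W_{f\circ g}(i)$, and minimality gives the inclusion.

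Next I would pass to dual windows via Proposition~\ref{prop_complex_dual_windows}. Let $S\in K_{f\circ g}$; we may assume $S\neq\emptyset$, the empty simplex being immediate. By Proposition~\ref{prop_complex_dual_windows} there is $h\in H$ with $S\subseteq\W^{f\circ g}(h)$, i.e.\ $h\in\W_{f\circ g}(i)$ for every $i\in S$. By the inclusion above, for each such $i$ there exists $j_i\in\W_f(i)$ with $h\in\W_g(j_i)$. Now put $T=\W^g(h)=\{j\in J:h\in\W_g(j)\}$. Since $T$ contains each $j_i$, it is non-empty, so $T\in K_g$, again by Proposition~\ref{prop_complex_dual_windows}. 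For every $i\in S$ we have $j_i\in T$ and, since $j_i\in\W_f(i)$, also $i\in\W^f(j_i)\subseteq\bigcup_{j\in T}\W^f(j)=f_*(T)$. Hence $S\subseteq f_*(T)$ with $T\in K_g$, so $S\in f_*(K_g)$. This proves $K_{f\circ g}\subseteq f_*(K_g)$.

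For the second assertion, suppose $K$ generates $M$, witnessed by $g:C^H\to B^J$ with $\im{g}=M$ and $K_g\subseteq K$. Then $f\circ g:C^H\to A^I$ has image $f(\im{g})=f(M)$, and $K_{f\circ g}\subseteq f_*(K_g)\subseteq f_*(K)$, where the last inclusion is the monotonicity of the operation $K\mapsto f_*(K)$ with respect to inclusion (every generating set $f_*(S)$ with $S\in K_g$ is also a generating set of $f_*(K)$). Hence $f_*(K)$ generates $f(M)$.

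The only step requiring genuine care is the compositionality inclusion for input windows; everything afterwards is bookkeeping with dual windows. The one point to notice is that the witnesses $j_i\in\W_f(i)$ for different cells $i\in S$ need not coincide, but they all lie in the single dual window $T=\W^g(h)$, which is precisely what is needed to exhibit $S$ inside one of the generating sets $f_*(T)$ of $f_*(K_g)$.
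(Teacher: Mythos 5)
Your proof is correct and follows essentially the same route as the paper: your window-compositionality inclusion $\W_{f\circ g}(i)\subseteq\bigcup_{j\in\W_f(i)}\W_g(j)$ is exactly the primal form of the inclusion $\W^{f\circ g}(h)\subseteq\bigcup_{j\in\W^g(h)}\W^f(j)$ that the paper uses, and the rest is the same bookkeeping with dual windows. The only difference is that you supply the minimality argument for that key inclusion, which the paper asserts without proof.
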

\begin{proof}
Every simplex of~$K_{f\circ g}$ is contained in~$\W^{f\circ g}(h)$ for some~$h\in H$. One has~$\W^{f\circ g}(h)\subseteq\bigcup_{j\in\W^g(h)}\W^f(j)$ which belongs to~$f_*(K_g)$ as~$\W^g(h)$ belongs to~$K_g$.

Therefore, if~$g$ generates~$M$ and~$K_g\subseteq K$, then~$f\circ g$ generates~$f(M)$ and~$K_{f\circ g}\subseteq f_*(K_g)\subseteq f_*(K)$.
\end{proof}

\paragraph{Projection.}
A particularly simple operation consisting in projecting a language on a subset of coordinates. It easily corresponds, on the side of simplicial complexes, to restricting the complex to a subset of vertices.

For~$J\subseteq I$, we define the \textbf{projection} function~$\pi_J:A^I\to A^J$, sending~$x$ to~$\restr{x}{J}$. If~$K$ is a complex over~$I$, then~$\restr{K}{J}$ is the complex over~$J$ whose simplices are the simplices of~$K$ that are contained in~$J$.
\begin{corollary}[Projection]
Let~$L\subseteq A^I$. If~$K$ generates~$L$, then~$\restr{K}{J}$ generates~$\pi_J(L)$.
\end{corollary}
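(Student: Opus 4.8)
The statement to prove is the Projection Corollary: if $K$ generates $L\subseteq A^I$ and $J\subseteq I$, then $\restr{K}{J}$ generates $\pi_J(L)$. The natural approach is to deduce it from Proposition \ref{prop_image} applied to the map $f=\pi_J:A^I\to A^J$. First I would unwind what $\pi_J$ does to windows: since $(\pi_J)_i(x)=x_i$ for $i\in J$, the input window $\W_{\pi_J}(i)$ is exactly $\{i\}$ for every $i\in J$, and consequently the dual window $\W^{\pi_J}(i)$ equals $\{i\}$ if $i\in J$ and $\emptyset$ if $i\in I\setminus J$.

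Next I would compute $(\pi_J)_*(K)$ directly from the definition. For a simplex $S\in K$, $(\pi_J)_*(S)=\bigcup_{i\in S}\W^{\pi_J}(i)=S\cap J$. Hence $(\pi_J)_*(K)$ is the simplicial complex induced by all sets of the form $S\cap J$ with $S\in K$. Since $K$ is downward closed, each such $S\cap J$ is itself a simplex of $K$ contained in $J$, and conversely every simplex of $K$ contained in $J$ arises this way (take $S$ itself). Therefore $(\pi_J)_*(K)$ is exactly $\restr{K}{J}$ as defined just before the corollary.

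Finally I would invoke the second part of Proposition \ref{prop_image}: if $K$ generates $L$, then $(\pi_J)_*(K)$ generates $\pi_J(L)$. Combining this with the identification $(\pi_J)_*(K)=\restr{K}{J}$ gives the claim.

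The only mild subtlety — hardly an obstacle — is making sure the ``induced complex'' bookkeeping matches: one must check that taking the complex induced by the sets $\{S\cap J : S\in K\}$ yields precisely the down-closure restricted to $J$, which follows immediately because $K$ is already down-closed, so $S\cap J\in K$ whenever $S\in K$. There is also a trivial matching-of-hypotheses point: Proposition \ref{prop_image} is stated for continuous $f$, but here $I$ is finite so every function is continuous (or one simply ignores the topology, as the finite setting makes it vacuous). No real difficulty arises, and the proof is essentially a two-line specialization of the composition proposition.
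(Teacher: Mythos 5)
Your proof is correct and follows exactly the paper's own argument: compute the dual windows of $\pi_J$, observe that $(\pi_J)_*(S)=S\cap J$ so that $(\pi_J)_*(K)=\restr{K}{J}$, and apply Proposition \ref{prop_image}. No differences worth noting.
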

\begin{proof}
Let~$f=\pi_J$. One has
\[
\W^f(j)=\begin{cases}
\{j\}&\text{if }j\in J,\\
\emptyset&\text{otherwise.}
\end{cases}
\]
so~$f_*(S)=S\cap J$ for~$S\subseteq I$. Therefore,~$f_*(K)=\restr{K}{J}$ generates~$\pi_J(L)$ by Proposition \ref{prop_image}.
\end{proof}

\paragraph{Canonical generation procedure.}
We now show that if a complex~$K$ generates a language~$L$, then there exists a generation procedure working as follows: arbitrary values are assigned to the simplices of~$K$, or just to its maximal simplices, and each vertex~$i\in I$ has a local rule that, upon reading the values of its incident simplices (i.e., the ones that contain~$i$), determines a value for~$i$. Moreover, one can assume that the values that are assigned to the simplices are elements of~$L$, and that if a vertex~$i$ reads the same element~$x$ in all its incident simplices, then~$i$ takes value~$x_i$. In other words, if~$J=\{S_0,\ldots,S_{k-1}\}$ is the set of maximal simplices of~$K$, then this procedure is a function~$f:L^k\to A^I$ such that for all~$j\in I_k$,~$\W^f(j)=S_j$, and for all~$x\in L$,~$f(x,\ldots,x)=x$.

A particularly simple example is when~$K$ is a graph: values are assigned to the edges, and each vertex determines its value depending on the values of its incident edges.

\begin{proposition}[Canonical generation procedure]\label{prop_canonical}
Let~$L\subseteq A^I$ and~$K$ a complex over~$I$ generating~$L$. Let~$B=L$ and~$J$ be the set of maximal simplices of~$K$. There exists a function~$f:B^J\to A^I$ generating~$L$, such that~$K_f\subseteq K$ and~$f(x,\ldots,x)=x$ for all~$x\in L$.
\end{proposition}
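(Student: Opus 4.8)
The plan is to reuse an arbitrary generating function for~$L$ after re-labelling its input cells with elements of~$L$ placed on the maximal simplices of~$K$. Since~$K$ generates~$L$, fix a function~$g\colon C^H\to A^I$ with~$\im{g}=L$ and~$K_g\subseteq K$, and let~$S_0,\dots,S_{k-1}$ enumerate the maximal simplices of~$K$, so that~$J=\{S_0,\dots,S_{k-1}\}$. (If~$L$ is a singleton one may take~$f$ constant, and the empty case is vacuous, so assume~$|L|\geq 2$; then~$K$ has a nonempty maximal simplex and~$J\neq\emptyset$.) The key observation is that every dual window~$\W^g(h)$ of~$g$ is a simplex of~$K_g$, hence of~$K$, by Proposition~\ref{prop_complex_dual_windows}, so it is contained in some maximal simplex; I would therefore fix once and for all a map~$\sigma\colon H\to\{0,\dots,k-1\}$ with~$\W^g(h)\subseteq S_{\sigma(h)}$ for every~$h$, that is, a way of routing each input cell of~$g$ to a maximal simplex of~$K$ containing the whole dual window of that cell.

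Next, using that~$g$ is onto~$L$, I would fix a section~$r\colon L\to C^H$ with~$g(r(y))=y$ for all~$y\in L$, and define~$f\colon L^J\to A^I$ by
\[
f(\bar x)=g\big(c(\bar x)\big),\qquad\text{with }c(\bar x)_h=r\big(x^{(\sigma(h))}\big)_h
\]
for~$\bar x=(x^{(j)})_{j\in J}\in L^J$. Informally, the input cell~$h$ of~$g$ reads the~$L$-label carried by the maximal simplex~$S_{\sigma(h)}$, decodes it through~$r$, keeps its own coordinate, and then~$f$ runs~$g$ on the resulting input.

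Then I would check the three conclusions. Generation and the fixed-point identity go together: on a constant tuple~$\bar x=(y,\dots,y)$ one has~$c(\bar x)=r(y)$, hence~$f(y,\dots,y)=g(r(y))=y$, which gives both~$f(x,\dots,x)=x$ for all~$x\in L$ and~$L\subseteq\im{f}$; and~$\im{f}\subseteq\im{g}=L$ gives equality. For the communication complex, I would run a locality argument: for fixed~$i$, the value~$g_i(c)$ depends only on the coordinates~$c_h$ with~$h\in\W_g(i)$, and each such~$c_h=r(x^{(\sigma(h))})_h$ depends only on the single coordinate~$x^{(\sigma(h))}$ of~$\bar x$, so~$\W_f(i)\subseteq\{\sigma(h):h\in\W_g(i)\}$. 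Hence if~$j\in\W_f(i)$, then~$j=\sigma(h)$ for some~$h$ with~$i\in\W^g(h)\subseteq S_{\sigma(h)}=S_j$, so~$i\in S_j$; in other words~$\W^f(j)\subseteq S_j\in K$ for every~$j\in J$, and~$K_f\subseteq K$ then follows from Proposition~\ref{prop_complex_dual_windows}.

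The one step needing care is the locality argument: although the reconstructed input~$c(\bar x)$ mixes coordinates coming from several different~$L$-labels, I must check that each \emph{individual} input cell~$h$ of~$g$ still depends on only one label — the one on~$S_{\sigma(h)}$ — so that the dual windows of~$f$ never escape the corresponding maximal simplices. Everything else is bookkeeping. One point worth flagging is that~$c(\bar x)$ is generally not in the image of~$r$, so~$f(\bar x)\in L$ is justified not by~$c(\bar x)$ being a legitimate encoding but simply by~$\im{g}=L$.
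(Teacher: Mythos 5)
Your proposal is correct and follows essentially the same route as the paper: you compose an arbitrary generating function~$g$ with a re-encoding map that routes each input cell~$h$ of~$g$ to a chosen maximal simplex containing~$\W^g(h)$ and decodes the~$L$-label there via a section of~$g$; your~$\sigma$, $r$, and~$c$ correspond exactly to the paper's~$j_h$, $z$, and~$\varphi$, and the locality argument bounding~$\W^f(j)$ by~$S_j$ matches the paper's computation of~$\W^{g\circ\varphi}(j)$.
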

\begin{proof}
For more clarity,~$J$ is an initial segment of~$\N$ and~$(S_j)_{j\in J}$ is the family of maximal simplices of~$K$.

We start with some~$g:C^H\to A^I$ generating~$L$ with~$K_g\subseteq K$, for some finite sets~$C$ and~$H$. We define~$\varphi:B^J\to C^H$ such that~$g\circ \varphi$ generates~$L$ and~$K_{g\circ \varphi}\subseteq K$.

For each~$h$,~$\W^g(h)$ is contained in some maximal simplex~$S_{j_h}$ of~$K$ (Proposition \ref{prop_complex_dual_windows}). For each~$x\in L$, there exists~$z(x)\in C^H$ such that~$g(z(x))=x$. We define~$\varphi:B^J\to C^H$ as follows: for~$h\in H$, let
\[
\varphi_h(y)=z_h(y_{j_h}).
\]

First,~$g\circ \varphi$ generates~$L$. One has~$\im{g\circ \varphi}\subseteq\im{g}=L$. For each~$x\in L$, let~$y_j=x$ for all~$j$. One has~$\varphi(y)=z(x)$ so~$g\circ \varphi(y)=x$. Therefore,~$\im{g}=L$.

Note that~$\W_\varphi(h)\subseteq\{j_h\}$, so if~$h\in\W^\varphi(j)$ then~$j_h=j$. Therefore,
\begin{align*}
\W^{g\circ \varphi}(j)&\subseteq \bigcup_{h\in\W^\varphi(j)}\W^g(h)\\
&\subseteq\bigcup_{h\in\W^\varphi(j)}S_{j_h}\\
&\subseteq S_j.
\end{align*}
As a result,~$K_{g\circ\varphi}\subseteq K$.
\end{proof}

In many cases we will find a much more economical procedure, using a small input alphabet such as~$\{0,1\}$.

\subsection{Irreducible components}
The connectivity of a complex reflects a simple property of the language it generates.

\begin{definition}
A language~$L\subseteq A^I$ is \textbf{irreducible} if there is no non-trivial partition~$I=I_0\sqcup I_1$ such that~$L=\pi_{I_0}(L)\times \pi_{I_1}(L)$.
\end{definition}
For instance, the classical operation of concatenating two languages results in a non-irreducible language.

A simplicial complex~$K$ over~$I$ is \textbf{disconnected} if there exists a non-trivial partition~$I=I_0\sqcup I_1$ such that every simplex of~$K$ is contained in~$I_0$ or in~$I_1$. Otherwise, it is \textbf{connected}.

\begin{proposition}[Irreducible vs connected]\label{prop_irreducible}
A language~$L$ is irreducible if and only if every complex generating~$L$ is connected.
\end{proposition}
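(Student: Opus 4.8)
The statement is an equivalence, and I would prove both implications by contraposition, the contrapositive forms being: ``if $L$ is reducible then some complex generating $L$ is disconnected'' and ``if some complex generating $L$ is disconnected then $L$ is reducible''. The whole proof rests on the dictionary between a product decomposition of $L$ along a partition $I=I_0\sqcup I_1$ and a procedure whose input cells split into a group used only by $I_0$ and a group used only by $I_1$.

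For the first implication, suppose $L=\pi_{I_0}(L)\times\pi_{I_1}(L)$ for a non-trivial partition $I=I_0\sqcup I_1$, and write $L_b=\pi_{I_b}(L)$. Take any procedures $g^0:B^{J_0}\to A^{I_0}$ and $g^1:B^{J_1}\to A^{I_1}$ generating $L_0$ and $L_1$ respectively, with $J_0\cap J_1=\emptyset$ (such procedures always exist, e.g.\ the trivial ones). Let $f:B^{J_0\sqcup J_1}\to A^I$ be defined coordinatewise by $f_i(x)=g^0_i(\restr{x}{J_0})$ for $i\in I_0$ and $f_i(x)=g^1_i(\restr{x}{J_1})$ for $i\in I_1$. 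Then $\im{f}=\im{g^0}\times\im{g^1}=L_0\times L_1=L$, and $\W_f(i)\subseteq J_0$ for $i\in I_0$ while $\W_f(i)\subseteq J_1$ for $i\in I_1$. Hence for any simplex $S$ of $K_f$ meeting both $I_0$ and $I_1$ we would have $\up[f]{S}\subseteq J_0\cap J_1=\emptyset$, which is impossible; so every simplex of $K_f$ is contained in $I_0$ or in $I_1$, i.e.\ $K_f$ is disconnected, and it generates $L$.

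For the main implication, fix a complex $K$ generating $L$ that is disconnected via a non-trivial partition $I=I_0\sqcup I_1$, and fix $f:B^J\to A^I$ with $\im{f}=L$ and $K_f\subseteq K$. The crucial observation is that every dual window $\W^f(j)$ is a simplex of $K_f$ by Proposition \ref{prop_complex_dual_windows}, hence a simplex of $K$, hence entirely contained in $I_0$ or in $I_1$ by disconnectedness. Partition the input cells by $J_0=\{j\in J:\W^f(j)\subseteq I_0\}$ and $J_1=J\setminus J_0$ (so cells with empty dual window land in $J_0$, and any $j\in J_1$ has $\W^f(j)\subseteq I_1$). If $i\in I_0$ and $j\in\W_f(i)$, then $i\in\W^f(j)$, forcing $\W^f(j)\subseteq I_0$ and thus $j\in J_0$; so $\W_f(i)\subseteq J_0$, and symmetrically $\W_f(i)\subseteq J_1$ for $i\in I_1$. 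Therefore $f_i(x)$ depends only on $\restr{x}{J_0}$ when $i\in I_0$ and only on $\restr{x}{J_1}$ when $i\in I_1$, so $f$ factors as a product $f=f^0\times f^1$ with $f^b:B^{J_b}\to A^{I_b}$; consequently $L=\im{f}=\im{f^0}\times\im{f^1}$, and projecting onto the two blocks gives $L=\pi_{I_0}(L)\times\pi_{I_1}(L)$ over a non-trivial partition, so $L$ is reducible.

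The only real work is in this last implication, and it is bookkeeping rather than a genuine obstacle: one must check that the input windows split cleanly along the partition (which follows entirely from ``every dual window is a simplex of $K_f\subseteq K$'' together with disconnectedness), then that the product factorization of $f$ transfers to a product factorization of $L=\im{f}$ on the nose, and finally dispose of the degenerate cases (empty language, empty dual windows), none of which causes difficulty. A cosmetic alternative, if one prefers to make the partition of $J$ entirely transparent, is to first replace $f$ by the canonical generation procedure of Proposition \ref{prop_canonical}, whose input cells are exactly the maximal simplices of $K$, each already contained in $I_0$ or in $I_1$.
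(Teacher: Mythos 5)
Your proof is correct and follows essentially the same route as the paper's: both directions rest on the observation that a disconnection of the generating complex forces the input windows of cells in the two blocks to be disjoint, so the two halves of the output can be chosen independently. The only (cosmetic) difference is in the disconnected-implies-reducible direction, where you partition $J$ via dual windows and factor $f$ as a product of two functions, whereas the paper glues two inputs directly along the disjoint unions $\bigcup_{i\in I_0}\W_f(i)$ and $\bigcup_{i\in I_1}\W_f(i)$; both arguments are sound.
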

\begin{proof}
Assume that a disconnected complex~$K$ generates~$L$. Let~$I=I_0\sqcup I_1$ be a non-trivial partition such that every simplex of~$K$ is contained in~$I_0$ or~$I_1$. Let~$f:B^J\to A^I$ be a generation procedure with~$K_f\subseteq K$. Note that if~$i_0\in I_0$ and~$i_1\in I_1$, then~$\W_f(i_0)\cap \W_f(i_1)=\emptyset$. We show that~$L=\pi_{I_0}(L)\times \pi_{I_1}(L)$. Let~$y_0\in\pi_{I_0}(L)$ and~$y_1\in \pi_1(L)$, and let~$x_0,x_1\in B^J$ be such that~$f(x_0)\res{I_0}=y_0$ and~$f(x_1)\res{I_1}=y_1$. Let~$x\in B^j$ coincide with~$x_0$ on~$\bigcup_{i\in I_0}\W_f(i)$ and with~$x_1$ on~$\bigcup_{i\in I_1}\W_f(i)$. It is well-defined as these two sets do not intersect. One has~$f(x)\res{I_0}=y_0$ and~$f(x)\res{I_1}=y_1$, and~$f(x)\in L$. We have shown that~$L=\pi_{I_0}(L)\times \pi_{I_1}(L)$, so~$L$ is not irreducible.

Conversely, assume that~$L$ is not irreducible, i.e.~there exists a non-trivial partition~$I=I_0\sqcup I_1$ such that~$L=\pi_{I_0}(L)\times \pi_{I_1}(L)$. Let~$f_0:B^{J_0}\to A^{I_0}$ and~$f_1:B^{J_1}\to A^{I_1}$ be functions generating~$\pi_{I_0}(L)$ and~$\pi_{I_1}(L)$ respectively (we can indeed assume that they have the same input alphabet~$B$). Let~$f:B^{J_0}\times B^{J_1}\to A^I$ be defined by~$f(x_0,x_1)_i=f_0(x_0)_i$ if~$i\in I_0$, and~$f(x_0,x_1)_i=f_1(x_1)_i$ if~$i\in I_1$. We can see the input space of~$f$ as~$B^{J_0\sqcup J_1}$. The communication complex of~$f$ is the disjoint union of the communication complexes of~$f_0$ and~$f_1$, and is disconnected.
\end{proof}

The proof shows that a partition of~$I$ witnesses the reducibility of~$L$ if and only if it disconnects some complex generating~$L$. 

\begin{corollary}
Let~$I=I_0\sqcup\ldots\sqcup I_k$ and for each~$i\in [0,k]$, let~$L_i\subseteq A^{I_i}$ be irreducible, and~$L=L_0\times\ldots\times L_k$. A simplicial complex~$K$ is minimal generating~$L$ if and only if~$K=K_0\sqcup\ldots\sqcup K_k$ where for each~$j\leq k$,~$K_j$ is a complex over~$I_i$ which is minimal generating~$L_j$.
\end{corollary}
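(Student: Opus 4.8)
The plan is to bootstrap from Proposition \ref{prop_irreducible} and its refinement (the remark immediately preceding the corollary), and then induct on $k$. The statement has two directions: (i) any decomposition $K = K_0 \sqcup \ldots \sqcup K_k$ with each $K_j$ minimal generating $L_j$ is minimal generating $L$; and (ii) every minimal $K$ generating $L$ arises this way.

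For direction (i), I would first note that a disjoint union $K_0\sqcup\ldots\sqcup K_k$ of complexes generating the $L_j$'s generates $L = L_0\times\ldots\times L_k$: this is exactly the construction in the second half of the proof of Proposition \ref{prop_irreducible}, iterated, where one takes generation procedures $f_j$ for each $L_j$ on disjoint input-cell sets $J_j$ and glues them coordinatewise. So $K = K_0\sqcup\ldots\sqcup K_k$ does generate $L$. To see it is minimal, suppose $K' \subseteq K$ also generates $L$. Since each $L_j$ is irreducible, Proposition \ref{prop_irreducible} forces $K'$ to be connected on each vertex block $I_j$ — more precisely, using the sharpened statement that a partition of $I$ disconnects some generating complex iff it witnesses reducibility, and the fact that the partition $I = I_0\sqcup\ldots\sqcup I_k$ does disconnect $K\supseteq K'$: restricting, $\restr{K'}{I_j}$ generates $\pi_{I_j}(L) = L_j$ by the Projection corollary, so $\restr{K'}{I_j}\supseteq$ nothing forced yet, but $K'\supseteq \restr{K'}{I_j} $ and... here I need to argue $K'$ has no simplex crossing two blocks. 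A simplex $S\in K'\subseteq K$ is already contained in some $I_j$ since $K$ is the disjoint union; hence $K' = \bigsqcup_j \restr{K'}{I_j}$, and each $\restr{K'}{I_j}$ generates $L_j$ and is contained in $K_j$, so by minimality of $K_j$ we get $\restr{K'}{I_j} = K_j$, whence $K' = K$.

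For direction (ii), let $K$ be minimal generating $L$. Since $L$ is reducible (assuming $k\geq 1$; the case $k=0$ is trivial) via the partition $I = I_0 \sqcup (I_1\sqcup\ldots\sqcup I_k)$, and the proof of Proposition \ref{prop_irreducible} shows that the disconnected complex $K_0^* \sqcup K^*$ (with $K_0^*$ minimal generating $L_0$ and $K^*$ minimal generating $L_1\times\ldots\times L_k$, obtained inductively) generates $L$, minimality of $K$ together with the fact that any subcomplex of $K$ generating $L$ equals $K$ does not immediately give the splitting — instead I would argue directly: by the sharpened Proposition \ref{prop_irreducible}, since $L_0$ is irreducible, the partition $I = I_0 \sqcup (I\setminus I_0)$ either disconnects $K$ or $K$ restricted to... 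Actually the cleanest route: take the canonical generation procedure $f$ for $K$ (Proposition \ref{prop_canonical}) and observe that, since $L = \pi_{I_0}(L)\times\pi_{I\setminus I_0}(L)$, the gluing construction produces a generating complex that is disconnected along $I_0 \mid I\setminus I_0$ and contained in $K$ after suitable identification; by minimality $K$ itself must already be disconnected along this partition. Then $K = \restr{K}{I_0} \sqcup \restr{K}{I\setminus I_0}$, each piece generates the corresponding factor (Projection corollary), each must be minimal for its factor (if a strict subcomplex of $\restr{K}{I_0}$ generated $L_0$, we could replace it, contradicting minimality of $K$), and we finish by induction on the second factor.

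The main obstacle is making precise the claim ``$K$ must already be disconnected along the partition witnessing reducibility.'' The forward direction of Proposition \ref{prop_irreducible} says reducibility implies \emph{some} generating complex is disconnected, not that \emph{every} one is; so I must leverage minimality to upgrade this. The key insight is: if $K$ is minimal and $L = \pi_{I_0}(L)\times\pi_{I_1}(L)$, then $\restr{K}{I_0}$ generates $\pi_{I_0}(L)$ and $\restr{K}{I_1}$ generates $\pi_{I_1}(L)$ (Projection), so their disjoint union $\restr{K}{I_0}\sqcup\restr{K}{I_1}$ generates $L$ (by the gluing construction) and is a subcomplex of $K$; minimality forces $K = \restr{K}{I_0}\sqcup\restr{K}{I_1}$, i.e. $K$ is disconnected exactly along the reducibility partition, with no crossing simplices. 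Once this is established, iterating the argument on $\pi_{I_1\sqcup\ldots\sqcup I_k}(L) = L_1\times\ldots\times L_k$ and checking that the minimality of each restricted piece is inherited completes the induction; these inheritance checks are routine.
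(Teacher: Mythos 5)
Your proposal is correct and follows essentially the same route as the paper: the key step in both is that for a minimal $K$ and a reducibility partition $I=I_0\sqcup I_1$, the restrictions $\restr{K}{I_0}$ and $\restr{K}{I_1}$ generate the projections, so their disjoint union is a generating subcomplex of $K$, forcing equality by minimality; the converse direction (restrictions of any generating subcomplex of a disjoint union must equal the minimal pieces) is also identical. The only difference is presentational: your writeup wanders through some discarded attempts before landing on this argument, whereas the paper states it directly.
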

\begin{proof}
In the second part of the proof of Proposition \ref{prop_irreducible}, one could start with any generation procedure~$g$ for~$L$ and use~$f_0=\pi_{I_0}\circ g$ and~$f_1=\pi_{I_1}\circ g$ to generate the two projections of~$L$. The complexes~$K_{f_0}$ and~$K_{f_1}$ are then obtained from~$K_g$ by keeping the simplices of~$K_g$ that are contained in~$I_0$ and~$I_1$ respectively. The final complex~$K_f$, which is the disjoint union of~$K_{f_0}$ and~$K_{f_1}$, is therefore contained in~$K_g$. It shows in particular that if~$K_g$ is minimal generating~$L$, then~$K_f=K_g$ which is therefore already disconnected via the partition~$I=I_0\sqcup I_1$.

Iterating this argument, a minimal complex generating~$L$ must be disconnected along the partition~$I=I_0\sqcup\ldots\sqcup I_k$.

Conversely, if each~$K_j$ is minimal generating~$L_j$, then~$K$ generates~$L$. If~$K'\subseteq K$ generates~$L$, then its restriction to~$I_j$ generates~$L_j$ so it equals~$K_j$. Therefore,~$K'=K$, showing that~$K$ is minimal generating~$L$.
\end{proof}

Every language~$L\subseteq A^I$ can be uniquely factored as a product of irreducible components. In order to understand how to generate~$L$, it is sufficient to understand how to generate its irreducible components. 

%

\subsection{Symmetries}
In this section, we make precise the intuitive idea that some of the symmetries of a language induce symmetries on the family of its generating complexes.

Let~$G$ be a group acting on~$I$. This action induces an action of~$G$ on~$A^I$:
\[
(g\cdot x)_i=x_{g^{-1}\cdot i}\text{ for~$g\in G$ and~$x\in A^I$.}
\]

We say that a language~$L\subseteq A^I$ is \textbf{$G$-invariant} if for all~$w\in L$ and~$g\in G$, one has~$g\cdot w\in L$.

\begin{proposition}[Symmetries]\label{prop_symmetries}
Let~$G$ be a group acting on~$I$ and~$L\subseteq A^I$ be a~$G$-invariant language. If a complex~$K$ generates~$L$, then~$g\cdot K$ generates~$L$, where
\[
g\cdot K=\{g\cdot S:S\in K\}=\{\{g\cdot i:i\in S\}:S\in K\}.
\]
\end{proposition}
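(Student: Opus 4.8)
The plan is to take a generation procedure~$f:B^J\to A^I$ witnessing that~$K$ generates~$L$ (so~$\im{f}=L$ and~$K_f\subseteq K$) and twist it by~$g$ to produce a new procedure~$f^g$ whose communication complex is contained in~$g\cdot K$ and which still generates~$L$. The natural candidate is to define~$f^g:B^J\to A^I$ by
\[
f^g(x)_i=f(x)_{g^{-1}\cdot i},\qquad\text{i.e.}\qquad f^g(x)=g\cdot f(x),
\]
using the induced action of~$G$ on~$A^I$ from the paragraph preceding the statement. I would first check that~$\im{f^g}=L$: since~$\im{f}=L$ and~$L$ is~$G$-invariant, $f^g(x)=g\cdot f(x)\in L$, and conversely any~$w\in L$ can be written as~$g\cdot w'$ with~$w'=g^{-1}\cdot w\in L$ by~$G$-invariance, and picking~$x$ with~$f(x)=w'$ gives~$f^g(x)=g\cdot w'=w$. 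So~$f^g$ generates~$L$.

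The second and main step is to relate~$K_{f^g}$ to~$K_f$. The key observation is that the input window of cell~$i$ for~$f^g$ equals the input window of cell~$g^{-1}\cdot i$ for~$f$: since~$f^g_i=f_{g^{-1}\cdot i}$ as functions of~$x$, the smallest~$W\subseteq J$ determining~$f^g_i$ is exactly the one determining~$f_{g^{-1}\cdot i}$, so~$\W_{f^g}(i)=\W_f(g^{-1}\cdot i)$. From this, for~$S\subseteq I$,
\[
\up[f^g]{S}=\bigcap_{i\in S}\W_{f^g}(i)=\bigcap_{i\in S}\W_f(g^{-1}\cdot i)=\bigcap_{i'\in g^{-1}\cdot S}\W_f(i')=\up[f]{g^{-1}\cdot S},
\]
so~$S\in K_{f^g}\iff g^{-1}\cdot S\in K_f$. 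Hence~$K_{f^g}=g\cdot K_f$. Combining with~$K_f\subseteq K$ gives~$K_{f^g}=g\cdot K_f\subseteq g\cdot K$ (the action clearly preserves inclusion of complexes), so~$g\cdot K$ generates~$L$ via~$f^g$.

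I expect the only mild subtlety — hardly an obstacle — is bookkeeping the direction of the group action: one must be careful that~$(g\cdot x)_i=x_{g^{-1}\cdot i}$ is the convention fixed in the paper, and that consequently twisting by~$g$ on the output side sends windows indexed by~$i$ to windows indexed by~$g^{-1}\cdot i$, which is what makes~$K_{f^g}=g\cdot K_f$ come out with~$g$ rather than~$g^{-1}$. Everything else is routine: the well-definedness of~$\W_{f^g}$ is immediate from Proposition~\ref{prop_well_defined}, and the fact that~$g\cdot K$ is again a simplicial complex (downward closed) follows since~$g\cdot(-)$ is a bijection on subsets of~$I$ respecting inclusion. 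A one-line alternative, if one prefers, is to apply Proposition~\ref{prop_image} to the coordinate-permutation bijection~$\sigma:A^I\to A^I$, $\sigma(x)=g\cdot x$, for which~$\W^\sigma(j)=\{g\cdot j\}$ and hence~$\sigma_*(K)=g\cdot K$; since~$\sigma(L)=L$ by~$G$-invariance, Proposition~\ref{prop_image} immediately yields that~$g\cdot K$ generates~$L$. I would likely present this second, shorter argument as the proof and relegate the explicit twist to a remark.
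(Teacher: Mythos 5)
Your proposal is correct, and the short alternative you say you would actually present — applying Proposition~\ref{prop_image} to the coordinate permutation $x\mapsto g\cdot x$, whose dual windows are singletons $\{g\cdot i\}$ so that $\sigma_*(K)=g\cdot K$ — is exactly the paper's proof. The longer explicit twist argument is also correct but unnecessary given that route.
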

\begin{proof}
It is a direct consequence of Proposition \ref{prop_image}. Indeed, for each~$g\in G$, the function~$f:A^I\to A^I$ mapping~$x$ to~$g\cdot x$ has the property that~$\W^f(i)=\{g\cdot i\}$. If~$K$ generates~$L$, then~$f_*(K)=g\cdot K$ generates~$f(L)=L$.
\end{proof}

If~$L$ is~$G$-invariant, then it is not necessarily the case that every complex generating~$L$, or every minimal complex generating~$L$, is~$G$-invariant. For instance, if~$G$ is the whole group of permutations of~$I$, then all the elements of~$I$ play the same role w.r.t.~$L$, but it may be necessary to break this symmetry to design a generation procedure with little communication. We give such an example.

\begin{example}[Symmetry breaking]
Let~$\card_{\leq 1}\subseteq \{0,1\}^I$ be the language of sequences having at most one occurrence of~$1$. It is symmetric in the maximal possible way, because it is invariant under the action of the whole group of permutations of~$I$. However, we will see in Corollary \ref{cor_at_most} that the minimal complexes generating~$\card_{\leq 1}$ are the trees spanning~$I$. As predicted by Proposition \ref{prop_symmetries}, the class of trees is indeed invariant by permutations, but each individual tree is asymmetric because it both has leaves and internal nodes (unless~$|I|\leq 2$).
\end{example}

%
%
%

\subsection{Realizing a complex}
Certain languages admit many different generation procedures relying on different ways of communicating. For other languages, there is only one way, i.e.~one minimal complex generating these languages.


We show that for every complex~$K$, there exists a language~$L$ such that~$K$ is the only minimal complex generating~$L$.
\begin{proposition}
Let~$K$ be a simplicial complex over the finite set~$I$. There exists a finite set~$A$ and a language~$L\subseteq A^I$ such that the complexes generating~$L$ are the complexes containing~$K$.
\end{proposition}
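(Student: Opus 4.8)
The plan is to construct a language $L$ over a suitable alphabet $A$ in which every maximal simplex $S$ of $K$ has its own "local coordination gadget": a symbol or pattern that forces exactly the cells of $S$ (and no more) to read a common input cell. Concretely, I would take $A$ to contain, besides a neutral symbol $\bot$, one ``active'' symbol $a_S$ for each maximal simplex $S$ of $K$, and let $L$ consist of $\bot^I$ together with, for each maximal simplex $S$, all strings $x$ with $x_i = a_S$ for $i\in S$ and $x_i=\bot$ for $i\notin S$. (If $K$ has no simplices, i.e.\ $I$ is the only relevant case to rule out, one adjusts trivially; and one may need to be slightly careful when some maximal simplex is a single vertex, but the same idea works.) The intuition is that to produce the $S$-pattern, precisely the cells in $S$ must agree to "switch on together", so they must share information, whereas cells outside $S$ must stay at $\bot$; and crucially no strict superset of $S$ ever needs to coordinate, since the only strings are these disjoint patterns plus the all-$\bot$ string.

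First I would verify that $K$ generates $L$: apply the canonical generation procedure (Proposition~\ref{prop_canonical}) in the direction ``$K$ generates $L$ implies a procedure exists'', but here it is easier to exhibit the procedure directly. Let $J$ be the set of maximal simplices of $K$ and let $B=\{0,1\}$ or a small alphabet; assign to input cell $S\in J$ a bit (or a symbol) saying whether the $S$-pattern is ``selected''. Each output cell $i$ reads exactly the input cells $S\in J$ with $i\in S$, i.e.\ $\W_f(i)=\{S\in J: i\in S\}$, and outputs $a_S$ if exactly one incident $S$ is selected and is the unique selected one globally — this needs a little care so that consistency is enforced, and the cleanest fix is to let each input cell's value range over $L$ itself and use the canonical procedure, at the cost of a large alphabet. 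Either way $\up[f]{T}\neq\emptyset$ iff $T$ is contained in some maximal simplex of $K$, so $K_f=K$ and $K$ generates $L$.

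The substantive direction is the converse: if a complex $K'$ generates $L$, then $K\subseteq K'$. Let $f:B^J\to A^I$ generate $L$ with $K_f\subseteq K'$; it suffices to show every maximal simplex $S$ of $K$ lies in $K_f$, i.e.\ $\up[f]{S}=\bigcap_{i\in S}\W_f(i)\neq\emptyset$. Suppose not: then $S\notin K_f$, so there is a non-trivial partition $S = S_0\sqcup S_1$ with $\bigcup_{i\in S_0}\W_f(i)$ disjoint from $\bigcup_{i\in S_1}\W_f(i)$ — actually I only need the weaker fact that $\bigcap_{i\in S}\W_f(i)=\emptyset$ implies $S$ can be split into two nonempty parts whose combined windows are disjoint, which follows because $K_f$ is a simplicial complex and $S\notin K_f$ means some pair (or, inductively, some bipartition) of its vertices already fails to have a common window. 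From this I build a contradiction with $\im f = L$: pick $x^0,x^1\in B^J$ with $f(x^0)$ the $S$-pattern and $f(x^1)=\bot^I$; splice them along the two disjoint window-families to get $x$ with $f(x)_i = a_S$ for $i\in S_0$ and $f(x)_i=\bot$ for $i\in S_1\cup(I\setminus S)$. This string has some but not all coordinates in $S$ equal to $a_S$, and all others $\bot$, so it is not in $L$ — contradiction. Hence $S\in K_f\subseteq K'$, and since this holds for every maximal $S$, $K\subseteq K'$.

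The main obstacle I anticipate is the splicing/partition step: when $\bigcap_{i\in S}\W_f(i)=\emptyset$ it does not immediately follow that $S$ splits into exactly two parts with globally disjoint window-unions — one really gets a partition into the connected components of the ``shared-window'' graph on $S$, and I must make sure $f(x)$ is well-defined on the splice and lands outside $L$. The clean way is: let the graph $G$ on vertex set $S$ have an edge $\{i,i'\}$ whenever $\W_f(i)\cap\W_f(i')\neq\emptyset$; since $S\notin K_f$ but all pairs could still be edges, I instead should use that $K_f$ is downward closed, so $S\notin K_f$ forces the existence of a subset $T\subseteq S$ of size at most ... hmm — the truly robust argument is the one in the proof of Proposition~\ref{prop_irreducible}: if $\bigcap_{i\in S}\W_f(i)=\emptyset$ then the sets $\W_f(i)$ for $i\in S$ have empty common intersection, and a short argument (using that $f$ restricted to how it acts on $S$ factors through the product of its windows) shows we can find $x$ agreeing with $x^0$ on $\bigcup_{i\in S_0}\W_f(i)$ and with $x^1$ elsewhere for a suitable nonempty proper $S_0\subsetneq S$; the existence of such $S_0$ with disjoint complementary window-union is exactly what fails of $S$ being a simplex of $K_f$. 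I would spell this out carefully, as it is the only place where the combinatorics is not completely routine; everything else is bookkeeping with the gadget language.
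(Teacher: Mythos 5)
Your construction of $L$ does not work: the complex $K$ itself need not generate your language, so the forward direction fails before the converse is even at issue. Take $K$ to be the $4$-cycle on $I=\{1,2,3,4\}$ with maximal simplices $\{1,2\},\{2,3\},\{3,4\},\{4,1\}$. Your $L$ consists of $\bot\bot\bot\bot$ and the four patterns $a_{12}a_{12}\bot\bot$, $\bot a_{23}a_{23}\bot$, $\bot\bot a_{34}a_{34}$, $a_{41}\bot\bot a_{41}$. Projecting onto $\{1,3\}$ gives $\{(\bot,\bot),(a_{12},\bot),(a_{41},\bot),(\bot,a_{23}),(\bot,a_{34})\}$, which is not the product $\pi_{\{1\}}(L)\times\pi_{\{3\}}(L)$ (it misses $(a_{12},a_{23})$); yet the restriction of the $4$-cycle to $\{1,3\}$ is the disconnected complex $\{\{1\},\{3\}\}$, so by the Projection corollary together with Proposition \ref{prop_irreducible}, the $4$-cycle cannot generate $L$. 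The root of the problem is that your patterns are mutually exclusive --- at most one maximal simplex can be ``switched on'' --- which creates dependencies between cells that lie in no common simplex of $K$. This is exactly why your attempted local rules keep producing strings outside $L$, and no amount of ``care'' in the rule can fix it; nor can Proposition \ref{prop_canonical} be invoked, since it presupposes that $K$ already generates $L$. Separately, your converse argument also has an unrepaired hole that you yourself flagged: $\bigcap_{i\in S}\W_f(i)=\emptyset$ does not yield a bipartition of $S$ with disjoint window-unions (windows $\{a,b\}$, $\{b,c\}$, $\{c,a\}$ have empty triple intersection while every bipartition has intersecting window-unions), so the splice is not well-defined.

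The paper's construction avoids both problems by making the simplices independent rather than exclusive. It takes $A=\{0,1\}^K$, assigns an independent bit to every simplex of $K$, and lets cell $i$ output the restriction of the bit-assignment to the simplices containing $i$ (zero elsewhere). The forward direction is then immediate: $\W_f(i)$ is exactly the set of simplices containing $i$, so $K_f=K$, and cells sharing no simplex read disjoint inputs. For the converse, for each $S\in K$ one post-composes with the map $x\mapsto x(i)(S)$, whose image of $L$ is the language of binary sequences constant on $S$ and zero outside $S$; Example \ref{ex_constant} then forces any generating complex to contain $S$, with no splicing argument needed. If you want to salvage your gadget idea, you would have to let all the gadgets fire simultaneously and independently --- at which point you essentially recover the paper's construction.
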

\begin{proof}
For~$i\in I$, let~$K(i)$ be the set of simplices of~$K$ containing~$i$. Let~$A=\{0,1\}^K$ be the set of assignments of boolean values to the simplices of~$K$. If~$h\in A$ and~$i\in I$, then let~$h_i\in A$ coincide with~$h$ on~$K(i)$ and have value~$0$ elsewhere ($h_i$ is essentially the restriction of~$h$ to~$K(i)$), and let~$x_h:I\to A$ map~$i$ to~$h_i$. The sought language is~$L=\{x_h:h\in A\}\subseteq A^I$.

Let~$f:\{0,1\}^K\to A^I$ be the function sending~$h$ to~$x_h$. It is surjective by definition of~$L$ (and continuous because~$K$ is finite). We prove that~$K_f=K$ by showing~$\W_f(i)=K(i)$ for each~$i\in I$. Observe that~$f_i(h)=x_h(i)=h_i$. If~$h$ and~$h'$ coincide on~$K(i)$, then~$f_i(h)=h_i=h'_i=f_i(h')$, so~$\W_f(i)\subseteq K(i)$. For the other inclusion, let~$S\in K(i)$, and let~$h,h'$ coincide everywhere except at~$S$. As~$S\in K(i)$,~$h_i\neq h'_i$ so~$f_i(h)\neq f_i(h')$. Therefore,~$S\in \W_f(i)$.

Therefore, if~$S\subseteq I$, then~$\bigcap_{i\in S}\W_f(i)$ is non-empty if and only if there exists a simplex in~$K$ containing~$S$, which is equivalent to~$S\in K$.

Conversely, let us show that every complex generating~$L$ contains~$K$.

Fix some~$S\in K$ and let~$g:A^I\to \{0,1\}^I$ be defined by~$g_i(x)=x(i)(S)$. As~$g_i(x)$ is defined from~$x(i)$ and can take any binary value, one has~$\W_{g}(i)=\{i\}$ so~$\W^{g}(i)=\{i\}$ and for any complex~$K'$,~$g_*(K')=K'$.

The language~$g(L)\subseteq\{0,1\}^I$ is the set of functions that are constant on~$S$ and constantly~$0$ outside~$S$. By Example \ref{ex_constant}, a complex generates~$g(L)$ if and only if it contains~$S$. If a complex~$K'$ generates~$L$, then~$g_*(K')$ generates~$g(L)$ so~$g_*(K')$ contains~$S$. As~$g_*(K')=K'$,~$K'$ contains~$S$. As this is true for any~$S\in K$,~$K'$ contains~$K$.
\end{proof}

\subsection{The join}
A simple strategy to generate a language~$L\subseteq A^I$ is to first generate a projection of~$L$ on a set~$J\subseteq I$, and then let the remaining elements in~$J\setminus I$ extend the result. The complex generating~$L$ using this strategy can be simply expressed in terms of the complex generating its projection, using the join operator.

Let~$K_0,K_1$ be two complexes on disjoint vertex sets~$I_0,I_1$ respectively. Their \textbf{join}~$K_0\star K_1$ is the complex over~$I=I_0\sqcup I_1$ whose simplices are~$S_0\sqcup S_1$, with~$S_0\in K_0$ and~$S_1\in K_1$ (see \cite{Kozlov07}).

\begin{example}
If~$K$ is a complex over~$I$ and~$a\notin I$, then~$\{a\}\star K$ is the complex over~$\{a\}\sqcup I$ whose maximal simplices are obtained by adding~$a$ to the maximal simplices of~$K$ (it is usually called the \emph{cone} of~$K$).
\end{example}

\begin{example}
Let~$K$ be a complex over~$I$ and let~$a,b\notin I$ be distinct. Let~$\{\{a\},\{b\}\}$ be the complex made of two vertices and no edge. The complex~$\{\{a\},\{b\}\}\star K$ is the complex whose maximal simplices are~$\{a\}\sqcup S$ and~$\{b\}\sqcup S$, for every maximal simplex~$S$ of~$K$ (it is usually called the \emph{suspension} of~$K$).
\end{example}

We denote by~$\Delta_I$ the full complex with vertices in~$I$.

\begin{proposition}\label{prop_join}
Let~$L\subseteq A^I$ and~$J\subseteq I$. If a complex~$K$ generates~$\pi_J(L)$, then~$\Delta_{I\setminus J}\star K$ generates~$L$.
\end{proposition}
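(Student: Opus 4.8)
The plan is to construct an explicit generation procedure for $L$. First I would invoke Proposition~\ref{prop_canonical}: since $K$ generates $\pi_J(L)$, there is a function $g:B^{J'}\to A^J$ generating $\pi_J(L)$, where $J'$ is the set of maximal simplices of $K$ and $B=\pi_J(L)$, with $K_g\subseteq K$ and with the ``diagonal'' property $g(y,\dots,y)=y$ for every $y\in\pi_J(L)$. I would also fix a section $e:\pi_J(L)\to L$ of the projection, i.e.\ a map with $\pi_J(e(y))=y$ for all $y$.

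I would then define $f:L^{J'}\to A^I$ as follows. Given $\mathbf{x}=(x^S)_{S\in J'}\in L^{J'}$, set $y=g\bigl((\pi_J x^S)_{S\in J'}\bigr)\in\pi_J(L)$, and let $\tilde x\in L$ be the common value of the $x^S$ if these all coincide, and $e(y)$ otherwise. Put $f_i(\mathbf{x})=y_i$ for $i\in J$ and $f_i(\mathbf{x})=\tilde x_i$ for $i\in I\setminus J$. The delicate point, and the reason for the two-case definition, is that running $g$ on a tuple of pairwise distinct restrictions produces an element of $\pi_J(L)$ unrelated to any of them, so naively pasting an arbitrary extension on $I\setminus J$ would make the image strictly larger than $L$, while always extending via the section $e$ would make the image a proper transversal of $L$; the diagonal inputs are what let $f$ reach every element of $L$, and the section is what keeps the non-diagonal outputs inside $L$.

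To see that $\im{f}=L$, the key observation is that $\pi_J(\tilde x)=y$ holds in both cases --- on the diagonal by $g(y,\dots,y)=y$, off the diagonal by $\pi_J(e(y))=y$ --- so that the output of $f$ on $\mathbf{x}$ is exactly $\tilde x\in L$, giving $\im{f}\subseteq L$; and $f(x,\dots,x)=x$ for every $x\in L$, again by the diagonal property, giving $L\subseteq\im{f}$. For the communication complex I would apply Proposition~\ref{prop_complex_dual_windows}: if $i\in J$ then $f_i$ depends on $x^S$ only when $S\in\W_g(i)$, so $\W^f(S)\cap J\subseteq\W^g(S)$, which is a simplex of $K_g\subseteq K$; together with $\W^f(S)\cap(I\setminus J)\subseteq I\setminus J$ this gives $\W^f(S)\subseteq\W^g(S)\sqcup(I\setminus J)$, a simplex of $\Delta_{I\setminus J}\star K$, whence $K_f\subseteq\Delta_{I\setminus J}\star K$.

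The main obstacle is the one flagged in the second paragraph: finding the definition of $f$ that makes its image \emph{exactly} $L$, rather than a product that is too large or a transversal that is too small; once the two-case rule is in place, the verifications are routine. The degenerate cases $I\setminus J=\emptyset$ (the statement is trivial) and $J'=\emptyset$ (then $\pi_J(L)$ is a singleton, $L$ splits as a product, and one concludes directly) should be handled separately at the start.
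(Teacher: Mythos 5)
Your proof is correct, and it rests on the same guiding idea as the paper's: the cells of $J$ keep the windows they had in a procedure generating $\pi_J(L)$, while the cells of $I\setminus J$ are allowed to read everything, so every simplex of the resulting communication complex meets $J$ in a simplex of $K$ --- which is exactly the inclusion into $\Delta_{I\setminus J}\star K$. Where you genuinely diverge is in how the cells of $I\setminus J$ choose the extension. The paper does not go through Proposition~\ref{prop_canonical}: it starts from an arbitrary $f:C^H\to A^J$ generating $\pi_J(L)$ with $K_f\subseteq K$, adjoins a single fresh input cell $a$ with alphabet $C$ of size at least $|L|$, and defines $g(c,x)$ so that $\{g(c,x):c\in C\}$ is the \emph{entire} set of elements of $L$ extending $f(x)$; surjectivity is then immediate, with no case split and no degenerate cases to treat separately. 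You instead avoid adding an input cell by enlarging the edge alphabet to $L$ via the canonical procedure and controlling the image with the diagonal property together with a section $e$ of $\pi_J$. This works --- your check that $\pi_J(\tilde x)=y$ in both branches is precisely the point that makes $\im{f}$ equal to $L$ rather than too large or too small --- but it reaches each element of $L$ only through the diagonal input, whereas the paper's extra cell parametrizes all extensions at once. Both constructions give the same window containments ($\W_f(i)$ unchanged for $i\in J$, unconstrained for $i\in I\setminus J$) and hence the same conclusion; the paper's is a little shorter, yours has the mild aesthetic advantage of keeping the input cells identified with the maximal simplices of $K$ throughout.
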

\begin{proof}
Intuitively, the output cells in~$J$ keep the same generation procedure to produce an element~$y$ of~$\pi_J(L)$, and the other output cells have access to the entire input and collectively choose the extension of~$y$ to be produced.

More precisely, let~$f:C^H\to A^J$ be a function generating~$\pi_J(L)$, with~$K_f\subseteq K$. We can assume that~$|C|\geq |L|$. We pick a function~$g:C\times C^H\to A^I$ such that for each~$x\in C^H$,~$\{g(c,x):c\in C\}$ is the set of elements of~$L$ extending~$f(x)$. We can reformulate the input space of~$g$ as~$C^{\{a\}\sqcup H}$ with~$a\notin H$.

We show that~$K_g\subseteq \Delta_{I\setminus J}\star K$, which means that for every simplex~$S\in K_g$,~$S\cap J\in K$. Observe that an output cell~$i\in J$ does not need access to the value of~$a$, because~$g(c,x)_i=f(x)_i$ does not depend on~$c$. In other words, one has~$\W_g(i)=\W_f(i)$ for~$i\in J$. If~$S\in K_g$, then~$\bigcap_{i\in S\cap J}\W_f(i)=\bigcap_{i\in S}\W_g(i)\neq \emptyset$, therefore~$S\cap J\in K$, i.e.~$S\in \Delta_{I\setminus J}\star K$.
\end{proof}

We state two particular cases.
\begin{corollary}\label{cor_star_singleton}
Let~$L\subseteq A^I$,~$i\in I$ and~$J=I\setminus \{i\}$. If~$K$ generates~$\pi_{J}(L)$, then~$\{i\}\star \{K\}$ generates~$L$.
\end{corollary}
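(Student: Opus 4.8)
The plan is to obtain this statement as an immediate instance of Proposition \ref{prop_join}. First I would observe that with $J = I\setminus\{i\}$ the complement $I\setminus J$ is the singleton $\{i\}$, so the full complex $\Delta_{I\setminus J}$ is just the complex consisting of the single vertex $i$ (together with the empty simplex), which is what the statement abbreviates as $\{i\}$. Hence $\Delta_{I\setminus J}\star K$ is precisely the cone $\{i\}\star K$ described in the example following the definition of the join.

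Then I would simply invoke Proposition \ref{prop_join} with this choice of $J$: since by hypothesis $K$ generates $\pi_J(L)=\pi_{I\setminus\{i\}}(L)$, the proposition yields that $\Delta_{I\setminus J}\star K = \{i\}\star K$ generates $L$, which is the claim.

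Since this is a pure specialization of an already-proved result, I do not expect any real obstacle; the only point to verify is the bookkeeping that the single-vertex full complex $\Delta_{\{i\}}$ is literally the same object as the single-vertex complex $\{i\}$ appearing in the statement, which is immediate. If instead one wanted a self-contained argument, I would reprove it directly along the lines of the proof of Proposition \ref{prop_join}: start from $f\colon C^H\to A^{J}$ generating $\pi_J(L)$ with $K_f\subseteq K$, enlarge $C$ so that $|C|\geq|L|$, and pick $g\colon C\times C^H\to A^I$ so that for each $x\in C^H$ the fibre $\{g(c,x):c\in C\}$ enumerates the elements of $L$ extending $f(x)$; reading the input space of $g$ as $C^{\{a\}\sqcup H}$, the output cells in $J$ never need the value of the extra coordinate $a$, so $\W_g$ agrees with $\W_f$ on $J$ while cell $i$ may read everything, whence every simplex $S\in K_g$ satisfies $S\cap J\in K_f\subseteq K$; this gives $K_g\subseteq\{i\}\star K$ together with $\im g = L$.
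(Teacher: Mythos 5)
Your proposal is correct and matches the paper's proof exactly: both simply specialize Proposition \ref{prop_join} to $J=I\setminus\{i\}$ and note that $\Delta_{I\setminus J}$ is the single-vertex complex $\{i\}$. The extra self-contained argument you sketch is just the proof of Proposition \ref{prop_join} unfolded, so nothing new is needed.
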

\begin{proof}
By Proposition \ref{prop_join},~$\Delta_{I\setminus J}\star K$ generates~$L$, and~$\Delta_{I\setminus J}$ is~$\{i\}$.
\end{proof}

Given a language~$L\subseteq A^I$, we say that two cells~$i,j\in I$ are \textbf{independent} w.r.t.~$L$ if~$\pi_{\{i,j\}}(L)=\pi_{\{i\}}(L)\times \pi_{\{j\}}(L)$, i.e.~if the contents of~$i$ and~$j$ can be chosen independently of each other. In other words, it means that~$\pi_{\{i,j\}}(L)$ is not irreducible. Intuitively, if~$i,j$ are independent, then they do not need to communicate directly in a generation procedure and indeed, it is reflected by the existence of a complex generating~$L$ that does not contain the edge~$\{i,j\}$.

\begin{corollary}\label{cor_two_indep}
Let~$i,j\in I$ be distinct. The two cells~$i,j$ are independent w.r.t.~$L$ if and only if~$L$ is generated by~$\{I\setminus \{i\},I\setminus\{j\}\}$, which is the complex made of all the simplices that do not contain~$\{i,j\}$.
\end{corollary}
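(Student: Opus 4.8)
The statement is an equivalence, so I would prove the two directions separately. The backward direction is immediate from earlier results. Assume $L$ is generated by the complex $K=\{I\setminus\{i\},\, I\setminus\{j\}\}$, which is precisely the complex whose simplices are the subsets of $I$ not containing both $i$ and $j$. Restricting to the two coordinates $\{i,j\}$, Corollary~\ref{cor_star_singleton}'s neighbor (the Projection corollary) gives that $\restr{K}{\{i,j\}}$ generates $\pi_{\{i,j\}}(L)$. But $\restr{K}{\{i,j\}}$ consists of all subsets of $\{i,j\}$ except $\{i,j\}$ itself, i.e.\ it is the discrete complex $\{\{i\},\{j\}\}$ with no edge, hence disconnected. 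By Proposition~\ref{prop_irreducible}, a language admitting a disconnected generating complex is reducible, so $\pi_{\{i,j\}}(L)$ is not irreducible; by definition this means $\pi_{\{i,j\}}(L)=\pi_{\{i\}}(L)\times\pi_{\{j\}}(L)$, i.e.\ $i$ and $j$ are independent w.r.t.\ $L$.

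For the forward direction, assume $i,j$ are independent w.r.t.\ $L$. The plan is to build a generation procedure whose communication complex avoids the edge $\{i,j\}$, by splitting $L$ into two overlapping ``halves'' that each hide one of the two cells. Concretely, set $J_i=I\setminus\{i\}$ and $J_j=I\setminus\{j\}$ and consider the projections $L_i=\pi_{J_i}(L)$ and $L_j=\pi_{J_j}(L)$. The key point, using independence, is that an element of $L$ is determined by a compatible pair $(y,z)\in L_j\times L_i$ agreeing on $J_i\cap J_j=I\setminus\{i,j\}$: given such a pair, the value at $i$ comes from $y$, the value at $j$ comes from $z$, and independence of $i$ and $j$ guarantees the resulting sequence is actually in $L$. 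I would take generation procedures $f_i$ for $L_i$ and $f_j$ for $L_j$ (with a common input alphabet), place their inputs on disjoint copies $H_i\sqcup H_j$, and define $g$ on $B^{H_i}\times B^{H_j}$ so that cell $j$ reads only the $H_i$-block (through $f_i$), cell $i$ reads only the $H_j$-block (through $f_j$), and every other cell $k\in I\setminus\{i,j\}$ reads both blocks (e.g.\ takes the value $f_i$ assigns to it, which matches $f_j$'s value on the valid inputs). Then $\W_g(i)\subseteq H_j$ and $\W_g(j)\subseteq H_i$ are disjoint, so no simplex of $K_g$ contains $\{i,j\}$, hence $K_g\subseteq\{I\setminus\{i\},I\setminus\{j\}\}$.

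The step requiring the most care is ensuring that the combined function $g$ is genuinely \emph{surjective} onto $L$, not just that its image lies in $L$. One must check: (a) every $w\in L$ is produced — apply $f_j$ to recover $\pi_{J_j}(w)$ and $f_i$ to recover $\pi_{J_i}(w)$, feed the corresponding inputs into the two blocks, and verify the output is $w$; and (b) every output lies in $L$ — here is where independence is essential, since for an arbitrary input the two blocks produce some $y\in L_j$ and $z\in L_i$ that need not agree off $\{i,j\}$, so one cannot naively glue them. The clean fix is to route the shared coordinates $I\setminus\{i,j\}$ through only one of the two procedures (say $f_i$), so the output's restriction to $J_i$ is exactly $f_i(\text{block }H_j)\in L_i$, its value at $i$ is whatever $f_j$ outputs there, and then independence of $\{i\}$ from $J_i$ (a consequence of the hypothesis, since $\pi_{\{i,j\}}(L)$ being a product forces $\pi_{\{i\}\cup(I\setminus\{i,j\})}(L)\times\pi_{\{j\}}(L)=L$ — this needs a short argument) yields membership in $L$. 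Alternatively, and more slickly, one can invoke Proposition~\ref{prop_join} twice: $\{i\}\star\bigl(\restr{K'}{J_i}\bigr)$ and a symmetric construction, intersecting the two resulting complexes; but I expect the direct construction above to be the most transparent, with the surjectivity verification being the only nontrivial bookkeeping.
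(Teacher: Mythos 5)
Your backward direction is exactly the paper's: project to $\{i,j\}$, observe the restricted complex $\{\{i\},\{j\}\}$ is disconnected, and conclude via Proposition~\ref{prop_irreducible}. That part is fine.

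The forward direction has a genuine gap. Your construction makes the cells of $I\setminus\{i,j\}$ take their values from only one of the two sub-procedures (say the one generating $\pi_{I\setminus\{i\}}(L)$), and you justify membership of the glued output in $L$ by claiming that independence of $i$ and $j$ forces $L=\pi_{I\setminus\{j\}}(L)\times\pi_{\{j\}}(L)$. That claim is false: take $L=\Ev_I$ with $|I|=3$. Every pair of cells is independent, since $\pi_{\{i,j\}}(\Ev_I)=\{0,1\}^{\{i,j\}}$, yet $\pi_{I\setminus\{j\}}(\Ev_I)\times\pi_{\{j\}}(\Ev_I)=\{0,1\}^I\neq\Ev_I$. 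In this example your procedure outputs an arbitrary bit at $i$ (from one block) and arbitrary bits at the other two cells (from the other block), so it produces odd strings. The cell outside $\{i,j\}$ genuinely must read \emph{both} blocks and compute the parity of the bits chosen by $i$ and $j$; no routing of the shared coordinates through a single sub-procedure can work. Your fallback suggestion --- applying Proposition~\ref{prop_join} twice and ``intersecting the two resulting complexes'' --- is also not sound, since an intersection of two generating complexes need not generate.

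The correct argument is a single application of Proposition~\ref{prop_join} with $J=\{i,j\}$: by independence, $\pi_{\{i,j\}}(L)$ is a product, so the disconnected complex $\{\{i\},\{j\}\}$ generates it (second half of the proof of Proposition~\ref{prop_irreducible}); hence $\Delta_{I\setminus\{i,j\}}\star\{\{i\},\{j\}\}$ generates $L$, and its maximal simplices are exactly $I\setminus\{j\}$ and $I\setminus\{i\}$. The point your construction misses is built into the proof of Proposition~\ref{prop_join}: the cells of $I\setminus\{i,j\}$ are given access to the \emph{entire} input, including the disjoint pieces read by $i$ and by $j$, precisely so that they can coordinate a valid extension of whatever $i$ and $j$ independently chose. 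This costs nothing in the communication complex, since every simplex of the resulting complex still omits either $i$ or $j$.
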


\begin{proof}
Let~$J=\{i,j\}$. If~$i,j$ are independent, then the complex~$\{\{i\},\{j\}\}$ generates~$\pi_J(L)$, so~$\Delta_{I\setminus \{i,j\}}\star \{\{i\},\{j\}\}$ generates~$L$ by Proposition \ref{prop_join}. The maximal simplices of that complex are~$I\setminus \{i\}$ and~$I\setminus\{j\}$.

Conversely, if that complex generates~$L$ then its restriction to~$\{i,j\}$ generates~$\pi_{\{i,j\}}(L)$. The restriction of the complex is~$\{\{i\},\{j\}\}$, which is disconnected, so~$\pi_{\{i,j\}}(L)=\pi_{\{i\}}(L)\times \pi_{\{j\}}(L)$ (Proposition \ref{prop_irreducible}).
\end{proof}

However, in that case, it may happen that some minimal complex generating~$L$ contains the edge~$\{i,j\}$ (see for instance the language of even sequences in Section \ref{sec_even} below).

\section{Simple languages}\label{sec_simple_examples}
We now examine many particular languages, which are all binary, i.e.~subsets of~$\{0,1\}^n$ for some~$n\in\N$.

\subsection{Even number of \texorpdfstring{$1$}{1}'s}\label{sec_even}
Let~$\Ev_I\subseteq\{0,1\}^I$ be the set of strings containing an even number of~$1$'s (we call them even strings). Note that~$\Ev_I$ is invariant under the action of the whole group of permutations of~$I$, so the family of complexes generating~$\Ev_I$ is closed under permutations of the vertices.
 
\begin{proposition}
A complex generates~$\Ev_I$ iff it is connected. Therefore, the minimal complexes generating~$\Ev_I$ are the trees spanning~$I$.
\end{proposition}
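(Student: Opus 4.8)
The plan is to establish the two equivalences ``generates $\Ev_I$ $\iff$ connected'' and ``minimal generating $\iff$ spanning tree'' in that order, using Proposition~\ref{prop_irreducible} for one direction and an explicit XOR-procedure for the other. First I would check that $\Ev_I$ is irreducible: for any nontrivial partition $I=I_0\sqcup I_1$, since $I_1\neq\emptyset$ one can always adjust a single coordinate in $I_1$ to correct the global parity, so $\pi_{I_0}(\Ev_I)=\{0,1\}^{I_0}$ and symmetrically $\pi_{I_1}(\Ev_I)=\{0,1\}^{I_1}$; their product is the full cube $\{0,1\}^I$, which strictly contains $\Ev_I$ (for $|I|\geq 1$), so no partition witnesses reducibility. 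By Proposition~\ref{prop_irreducible}, every complex generating $\Ev_I$ is therefore connected. In particular any generating complex contains every vertex $\{i\}$, since a disconnected ``missing vertex'' would otherwise arise.

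The heart of the argument is to show that every tree $T$ spanning $I$ generates $\Ev_I$. I would take input alphabet $\{0,1\}$ and input cells indexed by the edges of $T$, and define $f:\{0,1\}^{E(T)}\to\{0,1\}^I$ by $f_i(x)=\bigoplus_{e\ni i}x_e$. Flipping $x_e$ flips $f_i(x)$ exactly when $i\in e$, so $\W_f(i)$ is precisely the star of $i$ in $T$, whence $K_f=T$; moreover $\sum_i f_i(x)\equiv\sum_e 2x_e\equiv 0\pmod 2$, so $\im{f}\subseteq\Ev_I$. Now $f$ is $\mathbb{F}_2$-linear, and its kernel is trivial: a tree with an edge has a leaf, whose unique incident edge is forced to carry $0$, and deleting that leaf yields a smaller instance of the same homogeneous system, so induction on $|E(T)|$ gives $x=0$. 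Since $\dim_{\mathbb{F}_2}\{0,1\}^{E(T)}=|I|-1=\dim_{\mathbb{F}_2}\Ev_I$, injectivity forces $\im{f}=\Ev_I$, so $T$ generates $\Ev_I$ with $K_f=T$.

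Next I would note that a connected complex $K$ has a connected $1$-skeleton (any simplex of $K$ lies in one graph-component, since each pair of its vertices is joined by an edge of $K$), hence contains a tree spanning $I$; as that tree generates $\Ev_I$ and is a subcomplex of $K$, so does $K$. This gives the full equivalence ``$K$ generates $\Ev_I$ $\iff$ $K$ connected''. Finally, for the description of the minimal ones: a spanning tree $T$ is connected, hence generates $\Ev_I$, and every proper subcomplex of $T$ is disconnected (either it omits a vertex, or it omits an edge $e=\{i,j\}$ and is then contained in $T-e$, which is split by the two components of $T-e$), so no proper subcomplex generates $\Ev_I$ and $T$ is minimal. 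Conversely, a minimal generating complex is connected, hence contains some spanning tree $T$ which already generates $\Ev_I$, so by minimality it equals $T$. Thus the minimal complexes generating $\Ev_I$ are exactly the trees spanning $I$.

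The only genuinely non-routine point I anticipate is the surjectivity of $f$ onto $\Ev_I$, i.e.\ the triviality of the $\mathbb{F}_2$-incidence map of a tree combined with the dimension count; everything else is bookkeeping with the definitions from Section~\ref{sec_framework}. Degenerate cases $|I|\leq 1$ should be dispatched by hand: for $|I|=1$, $\Ev_I$ is a singleton and the spanning ``tree'' is the single vertex with empty edge set, for which $f$ is constant and $K_f$ is the empty complex, consistent with the constant-function example.
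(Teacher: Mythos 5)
Your proof is correct and follows essentially the same route as the paper: irreducibility of $\Ev_I$ plus Proposition~\ref{prop_irreducible} for necessity of connectedness, and the XOR-of-incident-edges rule on a spanning tree for sufficiency. The only cosmetic difference is that you obtain surjectivity onto $\Ev_I$ from injectivity of the $\mathbb{F}_2$-incidence map plus a dimension count, whereas the paper constructs preimages directly by peeling leaves — the same linear-algebra fact, which the paper itself notes.
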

\begin{proof}
Note that~$\Ev_I$ is irreducible: for any~$J\subsetneq I$,~$\pi_J(\Ev_I)=\{0,1\}^J$ so for any non-trivial partition~$I=I_0\sqcup I_1$,~$\pi_{I_0}(\Ev_I)\times \pi_{I_1}(\Ev_I)=\{0,1\}^I\neq \Ev_I$. Therefore, every complex generating~$\Ev_I$ must be connected by Proposition \ref{prop_irreducible}.

Conversely, every connected complex contains a spanning tree, so it is sufficient to show that every spanning tree generates~$\Ev_I$. Let~$T$ be such a tree and consider the following local rule upon a binary assignment to its edges: each vertex is assigned the sum modulo~$2$ of the weights of its incident edges. This procedure indeed generates even strings, because the sum of the weights of the vertices is twice the sum of the weights of the edges. It produces every even string~$x$: choose an edge incident to a leave, give it the only possible weight, remove that edge and iterate this process; this way we can freely choose the output values of all the vertices except the last one, but its value is the only one that makes the string even (we are essentially applying the Gaussian elimination algorithm to show that~$\Ev_I$ is the image of the linear map between vector spaces over the field~$\Z/2\Z$, represented by the incident matrix of the tree).
\end{proof}

The case of an odd number of~$1$'s is similar, because it is obtained by flipping one bit.  More precisely, let~$\Od_I$ be the set of strings having an \emph{odd} number of~$1$'s. Fix any~$i_0\in I$ and let~$f:\{0,1\}^I\to\{0,1\}^I$ be defined by~$f(x)_{i_0}=1-x_{i_0}$ and~$f(x)_i=x_i$ for~$i\neq i_0$. One has~$f(\Ev_I)=\Od_I$ and~$f(\Od_I)=\Ev_I$, and for any complex~$K$,~$f_*(K)=K$ as~$\W^f(i)=\{i\}$ for all~$i\in I$. Therefore, the same complexes generate~$\Ev_I$ and~$\Od_I$.

\subsection{Non-decreasing sequences}\label{sec_non_dec}
Let~$\ND_n\subseteq \{0,1\}^n$ be the set of non-decreasing binary sequences over~$I_n$, i.e.~$\ND_n=\{0^k1^{n-1-k}:0\leq k<n\}$.
\begin{proposition}\label{prop_non_decr}
The only complex realizing~$\ND_n$ is~$\Delta_{I_n}$, the full complex over~$I_n$.
\end{proposition}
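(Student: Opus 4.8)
The claim is that the only complex generating $\ND_n$ is the full complex $\Delta_{I_n}$. Equivalently, any $f:B^J\to A^{I_n}$ with $\im f=\ND_n$ must satisfy $\bigcap_{i\in I_n}\W_f(i)\neq\emptyset$. The strategy is to show that every pair of cells is so tightly correlated that more is true: not only must every edge $\{i,j\}$ be present, but in fact some single input cell must be visible to all output cells simultaneously. The natural approach is to fix a generation procedure $f$ and exhibit an input cell $j^\star$ lying in $\W_f(i)$ for every $i$.

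First I would record the key structural feature of $\ND_n$: it is a \emph{chain} of length $n$ under the coordinatewise order, and for any two distinct cells $i<i'$, the projection $\pi_{\{i,i'\}}(\ND_n)=\{00,01,11\}$ is irreducible, so by Proposition~\ref{prop_irreducible} (via Corollary~\ref{cor_two_indep}) every generating complex contains the edge $\{i,i'\}$; hence the $1$-skeleton is complete. That alone does not give $\Delta_{I_n}$, so the real work is to upgrade ``pairwise intersecting windows'' to ``globally intersecting windows.'' For this I would use the canonical generation procedure from Proposition~\ref{prop_canonical}: assume $f:L^J\to A^{I_n}$ with $J$ the set of maximal simplices of a minimal generating complex $K$, $K_f\subseteq K$, and $f(x,\dots,x)=x$ for all $x\in\ND_n$. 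Suppose for contradiction $K\neq\Delta_{I_n}$, so no single $j\in J$ has $\W^f(j)=I_n$, i.e. every maximal simplex $S_j$ omits some cell.

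The heart of the argument is a ``patching'' contradiction exploiting that $\ND_n$ is totally ordered. Take the all-zeros input $\mathbf 0=(0^n,\dots,0^n)$ and the all-ones input $\mathbf 1=(1^n,\dots,1^n)$, producing $0^n$ and $1^n$ respectively. Now walk between them one input coordinate at a time: along the path $\mathbf 0=y_0,y_1,\dots,y_{|J|}=\mathbf 1$ where $y_{t}$ agrees with $\mathbf 1$ on the first $t$ coordinates of $J$ and with $\mathbf 0$ elsewhere, each consecutive pair differs in a single input cell $j_t$, so $f(y_t)$ and $f(y_{t+1})$ can differ only on $\W^f(j_t)=S_{j_t}$, which by assumption omits some output cell. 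Since $\ND_n$ is a chain and the sequence $f(y_0),\dots,f(y_{|J|})$ goes from $0^n$ to $1^n$, somewhere a single-input-cell change must flip at least... this is where I need to be careful: I want to force a step that changes a coordinate $i$ while leaving some coordinate $i'$ fixed in a way that the resulting string is not in $\ND_n$. Concretely, I expect to find two output cells $i<i'$ and an input change $j_t$ with $i\in S_{j_t}$, $i'\notin S_{j_t}$, across which $f(y_t)_i$ goes from $0$ to $1$ while $f(y_t)_{i'}$ stays $0$ — violating monotonicity — or the mirror situation. Making this forcing rigorous will require tracking, as $t$ increases, the ``threshold'' $k$ with $f(y_t)=0^k1^{n-k}$ and observing it can only move by steps supported inside sets that omit a cell; once two cells on opposite sides of the threshold get separated by such a set, the chain property breaks.

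\textbf{Main obstacle.} The delicate point is the combinatorial bookkeeping in the last paragraph: ensuring that along \emph{some} path from $\mathbf 0$ to $\mathbf 1$ one is forced through a bad single-cell change. It may be cleaner to argue indirectly via connectivity of a ``states graph'': fix any output cell $i_0$ and consider the two nonempty sets $U_0=\{x\in L^J : f(x)_{i_0}=0\}$ and $U_1=\{x : f(x)_{i_0}=1\}$; a single-coordinate input change between $U_0$ and $U_1$ must occur across some $j$ with $i_0\in S_j$, and one then examines what happens to the other coordinates. If $S_j\neq I_n$ there is $i_1\notin S_j$ whose value is unchanged by that flip; choosing the path so that at the moment of flipping $i_0$ the cell $i_1$ is ``on the wrong side'' of the threshold produces a string outside $\ND_n$ in the image, contradiction. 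I expect roughly this to be the author's argument, with the convenient reduction to the canonical procedure doing most of the cleanup; the alternative, slicker route is to invoke Proposition~\ref{prop_join} in reverse or an explicit linear-algebra / extremal-set argument, but the monotone-chain structure makes the direct patching proof the most transparent.
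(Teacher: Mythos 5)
There is a genuine gap: your argument stops exactly where the work begins. The observation that every pair of cells is dependent (so the $1$-skeleton of any generating complex must be complete) is correct but, as you note, insufficient, and the patching argument you sketch to upgrade this to a globally non-empty intersection is never completed. The specific difficulty is that every output of $f$ along your path from $\mathbf{0}$ to $\mathbf{1}$ automatically lies in $\ND_n$, since $\im{f}=\ND_n$ by hypothesis; so a contradiction cannot come from merely observing that the threshold moves inside sets $S_{j_t}$ that each omit some output cell (nothing prevents the omitted cell from lying outside the interval where the threshold moves at that step). To get a contradiction you must \emph{construct} an input whose image violates monotonicity, i.e.\ exhibit $i<i'$, an input $x$ with $f_i(x)=f_{i'}(x)=0$, and a single flip at some $j\notin\W_f(i')$ turning $f_i$ into $1$ --- and you give no argument that such a configuration exists: the inputs at which $f_i$ is sensitive to $j$ might all have $f_{i'}=1$ already. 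A two-cell, single-flip argument cannot suffice in general, because the obstruction is genuinely $n$-ary: the complete graph on $I_3$ satisfies every pairwise condition yet does not generate $\ND_3$.

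The paper closes this gap with a different mechanism, based on partial inputs (Definition \ref{def_partial_input}): writing $f_i(\alpha)=1$ when every total extension of the partial input $\alpha$ gives $f_i=1$, one has (i) $f_i(\alpha)=1\implies f_i(\restr{\alpha}{\W_f(i)})=1$ by locality, and (ii) $f_i(\alpha)=1\implies f_{i+1}(\alpha)=1$ because all outputs are non-decreasing. Starting from a witness of $1^n$ and alternating (i) and (ii) through all $n$ cells shrinks the domain of the partial input down to $\bigcap_i\W_f(i)$ while keeping the value of the current cell forced to $1$; if that intersection were empty, the last step would say that $f_{n-1}$ is constantly $1$, contradicting $0^n\in\ND_n$. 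Your plan is in the right spirit (exploiting the chain structure of $\ND_n$), but it needs exactly this kind of ``forced value on a shrinking domain'' bookkeeping chained through all $n$ cells, which is the step you flag as the main obstacle and leave open.
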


In order to prove the result, let us introduce some simple notions. Fix some function~$f:B^J\to A^I$. 

\begin{definition}\label{def_partial_input}
A \textbf{partial input} is~$\alpha\in B^D$ for some~$D\subseteq J$. If~$E\subseteq J$, then~$\restr{\alpha}{E}\in B^{D\cap E}$ is the restriction of~$\alpha$ to~$D\cap E$. For~$i\in I$, we write~$f_i(\alpha)=1$ if for all~$x\in B^J$ extending~$\alpha$, one has~$f_i(x)=1$.
\end{definition}

\begin{proof}
Let~$f:B^J\to\{0,1\}^n$ generate~$\ND_n$. Let~$A_0=I_n$ and for~$0\leq i<n$, let~$A_{i+1}=A_{i}\cap\W_f(i)$. Our goal is to show that~$A_n=\up[f]{I_n}$ is non-empty, i.e.~that~$K_f$ contains the full simplex~$I_n$.

We make two observations:
\begin{enumerate}[(i)]
\item If~$f_i(\alpha)=1$, then~$f_i(\restr{\alpha}{\W_f(i)})=1$ because~$f_i(x)$ is determined by~$\restr{x}{\W_f(i)}$,
\item If~$f_i(\alpha)=1$, then~$f_{i+1}(\alpha)=1$ as~$f$ only produces non-decreasing sequences.
\end{enumerate}

Therefore, we can iteratively derive the following chain of implications, starting from  some~$x$ such that~$f(x)=1^n$:
\[
\begin{array}{cccc}
& f_0(x)=1&\stackrel{(i)}{\implies}&f_0(x\res{A_1})=1\\
\stackrel{(ii)}{\implies}& f_1(\restr{x}{A_1})=1&\stackrel{(i)}{\implies}& f_1(\restr{x}{A_2})=1\\
\stackrel{(ii)}{\implies} & \ldots&&\\
\stackrel{(ii)}{\implies} & f_{n-1}(\restr{x}{A_{n-1}})=1&\stackrel{(i)}{\implies}& f_{n-1}(\restr{x}{A_{n}})=1.
\end{array}
\]
If~$A_n$ was empty, then the last statement would mean that~$f_{n-1}$ is constantly~$1$, because every input would extend~$\restr{x}{A_n}$ which nowhere defined. But~$f_{n-1}$ also takes value~$0$ as~$0^n\in L_n$. Therefore,~$A_n\neq\emptyset$, which is what we wanted to prove.
%
\end{proof}

More generally, this argument shows that if~$L\subseteq \{0,\ldots,k\}^n$ contains the constant sequences~$0^n$ and~$k^n$ and is contained in the set of non-decreasing sequences, then the only complex generating~$L$ is the full complex over~$I_n$.

\subsection{Non-constant sequences}
Assume that~$|A|\geq 2$ and let~$\NC_I\subseteq A^I$ be the set of non-constant sequences.
\begin{proposition}
A complex generates~$\NC_I$ iff it is connected. In other words, the minimal complexes generating~$\NC_I$ are the trees spanning~$I$.
\end{proposition}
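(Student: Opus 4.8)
The plan is to establish the two implications separately, around the same ``spanning tree'' picture as for $\Ev_I$.

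\emph{Connectedness is necessary, and minimal means spanning tree.} First I would check that $\NC_I$ is irreducible. For any proper nonempty $J\subseteq I$ one has $\pi_J(\NC_I)=A^J$: given $y\in A^J$, extend it by any value on $I\setminus J$ if $y$ is already non‑constant, and by a value differing from its common value otherwise — this is possible since $|A|\geq 2$ and $I\setminus J\neq\emptyset$. Hence for any non‑trivial partition $I=I_0\sqcup I_1$ we get $\pi_{I_0}(\NC_I)\times\pi_{I_1}(\NC_I)=A^I\neq\NC_I$, so $\NC_I$ is irreducible and, by Proposition~\ref{prop_irreducible}, every complex generating $\NC_I$ is connected. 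Moreover $\pi_{\{i\}}(\NC_I)=A$ is not a singleton, so (as in the identity‑function example) any generating complex must contain every $\{i\}$; a connected complex spanning $I$ contains a spanning tree, and a spanning tree is minimal among such (removing a vertex unspans, removing an edge disconnects). Together with the converse below this gives both assertions.

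\emph{Every spanning tree suffices: the construction.} Since a connected complex contains a spanning tree and generation is monotone, it is enough to realise $\NC_I$ with an arbitrary spanning tree $T$. Fix a fixed‑point‑free map $\phi\colon A\to A$ (it exists because $|A|\geq 2$), root $T$ at a vertex $r$, and for a non‑root vertex $v$ let $e_v$ be the edge to its parent and $T_v$ the subtree hanging from $v$. Define $f$ by putting on each $e_v$ a colour $c_v\in A$ together with a bit $b_v\in\{0,1\}$ when $v$ is not a leaf (and also storing $c_r$ on one edge at $r$), and letting: a leaf output its $c_v$; an internal non‑root vertex $v$ output $c_v$, \emph{unless} $b_v=0$ while every child $w$ reports $b_w=1$ (leaves counted as reporting $1$) and $c_w=c_v$, in which case it outputs $\phi(c_v)$; the root output $c_r$, \emph{unless} every child $w$ reports $b_w=1$ and $c_w=c_r$, in which case it outputs $\phi(c_r)$. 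Each edge is read only by its two endpoints, so by Proposition~\ref{prop_complex_dual_windows} one has $K_f\subseteq T$. The inclusion $\NC_I\subseteq\im{f}$ is easy: feeding a target $x\in\NC_I$ as the colours and the truthful bits ``$b_v=1$ iff $T_v$ is monochromatic under $x$'' triggers no deviation (a deviation at $v$ would make $T_v$, and a deviation at $r$ the whole of $T$, monochromatic, contradicting $b_v=0$ resp.\ $x\in\NC_I$), so $f$ outputs exactly $x$.

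\emph{The delicate inclusion $\im{f}\subseteq\NC_I$.} I need that \emph{no} input, honest or not, produces a constant string. Suppose some input outputs $a^I$. One first shows that no vertex deviates: a deviating vertex $v$ must satisfy $\phi(c_v)=a$, hence $c_v\neq a$, and its deviation rule forces every child $w$ to have $c_w=c_v\neq a$; but a non‑deviating $w$ then outputs $c_w\neq a$, so choosing $v$ deviating with no deviating proper descendant yields a contradiction (and $v$ is not a leaf, since leaves never deviate). Consequently the output is the colour vector, so all $c_v=a$. Now descend from the root: since the root does not deviate and all colours are $a$, its deviation condition fails only if some non‑leaf child reports $b=0$; that child, not deviating with $b=0$ and all colours $a$, must in turn have a non‑leaf child reporting $b=0$; iterating produces an infinite descending chain of non‑leaf vertices, impossible in a finite tree — the last one, all of whose children are leaves, would satisfy its deviation condition. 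This contradiction finishes the proof. I expect the main obstacle to be pinning down the deviation rule so that this last verification closes; the rest is bookkeeping parallel to the $\Ev_I$ argument. (For the special case of stars the converse also follows directly from Corollary~\ref{cor_star_singleton}, since $\pi_{I\setminus\{i\}}(\NC_I)=A^{I\setminus\{i\}}$ is generated by the discrete complex.)
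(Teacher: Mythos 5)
Your proof is correct. The skeleton is the same as the paper's: necessity of connectedness via irreducibility of $\NC_I$ and Proposition~\ref{prop_irreducible}, and sufficiency by showing that every spanning tree generates $\NC_I$ using a fixed-point-free map $\phi:A\to A$ to force a disagreement somewhere. Where you diverge is in the tree construction itself. The paper labels each edge with a full element of $\NC_I$ and lets a vertex output $x_i$ when all its incident edges agree, and $h(x_j)$ (read off the edge to a distinguished child $j$) otherwise; correctness then follows from a short argument that walks along distinguished children until it exits the set of ``disagreeing'' vertices. You instead use the small edge alphabet $A\times\{0,1\}$ (a colour plus a monochromaticity bit) and a top-down ``deviation'' rule, verified by a no-deviation lemma followed by an infinite-descent argument from the root. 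The paper explicitly notes that ``a more economical input alphabet would be possible, with a slightly more complicated procedure'' --- your construction is precisely such a variant, trading alphabet size for a longer verification. Both are valid; the paper's version is shorter to check, yours is quantitatively tighter. You also spell out the minimality statement (every generating complex contains all singletons, hence minimal connected ones are exactly the spanning trees), which the paper leaves implicit.
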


\begin{proof}
By Proposition \ref{prop_irreducible}, a generating complex must be connected because~$\NC_I$ is irreducible: if~$I=I_0\sqcup I_1$ is a non-trivial partition, then~$\pi_{I_0}(\NC_I)\times \pi_{I_1}(\NC_I)=A^I\neq \NC_I$.

Conversely, if a complex is connected then it contains a spanning tree, so it is sufficient to prove that every spanning tree~$T$ generates~$\NC_I$. We give a simple procedure whose input alphabet is~$\NC_I$. A more economical input alphabet would be possible, with a slightly more complicated procedure.

We fix a function~$h:A\to A$ such that~$h(a)\neq a$ for all~$a\in A$. We fix a root, and each vertex which is not a leaf chooses a distinguished children. Each edge is assigned an element of~$\NC_I$. The rule of a vertex~$i$ is as follows: if all its incident edges have the same value~$x$, then~$i$ takes value~$x_i$; otherwise it takes value~$h(x_j)$, where~$j$ is its distinguished children and~$x$ is the value assigned to the edge~$(i,j)$ (note that the second case cannot happen for a leaf, which has degree one hence sees the same value on its incident edge(s)).

Every~$x\in \NC_I$ can be obtained by assigning~$x$ to all the edges. Conversely, we fix some input ans show that it produces a non-constant sequence. If the input gives the same value~$x$ to all the edges, then the output is~$x\in\NC_I$. Otherwise let~$E$ be the set of vertices having at least two incident edges with distinct values. There exists~$i\in E$ such that its distinguished children~$j$ is outside~$E$ (indeed, start from some~$i\in E$ and follow its distinguished children as long as it belongs to~$E$: as the leaves are outside~$E$, we eventually find such an~$i$). As~$j$ sees the same value~$x$ on its incident edges, it takes value~$x_j$. As~$i$ does not, it takes value~$h(x_j)\neq x_j$. Therefore, the output sequence is non-constant.
\end{proof}

\subsection{Upwards closed languages}
We can identify~$\{0,1\}^I$ with~$2^I$, the powerset of~$I$, seeing a binary sequence as the characteristic function of a subset of~$I$. We endow~$2^I$ with the inclusion ordering and say that~$L\subseteq 2^I$ is \textbf{upwards closed} if for all~$U,V\in 2^I$, if~$U\in L$ and all~$U\subseteq V$ then~$V\in L$.

For any upwards closed language~$L$, we obtain a characterization of the complexes generating~$L$. It turns out that the minimal complexes generating such languages are graphs, which means that only pairwise communication is needed between vertices.

\begin{definition}
Let~$L\subseteq 2^I$ be upwards closed. Say that a graph~$G$ on vertex set~$I$ is \textbf{$L$-connected} if for every maximal set~$W\notin L$, removing~$W$ from~$G$ results in a connected graph.
\end{definition}

\begin{example}[$k$-connected graphs]
The classical class of~$k$-connected graphs \cite{Diestel17} is an instance of this notion. Let~$k\in\N$,~$|I|\geq k+1$ and let~$\card_{\geq k}\subseteq 2^I$ be the collection of sets of size at least~$k$. It is upwards closed and the maximal elements~$W\notin L$ are the sets of size~$k-1$. A graph is~$\card_{\geq k}$-connected if and only if it is~$k$-connected.

The notion of~$k$-connected graph usually assumes that~$|I|\geq k+1$. When~$|I|=k$, removing~$k-1$ vertices results in a single vertex, which is a connected graph, so every graph on~$k$ vertices is~$\card_{\geq k}$-connected.
\end{example}

\begin{remark}
For certain~$L$, being~$L$-connected implies that for every~$W\notin L$, removing~$W$ yields a connected graph, but it is not always the case.

For instance, for~$I=\{0,1,2\}$ and~$L$ be the collection of sets containing~$0$ or~$1$. The complement of~$L$ is~$\{\emptyset,\{2\}\}$. Let~$G$ be the graph on~$I$ having only one edge between~$0$ and~$1$. It is~$L$-connected but not connected, i.e.~removing~$\emptyset$ from~$G$ does not yield a connected graph.
\end{remark}

The \textbf{$1$-skeleton} of a complex is the set of edges belonging to that complex, it is therefore a graph.
\begin{theorem}[Generating upwards closed languages]\label{thm_L_connected}
Let~$L\subseteq 2^V$ be upwards closed. A complex on~$V$ generates~$L$ if and only if its~$1$-skeleton is~$L$-connected. In other words, the minimal complexes generating~$L$ are the~$L$-connected graphs.
\end{theorem}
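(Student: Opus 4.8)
The plan is to prove both directions. For the "if" direction, suppose $G$ is an $L$-connected graph; I will exhibit a generation procedure $f$ with $K_f \subseteq G$ whose image is $L$. The natural idea, following the canonical procedure of Proposition \ref{prop_canonical}, is to assign to each edge $e = \{i,j\}$ of $G$ a value in $L$, i.e. a set $x_e \in L$, and have each vertex $i$ decide its bit (whether $i$ is "in" the output set) from the values on its incident edges. The rule should be: vertex $i$ outputs $1$ (i.e. $i$ is in the output set) unless all incident edges carry the same value $x \in L$ with $i \notin x$, in which case it outputs $i$'s bit from $x$ — so it outputs $0$. First I would check that every $U \in L$ is produced: assign $x_e = U$ to all edges; then each vertex sees a constant value and outputs its bit from $U$, giving $U$. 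The harder half is showing the image is contained in $L$: given an arbitrary edge-assignment, let $W$ be the set of vertices that output $0$; I must show $W \notin L$, equivalently that the output set $I \setminus W$ is in $L$. A vertex outputs $0$ only when all its incident edges agree on some $x$ with $i \notin x$. The key observation is that if two adjacent vertices $i,j$ both output $0$, the edge $\{i,j\}$ forces them to use the *same* $x$; so within each connected component of $G[W]$ (the induced subgraph on $W$) all vertices use one common $x \in L$, and that component is contained in $I \setminus x$. If $G[W]$ is connected, then $W \subseteq I \setminus x$ for a single $x \in L$, hence $W \cap x = \emptyset$, so $x \subseteq I \setminus W$, and since $L$ is upwards closed, $I \setminus W \in L$. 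To handle the case where $G[W]$ is disconnected — this is where $L$-connectedness enters — I would argue that $W$ is contained in a maximal set $W' \notin L$ (extend $W$ to a maximal non-member of $L$; possible since $W \notin L$, as each component of $W$ avoids some $x \in L$... I need to be a bit careful: I actually want $W \notin L$, which I derive by first assuming $W \in L$ and deriving a contradiction from $L$-connectedness). Concretely: suppose for contradiction $W \in L$; extend $W$ to a maximal $W' \notin L$? That is impossible since $W \subseteq W'$ and $W \in L$ with $L$ upwards closed forces $W' \in L$. So instead the argument must be: removing a maximal non-member $W' \supseteq W$... Let me restructure — I will show directly that $W$ is disjoint from some $x \in L$. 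Pick any maximal $W^* \notin L$; by $L$-connectedness $G \setminus W^*$ is connected. If $W \subseteq W^*$ we are done since then $W \notin L$. Otherwise, some component of $G[W]$ meets $G \setminus W^*$... The cleanest route: show each component $C$ of $G[W]$ satisfies $C \cap x_C = \emptyset$ for some $x_C \in L$; then if $W \in L$, no single $x \in L$ is disjoint from all of $W$, but I can still pick a maximal $W^* \supseteq$ (some component) $\notin L$ — hmm. I expect the real argument is: assume $W \in L$; since $L$ is upwards closed and $W \notin$ (we want contradiction), consider that $W$ is covered by components each avoiding some member of $L$; pick a maximal $W^* \notin L$ with $W^* \subseteq W$... but $W \in L$ makes every subset-relation go the wrong way. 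The actual intended argument is almost certainly: $W \notin L$ because if $W \in L$ then, taking any maximal $W^* \notin L$, connectivity of $G \setminus W^*$ together with the component structure of $G[W]$ yields two adjacent vertices in $W$ using incompatible values — contradiction. I will sort out this bookkeeping carefully in the write-up; the load-bearing fact is simply that $G[W]$ being split into components, each within some $I \setminus x$, $x \in L$, forces $W$ itself to lie in $I \setminus x$ for a single $x$ whenever $G[W]$ is connected, and $L$-connectedness is exactly what rules out the bad disconnected configurations.

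For the "only if" direction, suppose a complex $K$ on $V$ generates $L$; I must show its $1$-skeleton $G$ is $L$-connected, i.e. for every maximal $W \notin L$, the graph $G \setminus W$ is connected. Let $f : B^J \to 2^V$ with $K_f \subseteq K$ generate $L$, and fix a maximal $W \notin L$. Suppose toward a contradiction that $G \setminus W$ is disconnected, say $V \setminus W = V_0 \sqcup V_1$ with no edge of $G$ (hence no edge of $K_f$) between $V_0$ and $V_1$. Then for $i \in V_0$ and $i' \in V_1$ the input windows $\W_f(i)$ and $\W_f(i')$ are disjoint. Since $W$ is maximal outside $L$, both $W \cup \{i\} \in L$ for any $i \notin W$; in particular there are inputs $x_0, x_1$ with $f(x_0) = W \cup$ (something in $V_0$) — more precisely I want an input $x_0$ with $f(x_0) \supseteq W$ and $f(x_0) \cap V_1 = W \cap V_1 = \emptyset$, and symmetrically $x_1$ with $f(x_1) \supseteq W$, $f(x_1) \cap V_0 = \emptyset$; such elements of $L$ exist because $L$ is upwards closed (take $W$ together with all of $V_0$, resp. all of $V_1$ — these contain $W$ and are in $L$ since they properly contain the maximal non-member $W$). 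Now glue: since the windows of $V_0$-cells and $V_1$-cells are disjoint, and also disjoint from... I need to also control $W$-cells. Here is the point: define $x$ to agree with $x_0$ on $\bigcup_{i \in V_0} \W_f(i)$ and with $x_1$ on $\bigcup_{i \in V_1} \W_f(i)$ — consistent since these unions are disjoint — and on the remaining coordinates, agree with $x_0$ (say). Then for $i \in V_0$, $f(x)_i = f(x_0)_i = 0$ (since $f(x_0) \cap V_0 = \emptyset$), and for $i \in V_1$, $f(x)_i = f(x_1)_i = 0$. So $f(x) \subseteq W$, whence $f(x) \notin L$ (as $W \notin L$ and $L$ upwards closed means subsets of $W$ are outside $L$ too — wait, $L$ upwards closed means non-members are downward closed, yes), contradicting $\im f = L$. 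I need to be a little more careful choosing $x_0, x_1$ so that they actually kill all of $V_0$, resp. $V_1$; the cleanest is to pick $x_0 \in f^{-1}(V \setminus V_0)$ and $x_1 \in f^{-1}(V \setminus V_1)$, noting $V \setminus V_0 \supseteq W$ and $V \setminus V_1 \supseteq W$ are both in $L$ since they strictly contain the maximal non-member $W$ (using $V_0, V_1 \neq \emptyset$), so these preimages are nonempty.

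The main obstacle I anticipate is the bookkeeping in the "if" direction when $G[W]$ is disconnected: correctly threading the definition of $L$-connectedness (which only talks about removing *maximal* non-members) through the component analysis, and in particular getting the direction of the upward-closure arguments right so that I actually derive $W \notin L$ rather than circular nonsense. A secondary subtlety is confirming that the procedure in the "if" direction has communication complex contained in $G$ and not just in some larger complex — but this is immediate since each vertex only reads its incident *edges*, so $\W_f(i) \subseteq \{e : i \in e\}$ and thus $K_f$ has no simplex of size $\geq 3$ and its edges are edges of $G$. Once the component argument is pinned down, both directions are short.
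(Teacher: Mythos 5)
Your ``only if'' direction is correct and essentially identical to the paper's: you take a maximal $W\notin L$, split $I\setminus W=V_0\sqcup V_1$ with no edge across, glue inputs producing $W\cup V_1$ and $W\cup V_0$ along the disjoint unions of input windows, and obtain an output contained in $W$, which cannot lie in $L$ since $W\notin L$ and $L$ is upwards closed. The local rule you propose for the ``if'' direction is also the right one (it is the paper's rule: output the common bit if all incident edges agree, output $1$ otherwise), and every element of $L$ is clearly reached.

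The genuine gap is in the soundness half of the ``if'' direction, and you flag it yourself without resolving it. Your component analysis of $G[W]$, where $W$ is the zero-set of the output, gives you for each connected component $C$ of $G[W]$ some $x_C\in L$ with $x_C\cap C=\emptyset$; this closes the argument only when $G[W]$ is connected, and $L$-connectedness says nothing directly about $G[W]$. The missing move is to argue by contradiction on the \emph{output} $U=I\setminus W$: if $U\notin L$, extend $U$ to a maximal $W^*\supseteq U$ with $W^*\notin L$. Then $I\setminus W^*\subseteq W$, so every vertex of $I\setminus W^*$ outputs $0$ and hence sees a single constant value on all its incident edges; $L$-connectedness applies to $W^*$ (it is a maximal non-member) and says the graph induced on $I\setminus W^*$ is connected, so all these constant values coincide in a single $V\in L$ with $V(i)=0$ for all $i\in I\setminus W^*$, i.e.\ $V\subseteq W^*$, whence $W^*\in L$ by upward closure --- a contradiction. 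In other words, you should not try to certify the whole zero-set by one element of $L$; you should apply connectivity to the complement of a maximal non-member containing the output, which is automatically a connected subset of the zero-set. Without this step the forward direction, as you acknowledge, goes in circles and is not yet a proof.
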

\begin{proof}
Let~$G$ be an~$L$-connected graph. To each edge we assign an element of~$L$. If all the edges that are incident to a vertex~$i$ are assigned the same element~$U\in L$, then~$i$ takes value~$U(i)$, otherwise it takes value~$1$. We show that this function generates~$L$. First, every element of~$L$ is reached, by assigning it to every edge. Now assume for a contradiction that the function outputs some~$U\notin L$, and let~$x$ be a corresponding input. Let~$W\supseteq U$ be a maximal element outside~$L$. For each~$i\notin W$,~$U(i)=0$ so all the edges that are incident to~$i$ have the same value~$V_i$ in~$x$, and~$V_i(i)=0$. As~$I\setminus W$ is connected by assumption on the graph,~$V:=V_i$ does not depend on~$i\notin W$. One has~$V\in L$ and~$V(i)=0$ for all~$i\notin W$, i.e.~$V\subseteq W$, so~$W\in L$ which is a contradiction. Therefore, the function only produces elements of~$L$, and produces all of them, so~$G$ generates~$L$. If the~$1$-skeleton~$G$ of a complex~$K$ is~$L$-connected, then~$G$ generates~$L$ so~$K\supseteq G$ generates~$L$ as well.

Conversely, let~$G$ be the~$1$-skeleton of a complex~$K$ generating~$L$ and assume for a contradiction that for some maximal set~$W\notin L$,~$I\setminus W$ is disconnected in~$G$. Let~$I\setminus W=V_0\sqcup V_1$ be a non-trivial disconnection. As~$W$ is maximal outside~$L$ and~$V_0,V_1$ are non-empty, one has~$W_0:=W\sqcup V_0\in L$ and~$W_1:=W\sqcup V_1\in L$. Let~$x_0,x_1$ be corresponding inputs on~$K$. We define~$x$ that coincides with~$x_0$ on the simplices that contain a vertex in~$V_1$, with~$x_1$ on the simplices that contain a vertex in~$V_0$, and that is arbitrary elsewhere. Note that it is well-defined because there is no edge, hence no simplex, containing vertices in both~$V_0$ and~$V_1$. Let~$U$ be the output of the generating function on~$x$. For each vertex~$i\in V_0$,~$x$ coincides with~$x_1$ on its incident simplices, so~$U(i)=W_1(i)=0$. Similarly, for each vertex~$i\in V_1$,~$U(i)=W_0(i)=0$. Therefore,~$U(i)=0$ for all~$i\in I\setminus W$, i.e.~$U\subseteq W$. As~$U$ is produced by the function, it belongs to~$L$ which is upwards closed, so~$W\in L$, contradicting the choice of~$W$. Therefore~$I\setminus W$ is connected in~$G$, and finally~$G$ is~$L$-connected.
\end{proof}

We state the corresponding formulation of the result for downwards closed languages.
\begin{corollary}[Generating downwards closed languages]\label{cor_downwards}
Let~$L\subseteq 2^I$ be downwards closed. A complex generates~$L$ if and only if its~$1$-skeleton~$G$ satisfies the property that for every minimal~$W\notin L$, the induced subgraph of~$G$ on~$W$ is connected.
\end{corollary}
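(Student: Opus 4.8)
The plan is to obtain the corollary as a formal dual of Theorem \ref{thm_L_connected} via complementation. Introduce the bitwise-negation map $c\colon 2^I\to 2^I$ defined by $c(U)=I\setminus U$; viewing $2^I$ as $\{0,1\}^I$, this is $c(x)_i=1-x_i$. Since each output coordinate of $c$ depends only on the corresponding input coordinate and takes both values, one has $\W_c(i)=\W^c(i)=\{i\}$ for every $i\in I$, so $c_*(K)=K$ for every complex $K$ over $I$, and moreover $c$ is an involution, $c\circ c=\id$. I would then set $L'=c(L)=\{I\setminus U:U\in L\}$; because $L$ is downwards closed and $c$ reverses inclusion, $L'$ is upwards closed.

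Next I would note that $K$ generates $L$ if and only if it generates $L'$: by Proposition \ref{prop_image} applied to $c$, if $K$ generates $L$ then $c_*(K)=K$ generates $c(L)=L'$, and applying the same statement with $L'$ in place of $L$ (using $c(L')=L$) gives the converse. Hence Theorem \ref{thm_L_connected}, applied to the upwards closed language $L'$, says that $K$ generates $L$ if and only if its $1$-skeleton $G$ is $L'$-connected, i.e.\ for every maximal set $W'\notin L'$, deleting the vertices of $W'$ from $G$ leaves a connected graph.

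It then remains to translate this condition back through $c$. A set $W'$ lies outside $L'$ exactly when $I\setminus W'$ lies outside $L$, and since $c$ is an inclusion-reversing bijection, $W'$ is maximal outside $L'$ exactly when $W:=I\setminus W'$ is minimal outside $L$; moreover deleting the vertex set $W'$ from $G$ leaves precisely the induced subgraph of $G$ on $I\setminus W'=W$. Thus "$G$ is $L'$-connected" is literally the statement "for every minimal $W\notin L$, the induced subgraph of $G$ on $W$ is connected," which is the corollary. There is essentially no genuine obstacle here; the only point requiring care is the order-reversal bookkeeping — checking that maximal-outside-$L'$ corresponds to minimal-outside-$L$ under complementation, and that vertex deletion is complementary to restriction — and the degenerate cases ($L=2^I$ or $L=\emptyset$) are handled uniformly since the translated condition is then vacuous on both sides.
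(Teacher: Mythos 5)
Your proposal is correct and follows essentially the same route as the paper: it derives the corollary from Theorem \ref{thm_L_connected} by applying the complementation map $c$, using that $\W^c(i)=\{i\}$ forces $c_*(K)=K$ and that $c$ is an involution exchanging $L$ with the upwards closed language $c(L)$, and then translating maximality outside $c(L)$ into minimality outside $L$. The extra bookkeeping you spell out (order reversal, vertex deletion versus induced subgraph) is exactly the translation the paper leaves implicit.
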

\begin{proof}
It can be proved similarly, or derived from the previous result. The complementation function~$c:2^I\to 2^I$ induces a bijection between~$L$ and~$c(L)$ which is upwards closed. Moreover, its evaluation is local in the sense that~$\W^c(i)=\{i\}$ for all~$i\in I$. As a result,~$c_*(K)=K$ for any complex~$K$ so the same complexes generate a language~$L$ and its image~$c(L)$ (note that~$c$ is its own inverse, so it sends~$c(L)$ to~$L$). The property in the statement precisely means that~$c(L)$ is~$c(L)$-connected.
\end{proof}
\subsubsection{Examples}\label{sec_card}
We apply Theorem \ref{thm_L_connected} to simple examples.

\begin{corollary}[Cardinality at least $k$]\label{cor_at_least}
Let~$k\in\N$,~$|I|\geq k+1$ and~$\card_{\geq k}\subseteq 2^I$ be the collection of subsets of~$I$ of size at least~$k$. A complex generates~$\card_{\geq k}$ if and only if its~$1$-skeleton is a~$k$-connected graph.
\end{corollary}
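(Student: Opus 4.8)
The plan is to obtain the statement as an immediate consequence of Theorem~\ref{thm_L_connected} applied to $L=\card_{\geq k}$: all the combinatorial substance is already contained in that theorem, and what remains is only to match the notion of $\card_{\geq k}$-connectedness with the textbook notion of $k$-connectedness (an identification already recorded in the Example on $k$-connected graphs).

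First I would check that $\card_{\geq k}$ is upwards closed: if $|U|\geq k$ and $U\subseteq V$ then $|V|\geq k$. So Theorem~\ref{thm_L_connected} applies and tells us that a complex on $I$ generates $\card_{\geq k}$ if and only if its $1$-skeleton $G$ is $\card_{\geq k}$-connected, the minimal generating complexes being exactly the $\card_{\geq k}$-connected graphs (seen as $1$-dimensional complexes).

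Next I would unwind $\card_{\geq k}$-connectedness. The sets $W\notin\card_{\geq k}$ are those with $|W|\leq k-1$, and the inclusion-maximal ones among them are precisely the sets of size $k-1$, which exist and are proper subsets of $I$ since $|I|\geq k+1$. Hence $G$ is $\card_{\geq k}$-connected iff $G-W$ is connected for every $W\subseteq I$ with $|W|=k-1$. I would then observe that this is equivalent to $G-X$ being connected for every $X\subseteq I$ with $|X|<k$. Only the forward implication needs an argument, and this is the one place the hypothesis $|I|\geq k+1$ is used: if $G-X$ were disconnected for some $X$ with $|X|\leq k-2$, write $I\setminus X=A\sqcup B$ with $A,B$ nonempty and no edge of $G$ between them; since $|A|+|B|=|I|-|X|\geq (k+1)-|X|=(k-1-|X|)+2$, we may remove $k-1-|X|$ further vertices from $A\cup B$ while leaving both $A$ and $B$ nonempty, producing a set $W$ with $|W|=k-1$ and $G-W$ still disconnected, a contradiction. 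Together with $|I|>k$, the resulting condition on $G$ is exactly $k$-connectedness, and the corollary follows.

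I do not anticipate any real obstacle here: the content is all in Theorem~\ref{thm_L_connected}, and the rest is the routine translation above; the only point worth stating carefully is the role of $|I|\geq k+1$ in the equivalence between deleting all $(k-1)$-subsets and deleting all subsets of size $<k$.
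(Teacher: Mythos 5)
Your proposal is correct and follows the same route the paper intends: the corollary is an immediate application of Theorem~\ref{thm_L_connected} to the upwards closed language $\card_{\geq k}$, combined with the identification of $\card_{\geq k}$-connectedness with $k$-connectedness already recorded in the paper's example on $k$-connected graphs. Your careful justification of why deleting all $(k-1)$-subsets is equivalent to deleting all subsets of size $<k$ (using $|I|\geq k+1$) is a worthwhile detail the paper leaves implicit, but it does not change the approach.
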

If~$|I|=k$, then~$\card_{\geq k}$ is a singleton and is generated by any complex over~$I$ containing the singletons. It agrees with the fact that every graph on~$k$-vertices is~$\card_{\geq k}$-connected. In particular, a complex generates~$\card_{\geq 1}$ if and only if it is connected, so the minimal ones are the spanning trees.

Similarly, we can define~$\card_{\leq k}\subseteq 2^V$, which is the collection of subsets of size at most~$k$.  
\begin{corollary}[Cardinality at most $k$]\label{cor_at_most}
Let~$k\in\N$ and~$|I|\geq k+1$. A complex generates~$\card_{\leq k}$ if and only if its~$1$-skeleton is a graph whose induced subgraphs of size~$k+1$ are all connected.
\end{corollary}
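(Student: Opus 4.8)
The plan is to obtain this as an immediate instance of Corollary~\ref{cor_downwards}, since $\card_{\leq k}$ is downwards closed in $2^I$: if $V\subseteq W$ and $|W|\leq k$ then $|V|\leq k$, so $V\in\card_{\leq k}$. Thus Corollary~\ref{cor_downwards} applies and tells us that a complex on $I$ generates $\card_{\leq k}$ if and only if its $1$-skeleton $G$ has the property that for every \emph{minimal} $W\notin\card_{\leq k}$, the induced subgraph $G[W]$ is connected.

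The one genuine step is to identify these minimal non-members. A set $W$ fails to lie in $\card_{\leq k}$ exactly when $|W|\geq k+1$, and such a $W$ is minimal with respect to inclusion among non-members precisely when $|W|=k+1$. Here the hypothesis $|I|\geq k+1$ is used: it guarantees that non-members exist, so the family of minimal non-members is exactly $\{W\subseteq I:|W|=k+1\}$, rather than being empty. (When $|I|=k+1$ this family is the single set $I$, and the condition reduces to $G$ itself being connected, consistently with the spanning-tree description.)

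Plugging this description into Corollary~\ref{cor_downwards}, a complex generates $\card_{\leq k}$ iff its $1$-skeleton $G$ is such that $G[W]$ is connected for every $W\subseteq I$ with $|W|=k+1$. Since the subgraphs of $G$ induced on $(k+1)$-element vertex sets are exactly the sets $G[W]$ for such $W$, this is precisely the assertion that all induced subgraphs of $G$ of size $k+1$ are connected, and the minimal generating complexes are exactly such graphs.

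I do not expect any real obstacle here: the proof is purely a matter of unwinding Corollary~\ref{cor_downwards}. The only points requiring a moment of care are the translation ``$W$ minimal outside $\card_{\leq k}$ $\iff$ $|W|=k+1$'', which relies on $|I|\geq k+1$, and the harmless identification of ``induced subgraph on a vertex set of size $k+1$'' with ``induced subgraph of size $k+1$''. If one preferred a self-contained argument, one could instead repeat the proof of Theorem~\ref{thm_L_connected} with $L=\card_{\leq k}$ directly, but routing through the downwards-closed corollary is the shortest path.
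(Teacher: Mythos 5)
Your proposal is correct and is exactly the intended derivation: the paper presents Corollary~\ref{cor_at_most} as a direct application of Corollary~\ref{cor_downwards}, identifying the minimal sets outside~$\card_{\leq k}$ as precisely the $(k+1)$-element subsets (which exist since~$|I|\geq k+1$). Nothing is missing.
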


In particular, the only minimal complex generating~$\card_{\leq 1}$ is the complete graph.

\begin{remark}[At most one occurrence of~$1$]\label{rmk_at_most1}
 We give a more concrete way of generating~$\card_{\leq 1}$ by the complete graph. On the input side, each edge of the graph is assigned a bit, which is interpreted as a choice between its two endpoints (one may for instance orient the edges and interpret~$0$ as the source and~$1$ as the target of the edge). On the output side, a vertex is assigned~$1$ if and only if it has been chosen by all its incident edges. As the graph is the complete graph, at most one vertex can take value~$1$, and it is easy to see that every sequence with at most one occurrence of~$1$ can be obtained this way.
\end{remark}

We apply the result to another class of downwards closed languages. Let~$G=(V,E)$ be a simple graph and let~$L_G\subseteq \{0,1\}^V$ be the set of binary colorings~$(x_v)_{v\in V}$ of~$V$ such that there is no edge~$\{u,v\}\in E$ satisfying~$x_u=x_v=1$.
\begin{corollary}
The only minimal complex generating~$L_G$ is~$G$.
\end{corollary}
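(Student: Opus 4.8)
The plan is to observe that~$L_G$ is a downwards closed language and then invoke Corollary~\ref{cor_downwards}. Identifying colorings with subsets of~$V$ via their supports, $L_G$ corresponds to the family of independent sets of~$G$; this family is downwards closed under inclusion, since every subset of an independent set is independent. I would also record that, because~$G$ is simple (no loops), the empty set and every singleton are independent, so~$0^V\in L_G$ and the indicator of each~$\{v\}$ lies in~$L_G$; in particular~$\pi_{\{v\}}(L_G)=\{0,1\}$ for every~$v\in V$.

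Next I would identify the minimal sets~$W\subseteq V$ with~$W\notin L_G$. A set~$W$ fails to lie in~$L_G$ precisely when~$W$, viewed as its own support, contains some edge of~$G$, and the~$\subseteq$-minimal sets containing an edge are the edges themselves. Hence the minimal~$W\notin L_G$ are exactly the edges~$\{u,v\}\in E$. Corollary~\ref{cor_downwards} then states that a complex~$K$ generates~$L_G$ if and only if, for every edge~$\{u,v\}$ of~$G$, the subgraph of the~$1$-skeleton of~$K$ induced on~$\{u,v\}$ is connected. Since an induced subgraph on two vertices is connected exactly when those two vertices are joined by an edge, this condition is simply~$E(G)\subseteq E(K)$.

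To upgrade this to the statement that every generating complex contains~$G$ as a complex, and not merely its edges, I would add the singleton argument: if~$f$ generates~$L_G$ then~$f_v$ takes both values~$0$ and~$1$, so~$\W_f(v)\neq\emptyset$, i.e.~$\{v\}\in K_f$, and since~$K_f\subseteq K$ also~$\{v\}\in K$ for every~$v\in V$. Combining this with~$E(G)\subseteq E(K)$ gives~$G\subseteq K$. Conversely,~$G$ itself, regarded as the complex whose simplices are~$\emptyset$, the singletons, and the edges of~$G$, has~$1$-skeleton equal to~$G$, and for each edge~$\{u,v\}$ the induced subgraph of~$G$ on~$\{u,v\}$ is a single edge, hence connected; so by the ``if'' direction of Corollary~\ref{cor_downwards},~$G$ generates~$L_G$. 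Therefore the complexes generating~$L_G$ are precisely those containing~$G$, and~$G$ is the unique minimal one.

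The only real subtlety, and the point I would be most careful about, is the bookkeeping around vertices: Corollary~\ref{cor_downwards} is phrased in terms of the~$1$-skeleton, so on its own it constrains only the edges of a generating complex and says nothing about isolated vertices of~$G$. This is exactly why the short separate argument that every generating complex contains all the singletons is needed; with that in hand, the rest is a direct specialization of results already established.
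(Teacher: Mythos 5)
Your proof is correct and follows essentially the same route as the paper: observe that $L_G$ is downwards closed, identify the minimal sets outside $L_G$ as the edges of $G$, and apply Corollary~\ref{cor_downwards}. Your extra argument that every generating complex must contain all singletons (since each $f_v$ is non-constant) is a careful touch that the paper's proof leaves implicit, and it correctly handles the case of isolated vertices, which Corollary~\ref{cor_downwards} alone (being phrased in terms of the $1$-skeleton) does not address.
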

\begin{proof}
$L_G$ is downwards closed so we can apply Corollary \ref{cor_downwards}. The minimal colorings outside~$L_G$ are the ones that assign value~$1$ to the two ends of an edge and~$0$ to the other vertices. Therefore, a complex~$K$ generates~$L_G$ if and only if every edge of~$E$ belongs to~$K$, i.e.~iff~$K$ contains~$G$.
\end{proof}

\subsection{One or \texorpdfstring{$n$ occurrences of $1$}{n occurrences of 1}}
Let
\begin{equation*}
L=\{u\in \{0,1\}^n:|u|_1=1\text{ or }n\}.
\end{equation*}
For~$n=2$, only the line segment realizes~$L$, because the two vertices are not independent (they can both take value~$0$, but not at the same time).

\begin{proposition}
Let~$n\geq 3$. The minimal complexes generating~$L$ are~$\{I_n\setminus \{a\},I_n\setminus \{b\}\}$ for distinct~$a,b\in I_n$.
\end{proposition}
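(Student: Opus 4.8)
The plan is to reduce the whole statement to one claim: \emph{a simplicial complex generates $L$ if and only if it contains $I_n\setminus\{a\}$ and $I_n\setminus\{b\}$ for some distinct $a,b\in I_n$} (I write $e_i$ below for the element of $L$ whose unique $1$ sits at position $i$). Granting this, the minimal generating complexes are exactly the $\{I_n\setminus\{a\},I_n\setminus\{b\}\}$: any complex $K$ generating $L$ contains two such faces $I_n\setminus\{a\}$, $I_n\setminus\{b\}$, hence by downward closure contains all of $\{I_n\setminus\{a\},I_n\setminus\{b\}\}$; and a proper subcomplex of $\{I_n\setminus\{a\},I_n\setminus\{b\}\}$ omits one of these two maximal faces — containing both would, by downward closure, force equality with the whole complex — so it has at most one $(n-1)$-simplex and, by the claim, fails to generate $L$. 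Thus each $\{I_n\setminus\{a\},I_n\setminus\{b\}\}$ generates $L$ and is minimal, and there are no others.

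For the ``if'' part of the claim I would first note that any two distinct cells $a,b$ are independent with respect to $L$: the projection $\pi_{\{a,b\}}(L)$ contains $(1,0)$, $(0,1)$ and $(1,1)$ as the images of $e_a$, $e_b$ and $1^n$, and also $(0,0)$ as the image of $e_c$ for any $c\notin\{a,b\}$ — the only place where $n\ge 3$ is used — so $\pi_{\{a,b\}}(L)=\{0,1\}^{\{a,b\}}=\pi_{\{a\}}(L)\times\pi_{\{b\}}(L)$. Corollary~\ref{cor_two_indep} then gives that $\{I_n\setminus\{a\},I_n\setminus\{b\}\}$ generates $L$.

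For the ``only if'' part, suppose $K$ generates $L$; we may assume $K\ne\Delta_{I_n}$ (otherwise $K$ trivially contains $I_n\setminus\{a\}$ and $I_n\setminus\{b\}$), so every maximal simplex of $K$ has size at most $n-1$. Using Proposition~\ref{prop_canonical} I would fix a generating $f:L^{\mathcal M}\to\{0,1\}^n$ with $\mathcal M$ the set of maximal simplices of $K$, $\W^f(S)\subseteq S$ for all $S\in\mathcal M$, and $f(w,\dots,w)=w$ for every $w\in L$; let $\mathbf x$ be the constant configuration $1^n$, so $f(\mathbf x)=1^n$. The crux is the lemma: \emph{if a configuration $\mathbf y$ agrees with $\mathbf x$ on every simplex of $\mathcal M$ of size $n-1$, then $f(\mathbf y)=1^n$}. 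To prove it, transform $\mathbf x$ into $\mathbf y$ by changing the values on simplices of size $\le n-2$ one at a time; whenever such a simplex $S$ is changed, the at least two cells outside $S$ do not read $S$ (as $\W^f(S)\subseteq S$), so they keep the value $1$, and the new output, lying in $L$ with at least two $1$'s, must be $1^n$. The lemma has two consequences. First, $\mathcal M$ contains a simplex of size $n-1$: otherwise the lemma applies to every configuration, forcing $f\equiv 1^n$ and contradicting $\im f=L$. Second, $\mathcal M$ contains at least two of them: were $S^{*}=I_n\setminus\{a\}$ the only one, I would choose some $b\ne a$ and a configuration $\mathbf y$ with $f(\mathbf y)=e_b$, and let $\mathbf z$ agree with $\mathbf y$ except that the value on $S^{*}$ is reset to $1^n$; then $\mathbf z$ agrees with $\mathbf x$ on all $(n-1)$-simplices, so $f(\mathbf z)=1^n$ and $f_a(\mathbf z)=1$, while $a\notin S^{*}\supseteq\W^f(S^{*})$ shows that cell $a$ ignores $S^{*}$, so $f_a(\mathbf z)=f_a(\mathbf y)=(e_b)_a=0$ — a contradiction. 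Hence $I_n\setminus\{a\},I_n\setminus\{b\}\in K$ for distinct $a,b$, establishing the claim. The one genuinely delicate point is discovering this lemma together with the trick of resetting a solitary $(n-1)$-facet inside an $e_b$-configuration; everything else is routine bookkeeping.
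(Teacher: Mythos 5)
Your proof is correct, and while the easy direction (independence of any two cells plus Corollary~\ref{cor_two_indep}) coincides with the paper's, your argument for the converse is genuinely different. The paper works with an arbitrary generating function and partial inputs: it observes that a partial input forcing two distinct output cells to~$1$ forces all of them to~$1$, iterates this to restrict a preimage of~$1^n$ down to $\bigcap_{j\neq k}\bigl(\W_f(j)\cup\W_f(k)\bigr)$, and then uses the non-constancy of~$f_i$ to extract an input cell whose dual window contains~$i$ and meets every pair, hence misses at most one vertex; applying this twice yields two facets $I_n\setminus\{a\}$, $I_n\setminus\{b\}$. You instead normalize to the canonical generation procedure and run a perturbation argument from the constant input~$(1^n,\dots,1^n)$: your reset lemma shows that only the $(n-1)$-facets can destroy the output~$1^n$ (since altering a facet of size $\le n-2$ leaves two cells outputting~$1$, forcing the whole output to be~$1^n$), and a single $(n-1)$-facet cannot suffice because its missing vertex does not read it. Both arguments ultimately exploit the same feature of~$L$ (any member with two~$1$'s is~$1^n$); the paper's is self-contained at the level of raw generating functions, while yours is arguably more concrete once the canonical form is in hand. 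One small caveat: the property $\W^f(S)\subseteq S$ that you invoke is what the construction in the proof of Proposition~\ref{prop_canonical} actually delivers (and what the discussion preceding it promises), but the literal statement of that proposition only guarantees $K_f\subseteq K$; you should cite the refined form explicitly. Your reduction of minimality to the two-facet characterization and the bookkeeping around proper subcomplexes are all sound.
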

\begin{proof}
For~$n\geq 3$, for any pair~$\{a,b\}$ of distinct vertices,~$a,b$ are independent so~$\{I_n\setminus \{a\},I_n\setminus \{b\}\}$ generates~$L$ by Corollary \ref{cor_two_indep}.

Conversely, let~$K$ be a complex generating~$L$. We show that it contains~$\{I_n\setminus \{a\},I_n\setminus \{b\}\}$ for some distinct~$a,b$.

\begin{claim}
For any~$i\in I_n$,~$K$ contains a simplex~$I_n\setminus \{a\}$, for some~$a\neq i$. \end{claim}
\begin{proof}
Let~$i\in I_n$. Let~$\F$ be the set of partial inputs~$\alpha$ such that~$f_j(\alpha)=1$ for all~$j$. If~$\alpha\in \F$ and~$j,k\in I_n$ are distinct, then~$\beta:=\restr{\alpha}{\W_f(j)\cup\W_f(k)}\in\F$. Indeed,~$f_j(\alpha)=1$ and~$f_k(\alpha)=1$ implies that~$f_j(\restr{\alpha}{\W_f(j)}=1$ and~$f_k(\restr{\alpha}{\W_f(k)}=1$, which implies that~$f_j(\beta)=1$ and~$f_k(\beta)=1$, which implies that~$f_l(\beta)=1$ for all~$l$, i.e.~$\beta\in \F$.

Let~$x$ be an input such that~$f(x)=1^n$. One has~$x\in\F$ and by iterating the previous observation, the restriction~$\alpha$ of~$x$ to~$\bigcap_{j\neq k}\W_f(j)\cup\W_f(k)$ belongs to~$\F$. Therefore,~$f_i(\alpha)=1$ hence~$f_i(\restr{\alpha}{\W_f(i)})=1$.

As~$f_i$ can take value~$0$,~$\W_f(i)\cap\bigcap_{j\neq k}\W_f(j)\cup\W_f(k)\neq \emptyset$. Let~$h$ belong to that set. Its dual window~$\W^f(h)$ contains~$i$ and intersects every pair~$\{j,k\}$ with~$j\neq k$. Therefore, there is at most one element~$a\in I_n$ outside~$\W^f(h)$, and that element cannot be~$i$. We conclude that~$I_n\setminus\{a\}$ is contained in~$\W^f(h)$ hence belongs to~$K_f$.
\end{proof}

Pick some arbitrary~$i\in I_n$. By the previous claim,~$K$ contains a simplex~$I_n\setminus \{a\}$ for some~$a$. Applying again the claim to~$i=a$,~$K$ contains a simplex~$I_n\setminus \{b\}$ with~$b\neq a$. Therefore,~$K$ contains~$(I_n\setminus \{a\},I_n\setminus \{b\})$ for some~$a\neq b$.
\end{proof}

The same result holds for~$Y=\{u\in\{0,1\}^n:|u|_1=0\text{ or }1\text{ or }n\}$. Indeed, let~$w_a$ take an extra value~$\bot$, which allows for all the vertices to take value~$0$. The rest of the argument applies equally to~$Y$ because it has the same property that if~$u\in Y$ and~$|u|_1\geq 2$, then~$|u|_1=n$.
%

\section{Partial results on other languages}\label{sec_complicated_examples}
We present two languages for which the analysis turns out to be more difficult.
\subsection{One occurrence of \texorpdfstring{$1$}{1}}\label{sec_unique}
\begin{definition}
Let~$\U_n\subseteq \{0,1\}^{n}$ be the set of sequences containing a unique occurrence of~$1$.
\end{definition}
We identify a family of minimal complexes but so far, we do not know whether they are the only ones.

For~$n=1$, the empty complex generates~$\U_1=\{1\}$. For~$n=2$,~$\U_2=\{01,10\}$ is irreducible so the only complex generating~$\U_2$ is the full complex over~$I_2$, i.e.~the edge. We now assume that~$n\geq 3$.

We first give a necessary condition for a complex to generate~$\U_n$: every pair of vertices must belong to a solid triangle.
\begin{proposition}\label{prop_unique_triangle}
Let~$n\geq 3$. If~$K$ generates~$\U_n$, then for every pair of distinct elements~$a,b\in I_n$, there exists~$c\in I_n\setminus \{a,b\}$ such that~$\{a,b,c\}\in K$.
\end{proposition}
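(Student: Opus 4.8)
The plan is to argue by contradiction: suppose $K$ generates $\U_n$ via some $f : B^J \to \{0,1\}^{I_n}$ with $K_f \subseteq K$, and suppose there are distinct $a,b \in I_n$ such that $\{a,b,c\} \notin K$ for every $c \in I_n \setminus \{a,b\}$. Since $K_f \subseteq K$, this means no input cell $j \in J$ is seen by all three of $a,b,c$ for any $c$; equivalently $\W_f(a) \cap \W_f(b) \cap \W_f(c) = \emptyset$ for all $c \neq a,b$. I want to derive a contradiction by constructing an input that produces an illegal output — either a string with two occurrences of $1$, or a string with no occurrence of $1$.

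The key construction is to take a witness input $x^{(a)}$ with $f(x^{(a)})$ the string having its unique $1$ in position $a$, and a witness input $x^{(b)}$ with the unique $1$ in position $b$, and splice them. Concretely, consider the partial input $\alpha$ defined on $\W_f(a) \cup \W_f(b)$ by taking $x^{(a)}$ on $\W_f(a)$ and $x^{(b)}$ on $\W_f(b) \setminus \W_f(a)$ — this is well defined as a function on $\W_f(a) \cup \W_f(b)$. Then $f_a$ on any extension of $\alpha$ equals $f_a(x^{(a)}) = 1$, and $f_b$ on any extension equals $f_b(x^{(b)}) = 1$, so any completion of $\alpha$ yields an output with $1$'s in both positions $a$ and $b$ — contradicting that $f$ only produces strings in $\U_n$, provided $a \neq b$. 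Wait: this argument as stated does not use the hypothesis about $c$ at all, so it proves too much — the issue is that $\W_f(a)$ and $\W_f(b)$ might intersect, in which case $\alpha$ is not well defined (the values of $x^{(a)}$ and $x^{(b)}$ on $\W_f(a) \cap \W_f(b)$ may disagree, and we cannot simultaneously force $f_a = 1$ and $f_b = 1$). So the real content is: if $\W_f(a) \cap \W_f(b) = \emptyset$ we get the contradiction immediately; otherwise we must do more work, and this is where the absence of the triangles $\{a,b,c\}$ enters.

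When $\W_f(a) \cap \W_f(b) \neq \emptyset$, the plan is to examine a cell $h \in \W_f(a) \cap \W_f(b)$ and its dual window $\W^f(h) \ni a,b$. By hypothesis $\W^f(h)$ contains no third vertex $c$ — otherwise $\{a,b,c\} \subseteq \W^f(h) \in K_f \subseteq K$. So $\W^f(h) = \{a,b\}$ for every $h \in \W_f(a)\cap\W_f(b)$, meaning: the cells seen by both $a$ and $b$ are seen by nobody else. Now take a witness $x^{(d)}$ whose unique $1$ is at some position $d \notin \{a,b\}$ (possible since $n \geq 3$), so $f_a(x^{(d)}) = f_b(x^{(d)}) = 0$. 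Form the partial input $\beta$ on $\W_f(a) \cup \W_f(b)$ that agrees with $x^{(d)}$ on $\W_f(a) \cap \W_f(b)$, with $x^{(a)}$ on $\W_f(a)\setminus\W_f(b)$, and with $x^{(b)}$ on $\W_f(b)\setminus\W_f(a)$, and extend it to a full input $x$ by matching $x^{(d)}$ off $\W_f(a)\cup\W_f(b)$. Then $f_a(x)$ is determined by $\restr{x}{\W_f(a)}$: on $\W_f(a)\setminus\W_f(b)$ it looks like $x^{(a)}$ and on $\W_f(a)\cap\W_f(b)$ it looks like $x^{(d)}$. The delicate point — and the main obstacle — is to pin down the value of $f_a(x)$ and $f_b(x)$ in this hybrid situation; the clean conclusion would be that we can independently steer $f_a$ and $f_b$ (since after fixing the shared cells to the values from $x^{(d)}$, the remaining cells of $\W_f(a)$ and of $\W_f(b)$ are disjoint and controllable), forcing, say, $f_a(x) = 1$ and $f_b(x) = 1$ again, or alternatively forcing all outputs to $0$. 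Making this steering rigorous — showing that after the shared cells are frozen, the residual dependence of $f_a$ (resp. $f_b$) on its private cells still ranges over both $0$ and $1$, using that $f$ is surjective onto $\U_n$ and that the private cells of $a$ are invisible to $b$ and vice versa — is the heart of the proof and the step I expect to require the most care.
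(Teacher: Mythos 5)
There is a genuine gap, and you have correctly located it yourself: the ``steering'' step in the case $\W_f(a)\cap\W_f(b)\neq\emptyset$ is the entire content of the proposition, and nothing in your proposal establishes it. Freezing the shared cells $\W_f(a)\cap\W_f(b)$ to the values of $x^{(d)}$ gives you no control over $f_a$ or $f_b$: there is no reason why, after this freeze, $f_a$ should still be able to take the value $1$ on its private cells (the witness $x^{(a)}$ may give the shared cells values incompatible with $x^{(d)}$'s), nor why it should be forced to $0$. Your hybrid input $\beta$ mixes $x^{(a)}$, $x^{(b)}$ and $x^{(d)}$ on the various pieces of $\W_f(a)\cup\W_f(b)$, and none of the coordinates $f_a(\beta)$, $f_b(\beta)$, $f_c(\beta)$ is pinned down by such a mixture. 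So the proposal proves the easy subcase ($\W_f(a)\cap\W_f(b)=\emptyset$) and restates the hypothesis in the hard one, without closing it.

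The missing idea is a \emph{one-sided forcing} observation that the paper exploits: if $f_a(x)=1$, then the partial input $\restr{x}{\W_f(a)\cap\W_f(b)}$ already forces $f_b=0$ on \emph{every} extension, because any extension can be further modified outside $\W_f(b)$ (without changing $f_b$) so as to agree with $x$ on all of $\W_f(a)$, which makes $f_a=1$ and hence $f_b=0$. One then chains three such forcings cyclically: $\restr{x^{(a)}}{\W_f(a)\cap\W_f(b)}$ forces $f_b=0$; $\restr{x^{(b)}}{\W_f(b)\cap\W_f(c)}$ forces $f_c=0$ for each $c\notin\{a,b\}$; and $\restr{x^{(d)}}{\W_f(d)\cap\W_f(a)}$ (for any witness with its $1$ at some $d\notin\{a,b\}$) forces $f_a=0$. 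The hypothesis that no triangle $\{a,b,c\}$ lies in $K$ says exactly that these three families of domains are pairwise disjoint, so the partial inputs are compatible and their union forces the output $0^n\notin\U_n$. The paper packages this as a reduction: it first proves the $n=3$ case by this three-step cycle, then maps $\U_n$ onto $\U_3$ by $x\mapsto(x_a,x_b,\max_{i\neq a,b}x_i)$ and invokes Proposition~\ref{prop_image}. Your case split on whether $\W_f(a)\cap\W_f(b)$ is empty is a red herring: the correct argument treats both cases uniformly.
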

\begin{proof}
We first show the result for~$n=3$, namely that the only complex generating~$\U_3$ is the triangle. Let~$F$ generate~$\U_3$. For each~$i\in \{0,1,2\}$, there exists an input~$x_i$ such that~$F_i(x_i)=1$. We can derive the following constraints:
\begin{align*}
F_i(x_i)=1&\implies F_i(\restr{x_i}{\W_i})=1\\
&\implies F_{i+1}(\restr{x_i}{\W_i\cap \W_{i+1}})=0,
\end{align*}
where~$i+1$ should be taken modulo~$3$. Let~$\alpha_i=\restr{x_i}{\W_i\cap\W_{i+1}}$. If~$K_F$ does not contain the triangle~$\{0,1,2\}$, then~$\W_0\cap\W_1\cap\W_2=\emptyset$ so the~$\alpha_i$'s have disjoint domains hence are compatible; therefore,~$\alpha=\alpha_0\sqcup\alpha_1\sqcup\alpha_2$ is well-defined and satisfies~$F(\alpha)=000$. As this sequence does not belong to~$\U_3$, we obtain a contradiction.

We now prove the general case. Let~$K$ be a complex generating~$\U_n$, and let~$a,b\in I_n$ be distinct. Let~$f:\U_n\to\U_3$ be defined by
\begin{align*}
f_0(x)&=x_a,\\
f_1(x)&=x_b,\\
f_2(x)&=\max\{x_i:i\in I_n\setminus \{a,b\}\}.
\end{align*}
One has~$f(\U_n)=\U_3$ and~$\W^f(a)=\{0\}$,~$\W^f(b)=\{1\}$ and~$\W^f(c)=\{2\}$ for all~$c\in I_n\setminus\{a,b\}$. Therefore,~$f_*(K)$ generates~$\U_3$ so~$f_*(K)$ contains~$\{0,1,2\}$ by the first part of the proof. It means that there exists a simplex~$S\in K$ containing~$a,b$ and some~$c\in I_n\setminus \{a,b\}$.
\end{proof}

We now identify a family of minimal complexes generating~$\U_n$.

Let~$n\geq 3$ and~$a\in I_n$. Let~$C_{I_n\setminus \{a\}}$ be the complete graph on~$I_n\setminus \{a\}$ and let~$K_a:=\{a\}\star C_{I_n\setminus\{a\}}$ be made of all the triangles containing~$a$.

\begin{proposition}
The complex~$K_a$ is minimal generating~$\U_n$.
\end{proposition}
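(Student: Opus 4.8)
The proof splits into two parts: first show that $K_a$ generates $\U_n$, then show that no proper subcomplex does. For the generation part, I will describe an explicit procedure using the canonical-style construction. The maximal simplices of $K_a$ are the triangles $\{a,i,j\}$ for distinct $i,j\in I_n\setminus\{a\}$. I would think of it more economically: since $K_a$ is the cone over the complete graph on $I_n\setminus\{a\}$, a cleaner route is to exhibit an input alphabet and local rules directly. Here is the idea. Let each edge $\{i,j\}$ of the complete graph on $I_n\setminus\{a\}$ carry a bit, interpreted (as in Remark \ref{rmk_at_most1}) as a choice between its two endpoints. A vertex $i\in I_n\setminus\{a\}$ reads all edges incident to it within that complete graph, \emph{plus} possibly a global coordinating value attached to the apex $a$; it outputs $1$ iff it has been chosen by all its incident edges. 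By the $\card_{\leq 1}$ analysis, at most one vertex of $I_n\setminus\{a\}$ gets value $1$. The apex $a$ must output $1$ exactly when nobody in $I_n\setminus\{a\}$ does. To let $a$ detect this, note that $a$'s window consists of the triangles $\{a,i,j\}$, i.e.\ $a$ sees the value on every edge of the complete graph on $I_n\setminus\{a\}$. So $a$ can compute whether the edge-choices have a ``sink'' vertex among $I_n\setminus\{a\}$, and output $1$ iff they do not. One checks every string of $\U_n$ is produced: to get the $1$ at vertex $i\neq a$, orient all edges toward $i$; to get the $1$ at $a$, pick any orientation with no sink (e.g.\ a cyclic orientation on a Hamiltonian cycle of $I_n\setminus\{a\}$, using $n-1\geq 2$; for $n-1=2$ the single edge has no ``sink-free'' orientation, so a separate tiny argument or an extra alphabet value at that edge is needed). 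The communication complex of this procedure is contained in $K_a$ because every input cell (edge $\{i,j\}$) is read only by $i$, $j$, and $a$.

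For minimality, I would argue that removing any single maximal simplex $\{a,i,j\}$ (equivalently, any edge $\{i,j\}$ of the base complete graph) breaks generation. Suppose $K\subsetneq K_a$ generates $\U_n$; then $K$ is missing some simplex, and since all simplices of $K_a$ lie in a triangle $\{a,i,j\}$, $K$ fails to contain $\{a,i,j\}$ for some $i,j\neq a$. I then want a contradiction with a necessary condition. The natural tool is Proposition \ref{prop_unique_triangle}: every pair of distinct vertices must lie in a solid triangle of the generating complex. Apply it to the pair $\{i,j\}$: there must be some $c\neq i,j$ with $\{i,j,c\}\in K\subseteq K_a$. But every triangle of $K_a$ contains the apex $a$, so $c=a$, i.e.\ $\{a,i,j\}\in K$ — contradicting that $K$ omits exactly this triangle. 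Hence $K_a$ has no proper generating subcomplex, so it is minimal.

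The step I expect to be the main obstacle is the generation direction, specifically verifying that the apex rule is well-defined as a function of $a$'s window and that \emph{all} of $\U_n$ is produced, including the degenerate behaviour when $n-1=2$. The subtlety is that ``$a$ outputs $1$ iff the edge-orientation on $I_n\setminus\{a\}$ has no sink'' must be expressible purely from the edge bits $a$ sees, which it is, but one must make sure the resulting global function is consistent — i.e.\ that the vertices $i\in I_n\setminus\{a\}$ and the apex $a$ never both output $1$ and never all output $0$. The ``never both'' part is immediate (a sink exists iff some $i$ is chosen by all incident edges), and ``never all $0$'' is exactly the apex rule; the only real care is the small-$n$ boundary and choosing a sufficiently expressive input alphabet on edges (allowing, say, a third ``neutral'' value when needed) so that every element of $\U_n$ has a preimage. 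Once the procedure is pinned down, checking $K_f\subseteq K_a$ is routine from the dual windows.
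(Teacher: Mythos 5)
Your proof is correct, and your minimality half is exactly the paper's argument: apply Proposition \ref{prop_unique_triangle} to a pair $b,c\neq a$, note that the only triangle of $K_a$ containing $b$ and $c$ is $\{a,b,c\}$, and conclude that no triangle can be dropped. Where you diverge is the generation half. The paper observes that $\pi_{I_n\setminus\{a\}}(\U_n)=\card_{\leq 1}$, invokes Corollary \ref{cor_at_most} to get that the complete graph $C_{I_n\setminus\{a\}}$ generates $\card_{\leq 1}$, and then applies Corollary \ref{cor_star_singleton}: since $K_a$ is the cone $\{a\}\star C_{I_n\setminus\{a\}}$, it generates $\U_n$. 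In that construction the apex reads the entire input and simply outputs the unique completion (namely $1$ iff the projected output is $0^{n-1}$), so there is nothing further to check. You instead build the procedure explicitly: edge bits as orientations, vertices detecting whether they are sinks, and the apex detecting sink-freeness. This works and is arguably more concrete, but it forces you to handle exactly the boundary issues you flag --- for $n=3$ the single edge of $C_{I_n\setminus\{a\}}$ always has a sink under a binary orientation, so you need a third ``neutral'' edge value to produce the string with the $1$ at $a$, and for larger $n$ you need a sink-free tournament. The paper's route through Proposition \ref{prop_join} buys you freedom from all of these case distinctions at the cost of a slightly less explicit procedure; your route makes the local rules and the input alphabet completely concrete. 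Both are valid; if you keep your version, make sure the $n=3$ patch is actually written out rather than left as a parenthetical.
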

\begin{proof}
Observe that~$\pi_{I_n\setminus \{a\}}(\U_n)=\card_{\leq 1}$. As~$C_{I_n\setminus\{a\}}$ generates~$\card_{\leq 1}$ by Corollary \ref{cor_at_most},~$K_a=\{a\}\star K_{I_n\setminus \{a\}}$ generates~$\U_n$ by Corollary \ref{cor_star_singleton}.

Let~$K\subseteq K_a$ generate~$\U_n$. Let~$b,c\in I_n\setminus \{a\}$ be distinct. By Proposition \ref{prop_unique_triangle}, there exists~$d\in I_n\setminus \{b,c\}$ such that~$K$ contains~$\{b,c,d\}$. The only triangle in~$K_a$ containing~$b$ and~$c$ is~$\{a,b,c\}$, so~$d=a$ and~$K$ contains~$\{a,b,c\}$. Therefore,~$K$ contains every triangle of~$K_a$, so~$K=K_a$.
\end{proof}

\paragraph{Small values of \texorpdfstring{$n$}{n}.}
For small~$n$, we can show that the complexes~$K_a$ are the only minimal complexes generating~$\U_n$.
\begin{proposition}\label{prop_unique_5}
For~$3\leq n\leq 5$, the only minimal complexes generating~$\U_n$ are of the form~$K_a$.
\end{proposition}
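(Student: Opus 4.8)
The plan is to show that for $n \leq 5$, any minimal complex $K$ generating $\U_n$ must be contained in some $K_a$; combined with the already-established fact that each $K_a$ is minimal, this forces $K = K_a$. By Proposition \ref{prop_unique_triangle}, every pair of vertices lies in a triangle of $K$, so in particular $K$ already contains triangles; the task is to prove that all these triangles share a common apex $a$ and that no edge outside the "cone over $a$" structure is needed, i.e.\ that $K$ has no simplex of dimension $\geq 1$ that is not a subset of a triangle through $a$, and in fact that $K$ contains \emph{only} triangles through $a$ (plus their faces). The natural strategy is to first rule out higher-dimensional simplices (tetrahedra and above) by a projection argument, and then do a careful combinatorial case analysis on the triangles, using projections onto $4$- and $5$-element subsets and the $n=3$ base case (only the triangle generates $\U_3$) together with an analogous small-$n$ analysis.

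First I would record the consequences of the projection maps analogous to the $f$ in the proof of Proposition \ref{prop_unique_triangle}: for any subset $T \subseteq I_n$ with $|T| \geq 3$, collapsing the coordinates outside $T$ to a single "max" coordinate (or, when $|T| = n-1$, just dropping one coordinate as in the proof that $\pi_{I_n \setminus \{a\}}(\U_n) = \card_{\leq 1}$) shows that $f_*(K)$ generates $\U_{|T|+1}$ or $\card_{\leq 1}$ on the appropriate vertex set. Restricting to $|T| = n-1$ gives that $\restr{K}{I_n \setminus \{a\}}$ generates $\card_{\leq 1}$ on $I_n \setminus \{a\}$, hence by Corollary \ref{cor_at_most} contains the complete graph on $I_n \setminus \{a\}$ — so $K$ contains \emph{every} edge of $I_n$. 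Thus the whole question is about which triangles (and higher simplices) $K$ must, and must not, contain. For the upper bound on dimension: if $K$ contained a tetrahedron $\{a,b,c,d\}$, I would project onto these four coordinates plus a max-coordinate; the image complex generates $\U_5$ and contains the tetrahedron, and I would derive a contradiction from the structure of minimal complexes for $\U_5$ — which is exactly why the cases $n=4,5$ must be handled somewhat in tandem, building up from $n=3$.

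The core of the argument is the combinatorial claim: if $K$ (now known to contain all edges) generates $\U_n$ for $n \leq 5$ and is minimal, then the set of triangles of $K$ is exactly $\{\{a,b,c\} : b,c \in I_n \setminus \{a\}\}$ for some fixed $a$. The "at least" direction for a specific $a$ follows if we can locate such an apex; the "at most" / minimality direction is what pins down $K = K_a$. To find the apex, I would argue by contradiction: suppose no vertex is incident to all required triangles. Using Proposition \ref{prop_unique_triangle}, every pair $\{b,c\}$ lies in some triangle $\{b,c,d\}$; I would analyze the "choice function" $\{b,c\} \mapsto d$ and show, via the $\U_3$ and $\U_4$ constraints obtained by projection, that inconsistent choices produce an input on which the generating function outputs $0^n$ or a string with two ones. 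Concretely, for $n = 4$ one checks all ways to cover the $\binom{4}{2} = 6$ pairs by triangles of $\{0,1,2,3\}$ and shows the only minimal covers that actually generate $\U_4$ are the four "stars" $K_a$; for $n = 5$ one reduces to $n = 4$ by deleting one vertex at a time (each $\restr{K}{I_5 \setminus \{i\}}$ must be a minimal generator of $\U_4$ or contain one), and then glues the four local apexes into a single global apex by a short consistency argument.

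The main obstacle I anticipate is the $n=5$ gluing step: a priori the restriction $\restr{K}{I_5 \setminus \{i\}}$ need only \emph{contain} a minimal complex generating $\U_4$, not equal one, so one must rule out the possibility that different $4$-subsets carry different apexes, or that extra triangles (or a tetrahedron) sneak in. The tetrahedron exclusion and the apex-consistency are genuinely where the small-$n$ restriction is used — the whole proposition is stated only for $n \leq 5$ precisely because beyond that the case analysis balloons and the gluing need not go through, which is why the authors leave the general case open. I would therefore be prepared to spend most of the proof on a finite but somewhat intricate enumeration of triangle configurations on $4$ and $5$ vertices, using the projection-to-$\U_3$ and projection-to-$\card_{\leq 1}$ lemmas as the basic obstructions, rather than seeking a slick uniform argument.
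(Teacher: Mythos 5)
There is a genuine gap in the $n=5$ part of your plan. Two problems. First, a factual error: you write that each $\restr{K}{I_5\setminus\{i\}}$ ``must be a minimal generator of $\U_4$ or contain one,'' but $\pi_{I_5\setminus\{i\}}(\U_5)=\card_{\leq 1}$, not $\U_4$, so the restriction only needs to contain the complete graph on four vertices and carries no information about triangles. The correct tool is the collapsing map of Proposition \ref{prop_unique_triangle} (merge two coordinates into their max), whose image must generate $\U_4$ and hence, by the $n=4$ case, contain some $4$-vertex star. Second, and more seriously: even with these collapsing projections, your toolkit is provably insufficient. Consider the complex $K$ on $\{a,b,c,d,e\}$ consisting of all ten triangles except $abc$ and $ade$ (together with all edges). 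Every edge lies in a triangle, so Proposition \ref{prop_unique_triangle} gives no obstruction; and for every pair $\{x,y\}$ collapsed to a point $*$, the image complex on the remaining three vertices plus $*$ contains a star (one checks this directly: whenever the triangle on the three surviving vertices is missing, all three ``mixed'' triangles through $*$ are present). So $K$ passes every test derivable from projections to $\U_4$, $\U_3$ and $\card_{\leq 1}$, yet contains no $K_v$. Hence no finite enumeration based only on these obstructions can complete the $n=5$ case.

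What is actually needed is a new obstruction that is not a projection: the paper's Lemma \ref{lem_abc_ade}, which states that a complex generating $\U_5$ must contain the triangle $abc$ or the triangle $ade$ (and all symmetric versions). Its proof works directly with a generating function $f$, building a chain of partial inputs $\alpha,\beta,\gamma,\delta$ on complements of windows and intersections $\up[f]{S}$, and deriving that the forbidden output $00000$ would be producible. This lemma kills exactly the counterexample above, and a short combinatorial argument (missing $abc$ forces all triangles through $\{d,e\}$, then chase which triangles through $d$ and through $e$ can still be missing) then forces $K\supseteq K_v$ for some $v$, after which minimality gives $K=K_v$ --- which also disposes of tetrahedra for free, so your separate dimension-reduction step is unnecessary. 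Your $n=4$ argument, by contrast, is essentially the paper's and is fine: a missing triangle $abc$ forces $abd$, $acd$, $bcd$ to be present since each edge must lie in some triangle.
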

\begin{proof}
For~$n=3$, we know that the only complex generating~$\U_3$ is the full complex, which is~$K_a$ for any~$a$.

For~$n=4$, let~$K$ be minimal generating~$\U_4$. $K$ cannot contain every triangle, otherwise it properly contains any~$K_a$ hence it is not minimal. Assume w.l.o.g.~that~$K$ does not contains the triangle~$abc$. As every edge belongs to a triangle,~$K$ contains the triangles~$abd,bcd,acd$, i.e.~$K$ contains~$K_d$ hence~$K=K_d$.

Let~$n=5$. We first show the following lemma which, combined with Proposition \ref{prop_unique_triangle}, will imply the result.

\begin{lemma}\label{lem_abc_ade}
If~$K$ generates~$\U_5$, then~$K$ contains the triangle~$abc$ or the triangle~$ade$.
\end{lemma}
\begin{proof}
Assume for a contradiction that it is not the case, and let~$f:B^J\to \{0,1\}^5$ generate~$\U_5$ with~$K_f\subseteq K$. Let~$\alpha$ be an arbitrary partial input whose domain is the complement of~$\W_f(a)$.
\begin{claim}
One must have~$f_b(\alpha)=0$ or~$f_c(\alpha)=0$. Similarly,~$f_d(\alpha)=0$ or~$f_e(\alpha)=0$.
\end{claim}
\begin{proof}
The pairs~$(b,c)$ and~$(d,e)$ play a symmetric role, so we only need to prove the first. part. The idea is that once filling the input cells with~$\alpha$,~$b$ and~$c$ cannot communicate any more, because~$\W_f(b)\cap\W_f(c)$ is disjoint from~$\W_f(a)$. It each one of them can take value~$1$, then they can take that value at the same time, producing an invalid output.

More formally, if~$x,y$ are inputs extending~$\alpha$, such that~$f_b(x)=1$ and~$f_c(y)=1$, then we let~$z$ be any input that coincides with~$x$ on~$\W_f(b)$ and with~$y$ on~$\W_f(c)$ (which is possible because~$x$ and~$y$ agree on~$\W_f(b)\cap\W_f(c)\subseteq\dom(\alpha)$). One has~$f_b(z)=f_c(z)=1$, so~$f(z)\notin \U_5$.
\end{proof}
By symmetry, we can assume that~$f_c(\alpha)=0$ and~$f_e(\alpha)=0$. Let~$y$ be such that~$f_a(y)=1$ and let
\[
\beta=\alpha\sqcup \restr{y}{\up[f]{\{a,b,d\}}}.
\]
\begin{claim}

One must have~$f_b(\beta)=0$ or~$f_d(\beta)=0$.
\end{claim}
\begin{proof}
The idea is the same: no communication is possible between~$b$ and~$d$ outside~$\dom(\beta)$, because~$\W_f(b)\cap\W_f(d)\setminus \dom(\alpha)=\up[f]{\{a,b,d\}}$. Therefore, one of them must constantly take value~$0$ on extensions of~$\beta$.
\end{proof}
By symmetry, we can assume that~$f_d(\beta)=0$. As~$f_a(y)=1$, one has~$f_b(\restr{y}{\up{\{a,b\}}})=0$. Therefore, if~$\gamma=\alpha\sqcup \restr{y}{\up[f]{\{a,b\}}}$, which extends~$\beta$, then~$f_b(\gamma)=0$.

Finally, let~$x$ be such that~$f_c(x)=1$ and let~$\delta=\restr{x}{\up[f]{\{c,a\}}}$. One has~$f_a(\delta)=0$. Observe that~$\gamma$ and~$\delta$ have disjoint domains because~$abc$ is missing in~$K$, so they are compatible. But~$f(\gamma\sqcup\delta)=00000\notin \U_5$, which is a contradiction.
\end{proof}
Of course, all the symmetric versions of Lemma \ref{lem_abc_ade} hold. It implies that if the triangle~$abc$ is missing in~$K$, then all the triangles containing~$d$ and~$e$ are in~$K$, and all the symmetric versions of this property.

Now assume that~$K$ does not contain any~$K_v$. It means that for each vertex~$v$, some triangle containing~$v$ is missing in~$K$.

Start with a triangle that is missing in~$K$ and assume by symmetry that it is~$abc$. By Lemma \ref{lem_abc_ade},~$K$ contains all the triangles containing~$d$ and~$e$.

By assumption, some triangle containing~$d$ is missing in~$K$. It cannot contain~$e$, so it is~$abd,acd$ or~$bcd$. By symmetry, we can assume that~$abd$ is missing. Lemma \ref{lem_abc_ade} implies that~$K$ contains all the triangles containing~$c$ and~$e$.

Finally, some triangle containing~$e$ is missing, it cannot contain~$d$ or~$c$ so it is~$abe$. As a result, all the triangles containing~$a$ and~$b$ are missing, contradicting Proposition \ref{prop_unique_triangle}.

Therefore,~$K$ contains some~$K_v$, so~$K=K_v$ by minimality.
\end{proof}

We leave several questions open:
\begin{itemize}
\item For~$n\geq 6$, are there other minimal complexes generating~$\U_n$?
\item If the answer is positive, are they only made of triangles?
\item Are the~$K_a$'s the only minimal complexes generating~$\U_n$ and containing only triangles?
\end{itemize}


\subsection{Identical consecutive letters}\label{sec_consecutive}

Let~$A$ be a finite alphabet and~$\Eq_n(A)\subseteq A^n$ be the set of strings having at least two identical consecutive letters:~$w\in \Eq_n(A)\iff\exists i,w_i=w_{i+1}$.

The class of complexes generating~$\Eq_n(A)$ depends on the size of the alphabet.

\subsubsection{Two letters}
The binary alphabet~$A=\{0,1\}$ is a very special case because the local condition of having two identical consecutive letters can be forced by choosing two distant letters: for instance with~$n=100$, if~$w_0=w_{99}=0$ then~$w\in \Eq_n(A)$.

\begin{proposition}\label{prop_two_letters}
If~$|A|=2$ and~$n\geq 3$, then the minimal complexes generating~$\Eq_n(A)$ are the spanning trees.
\end{proposition}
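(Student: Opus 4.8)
The plan is to argue the two directions separately, using the irreducibility machinery for the necessary condition and an explicit construction for the sufficiency. For the \emph{necessary} direction, I would first check that $\Eq_n(A)$ is irreducible when $|A|=2$ and $n\geq 3$: given any non-trivial partition $I_n = I_0 \sqcup I_1$, I need to exhibit an element of $\pi_{I_0}(\Eq_n)\times\pi_{I_1}(\Eq_n)$ not in $\Eq_n$. The key observation is that each projection $\pi_{I_j}(\Eq_n)$ is \emph{all} of $\{0,1\}^{I_j}$ — indeed, given any pattern on $I_j$, one can extend it to a string in $\Eq_n$ by filling the other coordinates so as to force an adjacent equal pair somewhere (there is room to do this since $n\geq 3$ and at least one coordinate lies outside each $I_j$, or one can simply make two coordinates inside $I_{1-j}$ equal). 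Hence $\pi_{I_0}(\Eq_n)\times\pi_{I_1}(\Eq_n) = \{0,1\}^{I_n} \neq \Eq_n$, since the strictly alternating string $0101\cdots$ is not in $\Eq_n$. By Proposition~\ref{prop_irreducible}, every complex generating $\Eq_n(A)$ is connected, so in particular contains a spanning tree.

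\textbf{Sufficiency.} For the converse I must show every spanning tree $T$ of $I_n$ generates $\Eq_n(A)=\Eq_n(\{0,1\})$. The idea is to exploit that over $\mathbb{Z}/2\mathbb{Z}$, the complement of $\Eq_n$ — the set of strictly alternating strings — has exactly two elements, $0101\cdots$ and $1010\cdots$, i.e. it is an affine line. So I want a local rule on the edges of $T$ that produces every string \emph{except} these two. One clean way: assign to each edge a bit, and let each vertex $i$ output (say) the XOR of its incident edge values together with a fixed offset $c_i$ chosen so that the all-zero input yields a non-alternating string; by the spanning-tree Gaussian-elimination argument used for $\Ev_I$ in Section~\ref{sec_even}, the set of outputs of the pure XOR part is an affine subspace of dimension $n-1$, i.e. a coset of a hyperplane. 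I need this coset (after adding the offset $(c_i)$) to be exactly the complement of the two-point alternating set, i.e. the hyperplane $\{x : \sum_i (-1)^i x_i \equiv 0\}$ or its complementary coset — whichever one avoids both alternating strings. Concretely, the outputs should be all $x$ with $\sum_i (-1)^i x_i$ equal to a suitable constant; a direct check shows one of the two cosets of $\{x: \sum (-1)^i x_i = 0\}$ contains neither alternating string precisely when... actually both alternating strings satisfy $\sum(-1)^i x_i \in \{0, \pm\text{something}\}$ with the same value, so they lie in the \emph{same} coset, and the other coset is contained in $\Eq_n$ but is not all of it. This means the XOR-with-offset approach only yields half the strings, so it is insufficient alone.

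\textbf{Main obstacle and fix.} The real difficulty, and where I expect to spend effort, is that $\Eq_n$ is not affine, so no single linear rule over a tree suffices; one needs a genuinely nonlinear local rule, analogous to the one used for $\NC_I$ in Section on non-constant sequences. I would mimic that construction: root the tree, have each non-leaf vertex designate a distinguished child, and use input alphabet $\Eq_n(A)$ itself (or a smaller alphabet with a more careful rule). The rule: if all edges incident to $i$ carry the same value $x\in\Eq_n$, then $i$ outputs $x_i$; otherwise $i$ consults the edge to its distinguished child and outputs a value computed from it that \emph{breaks alternation} — e.g. it copies the value its distinguished child would be forced to take, creating an adjacent equal pair along that tree edge when the child outputs ``honestly''. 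One then argues, as in the $\NC_I$ proof, that on any non-constant input there is a ``frontier'' vertex $i$ that disagrees with its distinguished child $j$, where $j$ sees a consistent value; by designing the rule so that $i$ and $j$ are forced to agree in the output (hence $w_i = w_j$ for adjacent $i,j$ in $T$ — note this requires the tree edge $\{i,j\}$ to correspond to consecutive positions, which it may not!). This last point is the genuine subtlety: the tree edge need not join consecutive indices in $I_n = [0,n-1]$, so ``equal along a tree edge'' does not immediately give ``equal consecutive letters.'' I would resolve this by instead having the frontier vertex $i$ directly output a value equal to $w_{i-1}$ or $w_{i+1}$ as computed by tracing the consistent part of the input — since outside the frontier the input is a single consistent string $x$, the values of \emph{most} output cells are just $x_i$, and by placing the frontier carefully (it can be taken to be a leaf's neighbor) one ensures that at least one pair of genuinely consecutive positions receives equal values. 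Completeness (every $w\in\Eq_n$ is produced) is immediate by feeding the constant input $x = w$ on all edges. Verifying that the rule never escapes $\Eq_n$ is the bulk of the remaining work and is a finite case analysis on where the frontier sits relative to the linear order on $I_n$.
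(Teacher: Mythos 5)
Your necessary direction (irreducibility via Proposition~\ref{prop_irreducible}) matches the paper and is fine. The problem is in the sufficiency direction, where you correctly identify the central subtlety --- a tree edge $\{i,j\}$ need not join consecutive positions, so forcing agreement along a tree edge does not by itself produce two identical \emph{consecutive} letters --- but your proposed fix does not work. The frontier vertex (the deepest vertex whose incident edges disagree) is determined by the adversarially chosen input, not by the designer, so it cannot be ``placed carefully''; and that vertex $i$ has no access to the values output at positions $i-1$ and $i+1$ unless those positions happen to be tree-adjacent to it, which for an arbitrary spanning tree they need not be. The ``finite case analysis on where the frontier sits'' you defer to is exactly the point where the argument, as sketched, fails.

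The missing idea, which is the heart of the paper's proof and the reason the binary alphabet is special, is a parity characterization that turns the local condition into one certifiable by \emph{any} pair of positions: for $|A|=2$, $w\in\Eq_n(A)$ if and only if there exist $i,j$ with $w_i+w_j+i+j\equiv 1\pmod 2$ (the only strings failing this for all pairs are the two strictly alternating ones). With this in hand, the tree construction goes through with a purely local rule: label edges by elements of $\Eq_n(A)$; a vertex seeing a consistent label $w$ outputs $w_i$; a vertex $i$ seeing inconsistent labels outputs $w_j+i+j+1\bmod 2$, where $j$ is a distinguished child and $w$ the label of the edge $\{i,j\}$. Taking $i$ deepest among vertices seeing a disagreement, its child $j$ sees a consistent label and outputs $w_j$, so the pair $(i,j)$ satisfies $x_i+x_j+i+j\equiv 1$, certifying $x\in\Eq_n(A)$ \emph{regardless of whether $i$ and $j$ are consecutive}. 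Your instinct that a nonlinear rule modeled on the $\NC_I$ construction is needed was right, but without the parity certificate the construction cannot be completed; note also that your observation that a single affine rule covers only $2^{n-1}$ of the $2^n-2$ strings is correct and is precisely why the rule must branch on whether the incident labels agree.
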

\begin{proof}
Assume that~$A=\{0,1\}$. $\Eq_n(A)$ is irreducible so every complex generating~$\Eq_n(A)$ must be connected, hence contain a spanning tree. Conversely, observe that~$w\in \Eq_n(A)\iff \exists i,j,w_i+w_j+i+j=1\bmod 2$. Indeed, if~$w\notin \Eq_n(A)$ then~$\forall i,w_i=i\bmod 2$ or~$\forall i,w_i=i+1\bmod 2$, and in both case~$w_i+w_j+i+j=0\bmod 2$ for all~$i,j$.

Let~$T$ be a rooted spanning tree and consider the following procedure. Each edge is labeled with an element of~$\Eq_n(A)$. The rule of a vertex~$i$ is as follows: if all its incident edges have the same label~$w$, then~$i$ takes value~$w_i$; otherwise let~$w$ be the label on the edge between~$i$ and its leftmost child~$j$, and let~$i$ take value~$w_j+i+j+1\bmod 2$.

Every element~$w$ of~$\Eq_n(A)$ is reached, by assigning~$w$ to all the edges. Conversely, if all the edges have the same value~$w$, then the output is~$w$ which belongs to~$\Eq_n(A)$. Otherwise, let~$i$ be a deepest vertex whose incident edges do not have the same value (deepest means that its descendants do not have this property). Observe that~$i$ is not a leaf. As its leftmost child~$j$ sees the same value~$w$ on its incident edges, it takes value~$w_j$. As~$i$ does not see the same values on its incident edges, it takes value~$w_j+i+j+1\bmod 2$. As a result, the output~$x$ satisfies~$x_i+x_j+i+j=2(w_j+i+j)+1=1\bmod 2$, so~$x\in \Eq_n(A)$.
\end{proof}

\subsubsection{More letters}

The proof of Proposition \ref{prop_two_letters} does not work with more than two letters, and indeed we show that not every spanning tree generates~$\Eq_n(A)$. We give partial results, leaving open the problem of finding a complete characterization of the complexes generating~$\Eq_n(A)$. We conjecture that all the minimal complexes generating~$\Eq_n(A)$ are trees.

For~$n=3$ and~$|A|\geq 2$, it is easy to see that the only minimal complexes generating~$\Eq_n(A)$ are the trees, because the language is irreducible and any pair of distinct positions~$i,j$ are independent.

However, the landscape starts to change for~$n\geq 4$ and~$|A|\geq 3$. We do not know whether it changes only from~$|A|=2$ to~$|A|=3$, or whether it keeps changing as the size of~$A$ grows. Observe that the family of complexes generating~$\Eq_n(A)$ can only decrease as~$A$ grows: if~$|A|\leq |A'|$ and a complex~$K$ generates~$\Eq_n(A')$, then~$K$ generates~$\Eq_n(A)$: one can compose any generation procedure for~$\Eq_n(A')$ with a surjective function~$f:A'\to A$ applied component-wise.

\subsubsection{Two counter-examples}
For~$n=4$ and~$|A|\geq 3$, we show that the following two trees do not generate~$L_4(A)$. At the end of Section \ref{sec_consec_sufficient}, we will explain why they are the only ones that do not generate~$L_4(A)$, up to symmetry by reflection~$i\mapsto 3-i$.

\begin{proposition}\label{prop_tree1}
The tree in Figure \ref{fig_ex1} does not generate~$L_4(A)$, if~$|A|\geq 3$.
\end{proposition}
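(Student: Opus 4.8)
The plan is to reduce to a three‑letter alphabet and then extract enough rigidity from a \emph{canonical} generation procedure to reach a numerical impossibility. First I invoke the monotonicity observed above (a complex generating $\Eq_n(A')$ generates $\Eq_n(A)$ whenever $|A|\le|A'|$): it suffices to show that the tree $T$ of Figure~\ref{fig_ex1} does not generate $\Eq_4(A)$ when $|A|=3$, since this then propagates to every larger alphabet. So fix $A=\{0,1,2\}$ and suppose for contradiction that $T$ generates $\Eq_4(A)$.

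Next I apply Proposition~\ref{prop_canonical} to obtain a canonical procedure $f\colon \Eq_4(A)^E\to A^{I_4}$ whose input cells are the three edges of $T$, in which each output cell reads exactly its incident edges, $K_f\subseteq T$, and $f(t,\dots,t)=t$ for all $t\in\Eq_4(A)$. Since every proper projection of $\Eq_4(A)$ is full, $\Eq_4(A)$ is irreducible, so by Proposition~\ref{prop_irreducible} the complex $K_f$ is connected; being contained in the tree $T$, it must equal $T$, and every output cell genuinely depends on every incident edge. Now exploit the two leaves of $T$: for a leaf $\ell$, the function $f_\ell$ reads only the incident edge, and $f(t,\dots,t)=t$ forces $f_\ell$ to return the $\ell$‑th letter of the word written on that edge. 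Substituting this into $\im f\subseteq\Eq_4(A)$ turns the condition into an explicit \emph{safety disjunction}: for all words $u,v,w\in\Eq_4(A)$ written on the three edges, at least one of three prescribed equalities between leaf‑letters of $u,v,w$ and the two internal outputs holds. (Here $v$ is the word on the central edge, the one shared by the two internal vertices.)

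The combinatorial core is the analysis of this disjunction. Fixing $v$ and then pinning the two relevant leaf‑letters to distinct chosen values, and using that in a three‑letter alphabet forbidding two letters pins down the third, one shows that for every $v$ either two incompatible letters are simultaneously forced as the value of one internal output on a nonempty family of inputs — an immediate contradiction — or that internal output depends on the adjacent edge word only through a single letter of it, through a \emph{fixed‑point‑free} $3$‑cycle $\rho_v$ of $A$. Finally I specialize both edge words to a constant word $aaaa$: the two letters of $aaaa$ involved in the relation coincide, so $f(aaaa,\dots,aaaa)=aaaa$ degenerates to $\rho_{aaaa}(a)=a$, contradicting the fixed‑point‑freeness of $\rho_{aaaa}$. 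Hence $T$ does not generate $\Eq_4(A)$, and by monotonicity it fails for every $|A|\ge 3$.

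The step I expect to be the real work is the case analysis of the safety disjunction in the last paragraph: tracking, for each choice of the two ``middle'' letters, which of the three equalities can be salvaged, so as to deduce the cyclic structure of the internal output, and verifying that all the value‑sets appearing are nonempty — which holds because the constant words all lie in $\Eq_4(A)$. The rest is routine bookkeeping around the canonical procedure and the irreducibility of $\Eq_4(A)$.
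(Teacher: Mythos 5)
Your argument is correct, but it follows a genuinely different route from the paper's. The paper's proof is a four-line exchange argument: starting from an input $(x,y,z)$ with output $baac$, it swaps $x$ for an $x'$ forcing position $2$ to output $b$ and $z$ for a $z'$ forcing position $1$ to output $c$; each single swap pins the adjacent internal output to $a$ (since the result must stay in $\Eq_4(A)$), and the double swap then yields $acba\notin\Eq_4(A)$. Your proof instead reduces to $|A|=3$, passes to the canonical procedure of Proposition \ref{prop_canonical}, observes that the two leaves $1$ and $2$ are forced to output the corresponding letter of their unique incident edge word, and extracts from the membership condition the clause system ``for all $a\neq b$: either $F_0(\cdot,v)$ is constantly $b$ on inputs with $u_2=a$, or $F_3(v,\cdot)$ is constantly $a$ on inputs with $w_1=b$''; the counting you sketch (at most one forced value per letter on each side, at least one surviving clause per letter) does force each internal output to be a fixed-point-free permutation applied to a single letter of the adjacent edge word, and the diagonal property $f(aaaa,aaaa,aaaa)=aaaa$ then gives the contradiction $\rho_{aaaa}(a)=a$. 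I checked that the classification you assert in the ``combinatorial core'' is indeed derivable exactly as you indicate, so there is no gap, only a deliberately compressed case analysis. Two minor remarks: the irreducibility/connectivity step is superfluous (you only need the inclusion of each window in the set of incident edges, which the canonical procedure already gives), and in the last step you need all three edge words equal to $aaaa$ to invoke the diagonal property, not just two. Comparing the two approaches: the paper's is shorter, needs no reduction on the alphabet size, and works verbatim for any $|A|\geq 3$; yours is heavier but buys a complete structural description of every hypothetical procedure on this tree, which makes the impossibility conceptually transparent and could be reusable for classifying which trees generate $\Eq_n(A)$.
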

\begin{figure}[ht]
\centering
\begin{tikzpicture}
    \node (a) [state, minimum size = 1mm] {$0$};
    \node (b) [state, right = 5mm of a, minimum size = 1mm] {$1$};
    \node (c) [state, right = 5mm of b, minimum size = 1mm] {$2$};
    \node (d) [state, right = 5mm of c, minimum size = 1mm] {$3$};

\path
    (a) edge [bend left=50] node [above] {$y$} (d)
    (b) edge [bend left] node[above] {$z$} (d)
    (a) edge [bend right=35] node[below] {$x$} (c);
    \end{tikzpicture}
\caption{A tree that does not generate~$L_4$ on three letters}\label{fig_ex1}
\end{figure}
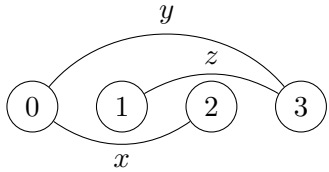
\begin{proof}
We can assume that~$A$ contains~$\{a,b,c\}$. Assume for a contradiction that~$F$ generates~$L_4(A)$ and~$K_F$ is contained in the tree.

Let~$xyz$ be such that~$F(xyz)=baac$. Let~$x',z'$ be such that~$F_2(x')=b$ and~$F_1(z')=c$. It implies that~$F(x'yz)=aabc$ and~$F(xyz')=bcaa$. Therefore~$F(x'yz')=acba\notin L_4(A)$, which is a contradiction.
\end{proof}

We give the second one.
\begin{proposition}\label{prop_tree2}
The tree in Figure \ref{fig_ex2} does not generate~$L_4(A)$, if~$|A|\geq 3$.
\end{proposition}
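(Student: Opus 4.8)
The plan is to follow the proof of Proposition~\ref{prop_tree1} almost verbatim, adjusting only the word that is produced and the two edge-labels that get re-written so that they match the tree $T$ of Figure~\ref{fig_ex2}. Suppose for contradiction that $F$ generates $L_4(A)$ with $K_F\subseteq T$, and fix three pairwise distinct letters $a,b,c\in A$ (available since $|A|\geq 3$). From $K_F\subseteq T$ we read off, for each output cell $i$, that $\W_F(i)$ is contained in the union of the cells carried by the edges of $T$ incident to $i$; in particular, re-writing the label of an edge $\{p,q\}\in T$ can only change the outputs at $p$ and $q$, and a leaf cell of $T$ outputs a function of its single incident label. We also use that $F$ restricted to any single output cell is onto $A$: indeed $\im{F}=L_4(A)$ already contains, for each coordinate and each value, a string realising it (e.g.\ strings of the shape $(v,s,s,t)$, $(t,s,s,v)$, $(v,v,s,t)$, $(t,v,v,s)$).

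Next I would fix a convenient starting point: an input $\tuple{x,y,z}$ (one label per edge of $T$) with $F(x,y,z)=w$, where $w\in L_4(A)$ is chosen so that its forced equality of consecutive letters sits at the pair of positions that are the leaves of $T$, and the two remaining coordinates carry two of the distinct letters — in Proposition~\ref{prop_tree1} this is $w=baac$, whose equality is at positions $1,2$, exactly the leaves of the tree of Figure~\ref{fig_ex1}. Then I re-write the label on the edge of $T$ incident to one leaf, choosing the new label so that this leaf now outputs a third distinct letter. This destroys the equality at the leaf pair; the three other coordinates of the new output now carry three pairwise distinct letters, so membership of the new output in $L_4(A)$ forces the coordinate at the far endpoint of the re-written edge (the only surviving way to create two equal consecutive letters pins that value). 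Performing the same operation, independently, on the edge incident to the other leaf forces, in the same way, the coordinate at its far endpoint.

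Finally I run both re-writings at once. Since the two re-written edges are vertex-disjoint and together meet all four output cells, the resulting output agrees cell-by-cell with one of the two single-edge modifications; hence its two leaf coordinates carry the two freshly installed letters and its two non-leaf coordinates carry the two forced letters. As in Proposition~\ref{prop_tree1}, where the resulting string is $acba$, a one-line check shows that this string has no two equal consecutive letters, contradicting $\im{F}=L_4(A)$. The window book-keeping and the counting of letters supplied by $|A|\geq 3$ are routine; the only step specific to Figure~\ref{fig_ex2} — and the reason this tree, like that of Figure~\ref{fig_ex1}, fails while the remaining spanning trees on four vertices generate $L_4(A)$ — is checking that, for \emph{this} placement of the leaves of $T$ relative to the linear order $0<1<2<3$ defining $L_4(A)$, the two forced coordinates really do combine into a string with no repeated consecutive letter. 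I expect this verification to be the main (though very short) obstacle; everything else is copied from the argument for Figure~\ref{fig_ex1}.
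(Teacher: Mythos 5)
Your approach is exactly the paper's: start from a word whose forced equality sits at the two leaves of the tree, rewrite each leaf's edge to install a fresh letter, use membership in $L_4(A)$ to pin the value at the far endpoint of each rewritten edge, and superpose the two rewrites to get an invalid output. The paper instantiates this with $F(xyz)=abaa$, $F_2(x')=c$, $F_3(z')=b$, deducing $F(x'yz)=bbca$, $F(xyz')=aaab$, and hence $F(x'yz')=bacb\notin L_4(A)$.

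One step of your recipe is stated incorrectly for this tree and needs a specific fix. The leaves of the tree in Figure \ref{fig_ex2} are positions $2$ and $3$, and the far endpoint of the edge $z=\{1,3\}$ is position $1$, which (unlike the internal vertices of Figure \ref{fig_ex1}) has \emph{two} string-neighbors, namely $0$ and $2$. After rewriting $z$ the output is $a\,{?}\,a\,b$, so the "three other coordinates" are $a,a,b$ --- not pairwise distinct as you claim --- and position $1$ is pinned to $a$ not because there is a unique surviving adjacent pair, but because both of its neighbors $0$ and $2$ carry the same letter. This forcing only works if the starting word satisfies $w_0=w_2$, a constraint your recipe ("the two remaining coordinates carry two of the distinct letters") does not impose: had you started from $w=cbaa$ with $a,b,c$ distinct, the rewritten output $c\,{?}\,a\,b$ would leave position $1$ free to be $c$ or $a$, and with $|A|=3$ the contradiction no longer follows. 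So the word must be chosen as in the paper ($abaa$, with positions $0$ and $2$ equal); with that choice your argument goes through verbatim and the final string $bacb$ has no repeated consecutive letters as required.
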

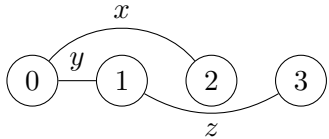
\begin{figure}[ht]
\centering
\begin{tikzpicture}
    \node (a) [state, minimum size = 1mm] {$0$};
    \node (b) [state, right = 5mm of a, minimum size = 1mm] {$1$};
    \node (c) [state, right = 5mm of b, minimum size = 1mm] {$2$};
    \node (d) [state, right = 5mm of c, minimum size = 1mm] {$3$};

\path
    (a) edge node [above] {$y$} (b)
    (a) edge [bend left=50] node[above] {$x$} (c)
    (b) edge [bend right] node[below] {$z$} (d);
    \end{tikzpicture}
\caption{This graph does not generate~$L_4$ on three letters}\label{fig_ex2}
\end{figure}
\begin{proof}
We can assume that~$A$ contains~$\{a,b,c\}$. Assume for a contradiction that~$F$ generates~$L_4(A)$ and~$K_F$ is contained in the tree.

Let~$xyz$ be such that~$F(xyz)=abaa$. Let~$x',z'$ be such that~$F_2(x')=c$ and~$F_3(z')=b$. One has~$F(x'yz)=bbca$, and~$F(xyz')=aaab$. Therefore,~$F(x'yz')=bacb\notin L_4(A)$.
\end{proof}

\subsubsection{A sufficient condition}\label{sec_consec_sufficient}
The same strategy as in Proposition \ref{prop_two_letters} can be applied to certain trees.
\begin{proposition} \label{prop_sufficient}
Let~$T$ be a rooted spanning tree such that each~$i\in I_n$ which is not a leaf has a descendant which is~$i-1$ or~$i+1$. Then~$T$ generates~$\Eq_n(A)$ for any~$A$. 
\end{proposition}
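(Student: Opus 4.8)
The plan is to generalize the procedure used in Proposition \ref{prop_two_letters}, replacing the modular-arithmetic trick (which is special to the binary alphabet) by a direct argument that forces two equal consecutive letters using the tree structure. Fix a rooted spanning tree $T$ satisfying the hypothesis: every non-leaf vertex $i$ has a descendant equal to $i-1$ or $i+1$. The generation procedure assigns an element of $\Eq_n(A)$ to each edge of $T$, and each vertex $i$ reads its incident edges. If all incident edges carry the same label $w\in\Eq_n(A)$, then $i$ outputs $w_i$; otherwise $i$ applies some local rule that I will design so that, whenever $i$ is the deepest vertex at which disagreement occurs, the output $x$ satisfies $x_i=x_j$ for a suitable neighbor position $j\in\{i-1,i+1\}$.

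The key observation is the following. Suppose the input does not assign the same label to all edges; let $i$ be a deepest vertex whose incident edges disagree (its strict descendants all see a single label on their incident edges). Then $i$ is not a leaf, so by hypothesis $i$ has a descendant vertex $v$ with $v\in\{i-1,i+1\}$; moreover every vertex strictly below $i$, in particular $v$ and all vertices on the tree-path from $i$ down to $v$, sees a single label on its incident edges. I would arrange the local rules so that the labels seen along that downward path all coincide --- this is automatic if we route through the child $c$ of $i$ lying on the path to $v$: the edge $(i,c)$ carries some label $w$, and since $c$ and all deeper vertices agree on their incident edges, the label on every edge of the path from $c$ down to $v$ is this same $w$ (because consecutive edges share a vertex which sees a unique label). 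Hence $v$ outputs $w_v$. Now I define the rule at a disagreeing vertex $i$: pick the child $c$ of $i$ through which the guaranteed descendant $v\in\{i-1,i+1\}$ is reached (fixed once and for all with $T$), read the label $w$ on $(i,c)$, and set $i$'s output to $w_v$. Then $x_i=w_v=x_v$ with $v=i\pm 1$, so $x\in\Eq_n(A)$. Every $w\in\Eq_n(A)$ is still produced by labeling all edges with $w$.

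The step that needs the most care is verifying well-definedness and the disagreement-propagation claim: that the label on the edge $(i,c)$ genuinely equals the label on every edge of the path from $c$ down to $v$, and that a ``deepest disagreeing vertex'' exists and is non-leaf. The existence is the same leaf-argument as in the proofs of Proposition \ref{prop_two_letters} and the non-constant sequences proposition: follow descendants of a disagreeing vertex; since leaves cannot disagree (degree one), the process stabilizes at a non-leaf $i$ all of whose descendants agree. The propagation is a short induction along the path $c=u_0,u_1,\dots,u_k=v$: $u_0$ sees the label $w$ on $(i,u_0)$ and, since $u_0$ agrees, it sees $w$ on $(u_0,u_1)$ too; inductively $u_t$ sees $w$ on $(u_{t-1},u_t)$ hence on $(u_t,u_{t+1})$. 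One also checks the rule at $i$ only depends on $i$'s incident edges (it reads the label on $(i,c)$ and extracts a fixed coordinate $v$), so $\W_f(i)$ is contained in the set of edges incident to $i$ and $K_f\subseteq T$ as required. A final small point: one must confirm that when all incident edges of $i$ agree the two branches of the rule are consistent, i.e.\ outputting $w_i$ versus $w_v$ --- but in that global-disagreement-free situation we are in the first branch everywhere and output exactly the common label $w$, so no conflict arises; the second branch is invoked only at genuinely disagreeing vertices. This completes the construction, and since every spanning tree satisfying the hypothesis generates $\Eq_n(A)$, while connectedness is necessary by irreducibility of $\Eq_n(A)$ (Proposition \ref{prop_irreducible}), such trees are among the minimal generating complexes.
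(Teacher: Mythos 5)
Your proof is correct and takes essentially the same approach as the paper's, which adapts the rule of Proposition \ref{prop_two_letters} verbatim: at a vertex $i$ whose incident edges disagree, output $w_v$ where $v\in\{i-1,i+1\}$ is the guaranteed descendant and $w$ is the label on the edge from $i$ toward $v$. The details you spell out (existence of a deepest non-leaf disagreeing vertex, and the propagation of the common label down the path from $i$ to $v$) are exactly the steps the paper leaves implicit by referring back to the earlier proof.
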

\begin{proof}
We apply the same strategy as in Proposition \ref{prop_two_letters}: if the edges that are incident to a vertex~$i$ do not have the same values, then~$i$ takes value~$w_j$, where~$j=i-1$ or~$i+1$ is a descendant of~$i$ and~$w$ is the edge starting from~$i$ and going towards~$j$.
\end{proof}

However, this condition is not necessary. Among the trees on~$I_4$, all of them satisfy the condition of Proposition \ref{prop_sufficient} except three: the ones from Propositions \ref{prop_tree1} and \ref{prop_tree2}, which do not generate~$L_4$, and the next one which does.
\begin{proposition}
The graph in Figure \ref{fig_graph2} generates~$L_4(A)$ for any~$A$.
\end{proposition}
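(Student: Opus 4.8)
The plan is to exhibit an explicit generation procedure on this tree, which is the tree with edges $\{0,2\}$, $\{1,2\}$ and $\{1,3\}$ (the path $0-2-1-3$). I would let each of these three edges carry a label in $\Eq_4(A)$, written $x$ on $\{0,2\}$, $y$ on $\{1,2\}$ and $z$ on $\{1,3\}$, and write $x_i$ for the $i$-th letter of $x$, and likewise for $y,z$. Output cell $0$ sees only $x$ and output cell $3$ sees only $z$, so I set $F(x,y,z)_0=x_0$ and $F(x,y,z)_3=z_3$. The internal cells $1$ and $2$ both see the middle label $y$, while $2$ also sees $x$ and $1$ also sees $z$. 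The guiding idea is this: since cells $0$ and $1$ share no input cell (and neither do $2$ and $3$), the only equality $w_i=w_{i+1}$ that cells $1$ and $2$ can enforce by communicating is $w_1=w_2$, so both of them should fall back on copying a common letter of $y$ whenever the input does not already look uniform to them. Concretely, for $y\in\Eq_4(A)$ let $m(y)=2$ if $y_0\neq y_1$ and $y_1\neq y_2$ --- in which case $y_2=y_3$, because $y\in\Eq_4(A)$ --- and $m(y)=1$ otherwise; then set $F(x,y,z)_2=x_2$ if $x=y$ and $y_{m(y)}$ otherwise, and $F(x,y,z)_1=y_1$ if $y=z$ and $y_{m(y)}$ otherwise.

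I would then check the two inclusions. Surjectivity onto $\Eq_4(A)$ is immediate: assigning the same word $w\in\Eq_4(A)$ to all three edges yields output $(w_0,w_1,w_2,w_3)=w$. For soundness, I split on whether $x=y$ and whether $y=z$. If both equalities hold, the output is $y\in\Eq_4(A)$. If neither holds, cells $1$ and $2$ both output $y_{m(y)}$, so $w_1=w_2$. If $x=y\neq z$, then $w_0=y_0$, $w_2=y_2$ and $w_1=y_{m(y)}$: when $m(y)=2$ this gives $w_1=y_2=w_2$, and when $m(y)=1$ the definition of $m$ forces $y_0=y_1$ or $y_1=y_2$, hence $w_0=w_1$ or $w_1=w_2$. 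The case $x\neq y=z$ is the mirror image: $w_1=y_1$, $w_3=y_3$ and $w_2=y_{m(y)}$, and either $m(y)=1$ gives $w_2=y_1=w_1$, or $m(y)=2$ gives $y_2=y_3$, hence $w_2=y_2=y_3=w_3$. So the output always lies in $\Eq_4(A)$.

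Finally, output cell $0$ reads only the input cell of edge $\{0,2\}$, cell $3$ only that of $\{1,3\}$, cell $2$ those of $\{0,2\}$ and $\{1,2\}$, and cell $1$ those of $\{1,2\}$ and $\{1,3\}$; hence no input cell is read by more than two output cells, and the pairs of cells sharing an input cell are exactly the three edges, so $K_F$ is contained in the tree. The only genuinely nontrivial step is designing the rules: one must notice that cells $1$ and $2$ have no reliable access to the leaf letters $x_0$ and $z_3$, so the procedure has to steer toward $w_1=w_2$, and then pick the single index $m(y)$ so that in the two "almost uniform" cases --- where one of cells $1,2$ still wants to reproduce $y$ faithfully --- the letter it is forced to emit nonetheless collides with a neighboring output letter, exactly because $y\in\Eq_4(A)$. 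Choosing $m(y)$ according to whether the repetition witnessing $y\in\Eq_4(A)$ sits at positions $\{0,1\}$ or $\{1,2\}$, or only at $\{2,3\}$, is what makes one index work for both cells at once; after that the verification is routine.
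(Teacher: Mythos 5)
Your proof is correct: I checked the four cases (both, neither, or exactly one of the equalities $x=y$, $y=z$ holding) and in each one the output does land in $\Eq_4(A)$, surjectivity follows from assigning the same word to all three edges, and the dual windows of the three input cells are exactly the three edges of the tree, so $K_F$ is contained in it. The overall strategy is the same as the paper's --- leaf cells copy their single visible edge, and the two internal cells behave faithfully when their two visible edges agree and otherwise take a corrective action --- but the corrective mechanism is genuinely different. The paper uses a smaller input alphabet ($A$ on the leaf edges, $A\times A\times\{1,2\}$ on the middle edge) and lets the input explicitly designate which of cells $1,2$ is ``in charge''; the designated cell then compares its intended neighbour value against the leaf value it can see ($z$ for cell $1$, $x$ for cell $2$) and copies its neighbour when they disagree, actively creating the repeat $w_1=w_2$. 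You instead stay with the canonical alphabet $\Eq_4(A)$ and precompute from the middle label a single ``safe'' index $m(y)\in\{1,2\}$, chosen according to where the repetition witnessing $y\in\Eq_4(A)$ sits, such that whichever internal cell falls back to $y_{m(y)}$ is guaranteed to collide either with the other internal cell or with the faithfully reproduced neighbour. Your version is arguably cleaner to verify and needs no extra selector bit in the input; the paper's version buys a more economical input alphabet and makes the division of responsibility between cells $1$ and $2$ explicit. Both establish the proposition.
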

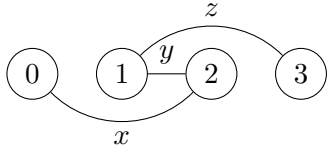
\begin{figure}[ht]
\centering
\begin{tikzpicture}
    \node (a) [state, minimum size = 1mm] {$0$};
    \node (b) [state, right = 5mm of a, minimum size = 1mm] {$1$};
    \node (c) [state, right = 5mm of b, minimum size = 1mm] {$2$};
    \node (d) [state, right = 5mm of c, minimum size = 1mm] {$3$};

\path
    (a) edge [bend right=45] node [below] {$x$} (c)
    (b) edge node[above] {$y$} (c)
    (b) edge [bend left=45] node[above] {$z$} (d);
    \end{tikzpicture}
\caption{A tree that generates~$L_4$ on three letters}\label{fig_graph2}
\end{figure}
Note that for every choice of a root,~$b$ or~$c$ is child of the other, and has no descendant which is a consecutive position.
\begin{proof}
The edges~$(0,2)$ and~$(1,3)$ take values in~$A$ and give their values to~$0$ and~$3$ respectively. The edge~$(1,2)$ takes values in~$A\times A\times \{1,2\}$. The idea is that such a value~$(u,v,w)$ gives possible values~$u$ and~$v$ to~$1$ and~$2$ respectively, but also chooses a vertex~$w\in\{1,2\}$ that is in charge of making sure that the output is correct (and the other vertex simply takes its assigned value). For instance if~$1$ is chosen, as it knows the value of~$3$ by reading~$z$, it takes its assigned value~$u$ only if~$2$ and~$3$ take the same value, otherwise it takes the same value as~$2$. If~$2$ is chosen, then the symmetric strategy is applied by~$2$, which sees the value of~$0$.

Precisely, the rules are:
\begin{align*}
F_0(x)&=x,\\
F_1(y,z)&=\begin{cases}
u&\text{if $y=(u,v,1)$ with~$v=z$},\\
v&\text{if $y=(u,v,1)$ with $v\neq z$},\\
u&\text{if }y=(u,v,2),
\end{cases}\\
F_2(x,y)&=\begin{cases}
v&\text{if $y=(u,v,2)$ with~$u=x$},\\
u&\text{if $y=(u,v,2)$ with $u\neq x$},\\
v&\text{if }y=(u,v,1),
\end{cases}\\
F_3(z)&=z.
\end{align*}

If~$y=(u,v,1)$ then
\[
F(x,y,z)=\begin{cases}
xuzz&\text{if~$v=z$},\\
xvvz&\text{if $v\neq z$},
\end{cases}
\]
and symmetrically, if~$y=(u,v,2)$ then
\[
F(x,y,z)=\begin{cases}
xxvz&\text{if~$u=x$},\\
xuuz&\text{if $u\neq x$}.
\end{cases}
\]
We see that all the outputs belong to~$L_4(A)$, and that every element of~$L_A(A)$ can be obtained.
\end{proof}

For~$n=4$, one can check that the trees that differ from the tree examples presented above all satisfy the condition of Proposition \ref{prop_sufficient}, so they generate~$L_4(A)$:
\begin{itemize}
\item If a tree has a vertex of degree~$3$ then it can be rooted in that vertex and satisfies the condition of Proposition \ref{prop_sufficient},
\item Otherwise it is a path~$(v_0,v_1,v_2,v_3)$. If~$v_0$ and~$v_1$ are consecutive then it can be rooted at~$v_1$, and symmetrically if~$v_2$ and~$v_3$ are consecutive then it can be rooted at~$v_2$: in both cases Proposition \ref{prop_sufficient} can be applied,
\item Otherwise, up to symmetry one has~$\{v_0,v_1\}=\{0,2\}$ and~$\{v_2,v_3\}=\{1,3\}$ so it is one of the three examples presented above.
\end{itemize}

\subsubsection{Paths must connect some consecutive positions}
It seems natural to expect that if a tree or a more general graph generates~$\Eq_n(A)$, then it should connect at least two consecutive positions~$i$ and~$i+1$. We are only able to prove it in the case of paths when the alphabet is sufficiently large. A path is a tree made of one root, one leaf and a branch between them. Whether the result holds for general trees is left as an open question.

\begin{theorem}
Assume that~$|A|\geq 7$. If a path generates~$\Eq_n(A)$, then it contains an edge between two consecutive positions.
\end{theorem}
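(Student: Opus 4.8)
### Proof Plan

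\textbf{Setup and overall strategy.} Let $F : B^J \to A^n$ be a generation procedure for $\Eq_n(A)$ whose communication complex $K_F$ is contained in a path $P$ on vertex set $I_n$. Relabel so that the path visits the vertices in the order $v_0, v_1, \ldots, v_{n-1}$; write $e_k = \{v_k, v_{k+1}\}$ for its edges. By the canonical generation procedure (Proposition~\ref{prop_canonical}) we may assume $J$ is the set of edges of $P$, each $\W^F(e_k) \subseteq \{v_k, v_{k+1}\}$, and $F(w,\ldots,w) = w$ for every $w \in \Eq_n(A)$. Assume for contradiction that no edge of $P$ joins consecutive positions, i.e.\ $|v_k - v_{k+1}| \geq 2$ for all $k$. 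The plan is to build, by a ``cut and paste'' argument along the path, an input assignment on which $F$ outputs a string with \emph{all} consecutive pairs distinct, contradicting that $\im F \subseteq \Eq_n(A)$.

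\textbf{The core local lemma.} The heart of the argument is a statement of the following shape: because $v_k$ and $v_{k+1}$ are not consecutive integers, their values in the output are not directly forced to agree, and we can use the abundance of letters ($|A| \geq 7$) to ``desynchronize'' them. Concretely, I would prove: for any partial input already fixed on all edges except $e_k$, and for the two cells $v_k, v_{k+1}$ reading $e_k$, one can choose a value for $e_k$ so that $v_k$ and $v_{k+1}$ receive distinct letters, \emph{and} moreover these letters can be made to differ from a bounded number ($\leq 2$ or $3$) of previously-named ``forbidden'' letters. The point of the bound $7$ is exactly that at each cut we must avoid the value of one neighbor position on the left, one on the right, and the value we are trying to make $v_k, v_{k+1}$ take — a constant number of constraints, so $7$ letters suffice. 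This uses that $F(w,\ldots,w)=w$ to control what each cell outputs when all its incident edges agree, together with the fact that if $v_k$ could only ever output one fixed letter then feeding that everywhere would already break things.

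\textbf{Propagating along the path.} With the local lemma in hand, I would process the edges $e_0, e_1, \ldots, e_{n-2}$ in order. Start from an input realizing some fixed $w \in \Eq_n(A)$, then successively re-choose the value on $e_k$ using the local lemma so that, after stage $k$, the cells $v_0, \ldots, v_{k+1}$ have been pinned to pairwise-distinct-along-the-path values — more precisely, so that for every $j \le k$ the output values of $v_j$ and $v_{j+1}$ differ. A cell $v_j$ reads only edges $e_{j-1}$ and $e_j$, so once we stop touching those two edges its output value is frozen; this is what makes the induction well-founded. The ``forbidden letter'' bookkeeping ensures that re-choosing $e_k$ does not accidentally create a new equal consecutive pair somewhere among positions already handled. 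At the end, every consecutive pair $\{i, i+1\}$ of $I_n$ appears as some $\{v_j, v_{j+1}\}$ along the path, so the final output has no two equal consecutive letters, i.e.\ it lies outside $\Eq_n(A)$ — contradiction.

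\textbf{Main obstacle.} The delicate point is the bookkeeping in the propagation step: when I re-choose the value on $e_k$, the cell $v_{k+1}$ also reads $e_{k+1}$, which at that moment still carries its ``old'' value; and the cells $v_k, v_{k+1}$ sit at arbitrary positions in $I_n$, so their consecutive-position neighbors may be cells handled much earlier or much later. I expect the real work is in formulating the induction hypothesis precisely — tracking, for each already-frozen cell, which letter it holds and which letters future cuts must avoid — so that the constant number of constraints at each step genuinely stays bounded by $6$ (leaving one free letter, hence the hypothesis $|A| \ge 7$). A secondary subtlety is the base case / the behavior of leaves of the path, where a cell reads only one edge; there the local lemma degenerates and must be checked separately using $F(w,\ldots,w)=w$ and the non-triviality of that cell's output.
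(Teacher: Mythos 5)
Your overall strategy --- greedily fixing the edge values $e_0,e_1,\dots$ in path order and steering each newly frozen cell away from a bounded set of forbidden letters --- has a gap at its core local lemma, and that gap is exactly the hard part of the theorem. At stage $k$ the value of $v_k$ is $F_{v_k}(\alpha,\beta)$, where $\alpha$ is the already-committed value on $e_{k-1}$ and $\beta$ is the value you are now free to choose on $e_k$. Nothing guarantees that $\beta\mapsto F_{v_k}(\alpha,\beta)$ takes three or more values for the particular $\alpha$ you committed to earlier: it may be constant, or take exactly the two values you need to avoid, and then the greedy construction is stuck with no way to backtrack. The identity $F(w,\dots,w)=w$ only controls outputs on \emph{diagonal} inputs and says nothing about the range of $F_{v_k}(\alpha,\cdot)$ for a fixed off-diagonal $\alpha$. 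The paper's proof is built precisely to circumvent this: instead of committing to input values, it peels vertices off the path while maintaining the invariant that the restricted procedure still produces \emph{every} sequence on the remaining vertices minus one as a restriction of a valid output (the ``$v$-good'' property of Lemma \ref{lem_branch} and Lemma \ref{lem_uv}), and the step where a removed vertex is replaced by a single fixed letter requires showing that one letter $a$ works for \emph{all} extensions $\beta$ simultaneously. That is done via a Helly-type lemma for $2$-element sets (Lemma \ref{lem_combi}, reducing to triples) together with the construction of an invalid sequence on the removed branch avoiding up to $6$ letters --- which is where $|A|\ge 7$ actually enters. Your accounting of ``one neighbour on the left, one on the right'' would only require $|A|\ge 4$ or so, a sign that the real constraint has not been located.

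A second, more mechanical error: your concluding sentence asserts that ``every consecutive pair $\{i,i+1\}$ of $I_n$ appears as some $\{v_j,v_{j+1}\}$ along the path,'' but that is the \emph{negation} of the standing hypothesis (no edge of the path joins consecutive positions). Making path-adjacent cells $v_j,v_{j+1}$ take distinct values does nothing by itself, since the validity condition of $\Eq_n(A)$ concerns positions adjacent in $I_n$, not on the path. The invariant you need to carry is about the $I_n$-neighbours of the frozen cells, whose freezing times are scattered arbitrarily along the path order; you gesture at this with the forbidden-letter bookkeeping, but as written the induction hypothesis mixes the two adjacency structures and the final contradiction does not follow.
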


%
%
%

We need to introduce some simple vocabulary. Let~$V$ be a subset of~$I_n$. A \textbf{$V$-sequence} is a function~$w:V\to A$. It is \textbf{valid} if there exists~$i$ such that both~$i$ and~$i+1$ belong to~$V$, and~$w(i)=w(i+1)$. In particular, if~$V$ does not contain any pair of consecutive positions, then there is no valid~$V$-sequence.

Ideally, we would like to remove the vertices one by one, so that at each step the remaining subpath has a generation procedure producing the valid partial sequences. We cannot quite achieve this, but almost.

\begin{definition}
For~$v\in V$, we say that a complex~$T$ over~$V$ is \textbf{$v$-good} if it is has a generation procedure that only produces valid~$V$-sequences and such that every~$(V\setminus\{v\})$-sequence is the restriction of some produced~$V$-sequence.
\end{definition}
If~$T$ over~$[0,n-1]$ generates~$\Eq_n(A)$, then~$T$ is~$v$-good for any~$v\in [0,n-1]$.

\begin{lemma}
Let~$v\in V$. If a complex~$T$ over~$V$ is~$v$-good, then~$v-1$ or~$v+1$ belongs to~$V$.
\end{lemma}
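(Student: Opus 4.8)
The plan is to argue by contradiction: suppose that a complex $T$ over $V$ is $v$-good, but neither $v-1$ nor $v+1$ lies in $V$. The key observation is that if $v$'s only ``neighbors'' in $V$ are non-consecutive positions, then $v$ contributes nothing towards making a $V$-sequence valid, yet the $v$-good property forces $v$ to play a role. First I would fix a generation procedure $f:B^J\to A^V$ witnessing that $T$ is $v$-good, so that $\im{f}$ consists only of valid $V$-sequences and every $(V\setminus\{v\})$-sequence extends to some element of $\im{f}$. Since $|A|\geq 2$, the cell $v$ is not constant (it must take at least two values to cover all $(V\setminus\{v\})$-sequences when $|A|\geq 2$; more carefully, it must at least be able to produce two different extensions, so $\W_f(v)\neq\emptyset$, but this is not yet the crux).

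The heart of the argument is the following: because $v-1,v+1\notin V$, a $V$-sequence $w$ is valid if and only if its restriction $w\res{V\setminus\{v\}}$ is valid as a $(V\setminus\{v\})$-sequence — removing $v$ destroys no consecutive pair. Now pick any $(V\setminus\{v\})$-sequence $w'$ that is \emph{not} valid; such a sequence exists as soon as $V\setminus\{v\}$ is nonempty and $|A|\geq 2$ (assign alternating distinct values, or simply assign values avoiding any coincidence on consecutive positions — possible since $|A|\geq 2$ suffices when we only need to avoid a finite list of equalities; in the worst case one needs $|A|$ at least the length of the longest run of consecutive positions, but $V\setminus\{v\}$ inherits the no-hidden-structure we can exploit, and in any case the ambient hypothesis $|A|\geq 7$ from the theorem is more than enough). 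By the $v$-good property, $w'$ is the restriction of some produced $V$-sequence $w\in\im{f}$. Then $w$ is valid (as $\im{f}$ contains only valid sequences), but $w\res{V\setminus\{v\}}=w'$ is not valid, and we just observed these two validity conditions are equivalent when $v\pm1\notin V$. This is the contradiction.

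The step I expect to be the main obstacle is producing a non-valid $(V\setminus\{v\})$-sequence: this requires the alphabet to be large enough relative to the longest block of consecutive integers contained in $V\setminus\{v\}$. One should check that the hypothesis available in the surrounding theorem ($|A|\geq 7$, via the application of this lemma to successively shrinking subpaths of $[0,n-1]$) indeed guarantees this — a run of consecutive positions of length $\ell$ needs $\ell-1<\ell$ symbols hence $|A|\geq 2$ already breaks a single consecutive pair, and for longer runs a proper coloring along the path (which is just a path graph, $2$-colorable) still only needs $2$ colors; so in fact $|A|\geq 2$ suffices here, and the stronger bound $|A|\geq 7$ is needed elsewhere in the theorem's proof, not in this lemma. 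Once the non-valid sequence is in hand, the rest is the short contradiction above, using only the definition of $v$-good and the fact that deleting a position with no consecutive neighbor in $V$ preserves (non)validity.
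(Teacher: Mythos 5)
Your argument is correct and is essentially the paper's own proof, just phrased contrapositively: take an invalid $(V\setminus\{v\})$-sequence, extend it to a produced (hence valid) $V$-sequence, and observe that the coincidence of consecutive values must involve $v$, forcing $v-1$ or $v+1$ into $V$. Your worry about the existence of an invalid $(V\setminus\{v\})$-sequence resolves exactly as you eventually conclude ($|A|\geq 2$ suffices, by properly $2$-colouring each run of consecutive positions), a point the paper leaves implicit.
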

\begin{proof}
Take an invalid~$(V\setminus\{v\})$-sequence. It is the restriction of a produced~$V$-sequence; that produced~$V$-sequence is valid so it gives the same value to two consecutive positions. As its restriction to~$V\setminus\{v\}$ is invalid, one of the consecutive positions must be~$v$, so the other is~$v-1$ or~$v+1$.
\end{proof}

Let~$T$ be a tree and let~$v\in T$. The forest~$T\setminus \{v\}$ is a disjoint union of trees~$T_1,\ldots,T_k$, made of vertex sets~$V_1,\ldots,V_k$.
\begin{lemma}\label{lem_branch}
Assume that the tree~$T$ is~$v$-good over~$V$. If~$V_i$ contains neither~$v-1$ nor~$v+1$, then~$T\setminus V_i$ is~$v$-good over~$V\setminus V_i$.
\end{lemma}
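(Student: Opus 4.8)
The plan is to start from a generation procedure witnessing that $T$ is $v$-good over $V$, and to ``absorb'' the branch $T_i$ into a single new input cell. More precisely, let $f$ be a procedure on $T$ producing only valid $V$-sequences, with every $(V\setminus\{v\})$-sequence a restriction of a produced one. The branch $T_i$ attaches to the rest of the tree through a single vertex, say $v$ itself is the cut vertex; let $e$ be the unique edge of $T$ joining a vertex of $V_i$ to $v$ (note $e$ may not exist only if $k=1$, but then $V\setminus V_i=\{v\}$ and the statement is about $T\setminus V_i$ being $v$-good over the singleton $\{v\}$, which one checks directly). The idea is that, once the input cells lying strictly inside $T_i$ are fixed, the values of the vertices in $V_i$ are determined, and since $V_i$ contains neither $v-1$ nor $v+1$, no validity constraint couples $V_i$ with $v$ or with $V\setminus(V_i\cup\{v\})$ except through $v$ itself; but $v$'s value is anyway free (it is the one position allowed to spoil validity in a $(V\setminus\{v\})$-sequence).

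The key steps, in order, are as follows. First I would fix the cut vertex $u$ of the branch $T_i$ (the unique vertex of $T\setminus V_i$ adjacent to $V_i$); show that $u\neq v$ leads to a quick reduction, so assume $u=v$. Second, I would define a procedure $f'$ on $T\setminus V_i$ over $V\setminus V_i$ as follows: the input cells are those edges of $T$ not incident to $V_i$, together with the edge $e$ joining $V_i$ to $v$ (which survives in $T\setminus V_i$ as a half-edge at $v$, i.e.\ a vertex in the complex sense one should be careful here — actually $e$ has one endpoint in $V_i$, so it is not an edge of $T\setminus V_i$; instead I let $v$ read a fresh input cell carrying an element of $A$). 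The local rule of every vertex $i\in V\setminus(V_i\cup\{v\})$ is unchanged; the rule of $v$ reads its fresh cell and outputs its content. Third, I would verify that $f'$ produces only valid $(V\setminus V_i)$-sequences: given an input, extend it to an input of $f$ by filling the interior of $T_i$ arbitrarily and setting the original edge $e$ to carry a $V$-sequence whose $v$-coordinate equals the fresh cell's value; the output of $f$ restricted to $V\setminus V_i$ agrees with $f'$'s output on $V\setminus(V_i\cup\{v\})$, and since $f$'s output is a valid $V$-sequence that is either valid already on $V\setminus V_i$ or becomes valid only via a consecutive pair inside $V_i$ or a pair $\{v-1,v\}$, $\{v,v+1\}$ — the first is impossible since the output on $V_i$ is free and can be made all-distinct, and the latter two require $v\pm1\in V\setminus V_i$, so validity of the output on $V\setminus V_i$ must already hold, witnessed outside $v$. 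Fourth, I would check the covering condition: every $(V\setminus V_i\setminus\{v\})$-sequence must be a restriction of a produced $(V\setminus V_i)$-sequence, which follows because such a sequence is a restriction of some produced $V$-sequence of $f$, and projecting that witness gives what we need.

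I expect the main obstacle to be the third step, namely arguing that the validity of $f$'s output cannot rely on a consecutive pair that lies (partly or wholly) inside the branch $V_i$. The hypothesis $V_i\cap\{v-1,v+1\}=\emptyset$ is exactly what rules out a consecutive pair straddling the cut vertex, but I also need that a consecutive pair entirely inside $V_i$ cannot be forced: this is where I must use the freedom to fill the interior of $T_i$ so that the vertices of $V_i$ receive pairwise-distinct values — which in turn requires $|A|$ to be large enough to accommodate $|V_i|$ distinct labels, or more carefully, large enough to avoid consecutive coincidences within $V_i$ (the global hypothesis $|A|\ge 7$ in the surrounding theorem presumably supplies this slack). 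A subtlety is that the interior fillings of $T_i$ are constrained by $f$'s local rules, not arbitrary, so I cannot literally assign labels to the $V_i$-vertices; instead I would argue by choosing, among all inputs of $f$ extending the given $(T\setminus V_i)$-input, one whose restriction to $V_i$ is as ``spread out'' as possible, and show that if every such restriction contained a consecutive coincidence then $f$ would fail to produce some valid $V$-sequence, contradicting $v$-goodness (or the full generation property in the intended application). Making this counting argument precise, and tracking exactly which cardinality bound on $A$ it needs, is the delicate part; everything else is bookkeeping about cut vertices and restrictions of simplicial complexes.
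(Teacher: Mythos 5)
There is a genuine gap, and it sits exactly at the heart of the lemma. Your procedure on $T\setminus V_i$ lets $v$ read a fresh input cell and output its content, i.e.\ $v$'s value is a free parameter of the new procedure. Such a procedure cannot produce only valid $(V\setminus V_i)$-sequences: by the covering property of the original procedure, the rules on $W=V\setminus(V_i\cup\{v\})$ (which you keep unchanged) can output an \emph{invalid} $W$-sequence $w$, and then choosing the fresh cell's value different from $w_{v-1}$ and $w_{v+1}$ (possible since $|A|\geq 7>2$) yields an invalid output. Your verification in step 3 papers over this: you argue that the validity of $f$'s output on the extended input is witnessed by a pair $\{v,v\pm1\}$ inside $V\setminus V_i$, but that pair involves the value $f_v(\cdot)$ computed by the \emph{original} rule of $v$, which has no reason to coincide with the content of your fresh cell; setting ``the edge $e$ to carry a $V$-sequence whose $v$-coordinate equals the fresh cell's value'' does not force $f_v$ to take that value, since $f$ is an arbitrary procedure (not the canonical one) and $f_v$ depends on all edges incident to $v$, not just $e$.

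The missing idea is that $v$ must keep a genuine local rule: its value has to be a function $a(\alpha)$ of the restriction $\alpha$ of the input to the edges $E$ incident to $v$ in $T\setminus V_i$, chosen so that $g(\beta)\cup[v\mapsto a(\alpha)]$ is valid for \emph{every} extension $\beta$ of $\alpha$. The paper establishes the existence of such an $a(\alpha)$ by a Helly-type argument: for each extension $\beta$ with $g(\beta)$ invalid, the set $S_\beta$ of admissible values for $v$ (the values of $g(\beta)$ at $v-1$ and $v+1$) has size at most $2$, and one shows that any three such sets intersect --- this is where $|A|\geq 7$ is used, to build a single invalid $V_i$-sequence $u$ compatible with three given $W$-outputs, feed it through the original $f$, and conclude that the resulting value of $v$ lies in all three $S_{\beta_k}$. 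A two-element-sets Helly lemma then gives a common value. Your cardinality discussion (spreading out the values on $V_i$) gestures at the role of $|A|\geq 7$ but deploys it for the wrong purpose; without the selection of $a(\alpha)$ the construction fails. (A minor additional point: the cut vertex of $V_i$ is always $v$ by construction, since the $V_i$ are the components of $T\setminus\{v\}$, so no case distinction on the cut vertex is needed.)
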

\begin{proof}
Let~$J$ be the set of edges of~$T$ and~$f:B^J\to A^V$ be a function witnessing that~$T$ is~$v$-good. Let~$W=V\setminus (V_i\cup\{v\})$, let~$D=\bigcup_{u\in W}\W_f(u)$ be the corresponding set of inputs and let~$g:B^{D}\to A^W$ be the restriction of~$f$.

Let~$E$ be the set of edges that are incident to~$v$ in~$T\setminus V_i$ and~$\alpha$ a partial input whose domain is~$E$.

We will use the following simple and classical combinatorial observation. We include a proof for completeness.
\begin{lemma}\label{lem_combi}
Let~$A$ be a set~$(S_x)_{x\in X}$ be a family of subsets of~$A$ of size at most~$2$. If for every triplet~$(x_0,x_1,x_2)$, the set~$S_{x_0}\cap S_{x_1}\cap S_{x_2}$ is non-empty, then~$\bigcap_{x\in X}S_x$ is non-empty.
\end{lemma}
\begin{proof}
It can be shown by contraposition. Assume that~$\bigcap_{x\in X} S_x=\emptyset$ and let~$x_0$ be arbitrary. Let~$s_1,s_2\in S$ be such that~$S_{x_0}\subseteq \{s_1,s_2\}$. By assumption, for each~$k\in\{1,2\}$ there exists~$x_k\in X$ such that~$s_k\notin S_{x_k}$. Therefore,~$S_{x_0}\cap S_{x_1}\cap S_{x_2}=\emptyset$.
\end{proof}

\begin{claim}
There exists a letter~$a\in A$ such that for every total input~$\beta$ on~$D$ extending~$\alpha$,~$g(\beta)\cup [v\mapsto a]$ is valid.
\end{claim}
\begin{proof}
Note that when~$g(\beta)$ is already valid, the value of~$a$ does not matter, so we only need to consider the case when~$g(\beta)$ is invalid. Let
\[
X=\{\beta\in B^D:\beta\text{ extends $\alpha$ and~$g(\beta)$ is invalid.}\}
\]
For each~$\beta\in D$, let~$S_\beta$ be a subset of~$A$ of size at most~$2$ containing the values of~$g(\beta)$ at positions~$v-1$ and~$v+1$ (if they exist, i.e.~if~$v>0$ and~$v<n-1$ respectively). For any~$a\in S_\beta$, the~$(W\cup \{v\})$-sequence~$g(\beta)\cup [v\mapsto a]$ is valid. We need to show that~$\bigcap_{\beta\in X} S_\beta\neq \emptyset$. By Lemma \ref{lem_combi}, it is sufficient to show that for all triplets~$(\beta_0,\beta_1,\beta_2)\in X^3$,~$S_{\beta_0}\cap S_{\beta_1}\cap S_{\beta_2}\neq\emptyset$.
 

Let~$\beta_0,\beta_1,\beta_2$ extend~$\alpha$ and such that each~$w_k=g(\beta_k)$ is invalid. As~$|A|\geq 7$, there exists an invalid~$V_i$-sequence~$u$ such that each~$u\cup w_k$ is an invalid~$(V\setminus\{v\})$-sequence. Indeed, the letters in~$u$ can be successively chosen so that they do not coincide with the letters already present in consecutive positions in any~$u\cup w_k$: at each step, there are at most~$6$ such letters, so there is always a fresh letter available in~$A$. Let~$\gamma$ be an input on the edges incident to~$V_i$ and, sent to~$u$ by~$f$. Let~$a$ be the value of~$v$ in~$f(\gamma\cup\alpha)$. For each~$k$,~$f(\gamma\cup\beta_k)=u\cup w_k\cup [v\mapsto a]$ is valid but~$u\cup w_k$ is not, so an equality between consecutive positions must involve~$v$ and~$v-1$ or~$v+1$, so~$a\in S_{\beta_k}$. Therefore,~$S_{\beta_0}\cap S_{\beta_1}\cap S_{\beta_2}\neq \emptyset$.

By Lemma \ref{lem_combi}, we conclude that there exists~$a$ in~$\bigcap_{\beta\in X} S_{\beta}$, which proves the claim.
\end{proof}

To each~$\alpha$ one can therefore associate such an~$a(\alpha)$. We define a procedure on~$T\setminus V_i$. All the vertices except~$v$ keep the same local rule as in~$f$. The new rule for~$v$ is, upon reading the restriction~$\alpha$ of the input to~$E$, to output~$a(\alpha)$. The procedure only produces valid sequences, and every sequence on~$V\setminus (V_i\cup\{v\})$ is obtained (extend it to~$V\setminus \{v\}$, that sequence is obtained by the original procedure).
\end{proof}

\begin{lemma}\label{lem_uv}
Let~$v$ have degree~$1$ in the tree~$T$ and~$u$ its adjacent vertex. If~$T$ is~$v$-good, then~$T$ is~$u$-good.
\end{lemma}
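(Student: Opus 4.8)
The plan is to normalize the witnessing procedure and then run a short case analysis in which, in all cases but one, $f$ itself (or a one-line modification of it) already witnesses $u$-goodness. By Proposition~\ref{prop_canonical} applied to the language $\im{f}$, I may assume a $v$-good witness $f\colon B^{E(T)}\to A^V$ is in canonical form, so that each edge of $T$ is read only by its two endpoints. Since $v$ is a leaf, its unique incident edge is $e_0=\{u,v\}$ and $f_v$ depends only on the coordinate $x_{e_0}$; write $f_v=\phi(x_{e_0})$. The hypothesis that $T$ is $v$-good says exactly that $f'=\pi_{V\setminus\{v\}}\circ f$ is onto $A^{V\setminus\{v\}}$ and that $\im{f}$ consists of valid $V$-sequences.

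If $\phi$ is onto $A$, then $f$ already witnesses $u$-goodness: given a $(V\setminus\{u\})$-sequence $t$, realize its restriction to $V\setminus\{u,v\}$ using surjectivity of $f'$, then reset $x_{e_0}$ to a preimage under $\phi$ of $t(v)$; since no cell other than $u$ and $v$ reads $e_0$, the new output restricts to $t$ on $V\setminus\{u\}$ and is valid. If $\phi$ is not onto but $u$ is consecutive to $v$ (i.e.\ $u\in\{v-1,v+1\}$), modify $f$ into $g$ by adding a fresh free coordinate $a\in A$ on $e_0$, output by both $v$ and $u$, and keeping every other cell's rule (none of them reads $e_0$); every output is then valid because the consecutive pair $\{u,v\}$ receives equal values, the values on $V\setminus\{u\}$ are $(f_w)_{w\in V\setminus\{u,v\}}$ together with the free coordinate $a$ at $v$ and hence still range over all of $A^{V\setminus\{u\}}$, and $K_g\subseteq T$ because the new coordinate lives on the edge $e_0\in T$.

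The remaining case is $\phi$ not onto and $u$ not consecutive to $v$. First I claim $T$ contains an edge $\{u,u'\}$ with $u'\in\{u-1,u+1\}\cap V$. Pick an invalid $(V\setminus\{v\})$-sequence $w'$ whose values at the numerical neighbours of $v$ in $V$ all lie outside $\im\phi$ (possible by surjectivity of $f'$, using that $|A|$ is large), and realize it as $f'(x)$; varying only $x_{e_0}$ changes, among the cells of $V\setminus\{v\}$, only $f_u$ and keeps $f'$ invalid, so validity of $f$ forces the pair witnessing validity to involve $u$ and a numerical neighbour $u'$ of $u$ in $V$. If $\W_f(u)$ were disjoint from $\W_f(u')$ for every such $u'$, one could instead drive $f_u$ and $f_{u'}$ to distinct values and produce an invalid output, a contradiction; hence $\W_f(u)\cap\W_f(u')\neq\emptyset$ for some such $u'$, which in the canonical form means $\{u,u'\}\in T$. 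Given this edge, let $g$ put a fresh free coordinate $a\in A$ on $e_0$ (output by $v$) and a fresh free coordinate $a'\in A$ on $\{u,u'\}$ (output by $u'$), let $g_u$ copy $a'$, and keep all other cells' rules; then $\{u,u'\}$ always receives equal values so every output is valid, the values on $V\setminus\{u\}$ are $(f_w)_{w\in V\setminus\{u,v,u'\}}$ together with the free coordinates $a$ at $v$ and $a'$ at $u'$, so they range over all of $A^{V\setminus\{u\}}$, and $K_g\subseteq T$ since the new coordinates live on edges of $T$.

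The one genuinely delicate point is the claim in the last case that $u$ must already be adjacent in $T$ to one of its numerical neighbours; this is where one has to extract real structural content from "$T$ is $v$-good" via the perturbation-and-independence argument above (exploiting that $v$ communicates only through $u$, so that once $\phi$ misses a value the task of repairing invalid configurations cannot be carried by $v$). All the other cases are routine reuse or trivial modifications of $f$.
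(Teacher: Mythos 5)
Your Case~1 is exactly the paper's proof (normalize via Proposition~\ref{prop_canonical}, realize the restriction of $t$ to $V\setminus\{u,v\}$, then reset the coordinate on $e_0$, which only $u$ and $v$ read). The trouble is that you treat ``$\phi$ not onto'' as a live possibility, and your handling of it does not stand up. In fact that hypothesis is contradictory, and your own Case~3 setup is one step away from showing it: take an invalid $(V\setminus\{v\})$-sequence $w'$ whose values at the numerical neighbours of $v$ in $V$ lie outside $\im{\phi}$, and realize it as $f'(x)$. Then $f(x)$ is valid while its restriction to $V\setminus\{v\}$ is the invalid $w'$, so the consecutive pair witnessing validity must involve $v$; but $f_v(x)\in\im{\phi}$ whereas the only candidates $w'(v-1),w'(v+1)$ (those that exist) were chosen outside $\im{\phi}$ --- contradiction. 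Hence $\phi$ is onto, only your Case~1 can occur, and the lemma follows. This is precisely the paper's opening observation that ``$v$ can take arbitrary values'', proved by realizing, for each $a\in A$, an invalid $(V\setminus\{v\})$-sequence with value $a$ at the numerical neighbours of $v$.

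As written, however, Case~3 contains genuine errors rather than a vacuous truth you have verified. The assertion that varying $x_{e_0}$ ``keeps $f'$ invalid'' is unjustified (changing $f_u$ could create an equality between $u$ and one of its numerical neighbours), and the conclusion that the validating pair involves $u$ and some $u'$ is false already for the original input $x$: there $f'(x)=w'$ is invalid outright, so no pair inside $V\setminus\{v\}$ witnesses anything, and no pair involving $v$ can either --- which is the contradiction you should be extracting. The subsequent step (``drive $f_u$ and $f_{u'}$ to distinct values and produce an invalid output'') is likewise not carried out, since other positions could still validate the output. Because you have not shown that Cases~2 and~3 are empty, the proof is incomplete as it stands; the repair is immediate: push the Case~3 construction to the contradiction above, conclude that $\phi$ is surjective, and delete Cases~2 and~3.
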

\begin{proof}
We keep the same generation procedure, will only produces valid sequences. We need to show that every~$(V\setminus \{u\})$-sequence is the restriction of some produced~$V$-sequence.

First observe that~$v$ can take arbitrary values. Indeed, for~$a\in A$, take an invalid~$V\setminus\{v\}$-sequence assigning the value~$a$ to the neighbors of~$v$. It is the restriction of some produced sequence. That sequence is valid, so it must give value~$a$ to~$v$.

Let~$w$ be a~$V\setminus\{u\}$-sequence. Let~$w'$ be a~$V\setminus\{v\}$-sequence that coincides with~$w$ on~$V\setminus \{u,v\}$ (and gives an arbitrary value to~$u$). By assumption,~$w'$ is the restriction of a sequence that can be obtained from some input. In that input, change the value of the edge~$(u,v)$ to give value~$w(v)$ to~$v$. It may change the value of~$u$, but in any case the produced sequence extends~$w$.
\end{proof}

\begin{corollary}
If a path generates~$\Eq_n(A)$, then two neighbors must be connected by an edge.
\end{corollary}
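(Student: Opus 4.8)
The plan is to argue by contradiction: suppose $T$ is a path on $[0,n-1]$ that generates $\Eq_n(A)$ and in which no edge joins two consecutive positions, and show this is impossible by peeling $T$ down to a single vertex while preserving goodness, which contradicts the lemma stating that a $v$-good complex over $V$ must contain $v-1$ or $v+1$ in $V$. For $n\le 2$ the statement is immediate, so one may assume $n\ge 3$.

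First I would record two observations. Since $T$ generates $\Eq_n(A)$, it is $v$-good for every $v\in[0,n-1]$ (noted just after the definition of goodness). And since $T$ is a path, deleting a leaf from it leaves a shorter path, or at the final step an isolated vertex, whose edges form a subset of the edges of $T$; hence the hypothesis ``no edge between consecutive positions'' is inherited at every stage, so every edge we ever touch joins two positions at distance at least $2$.

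The heart of the argument is a single peeling step, organized as a loop invariant: we maintain a sub-path $P$ on a vertex set $V\subseteq[0,n-1]$ with $|V|\ge 2$, all of whose edges join non-consecutive positions, together with a leaf $w$ of $P$ such that $P$ is $w$-good. We start with $P=T$, $V=[0,n-1]$, and $w$ an endpoint of the path $T$, which is $w$-good by the first observation. For the step, let $u$ be the unique neighbour of $w$ in $P$; by Lemma~\ref{lem_uv}, $P$ is $u$-good. Removing $u$ isolates $w$, so $\{w\}$ is a connected component of $P\setminus\{u\}$; since $\{w,u\}$ is an edge of $P$ we have $|w-u|\ge 2$, hence $w\notin\{u-1,u+1\}$, so Lemma~\ref{lem_branch} applies to this component and yields that $P':=P\setminus\{w\}$ is $u$-good over $V':=V\setminus\{w\}$. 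Moreover $u$, having lost its neighbour $w$, is a leaf of $P'$ (or the sole vertex, if $|V|=2$), so the invariant is restored for $(P',V',u)$ with $|V'|=|V|-1$.

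Iterating the step $n-1$ times, we reach a one-vertex path $\{w\}$ that is $w$-good; but then the basic lemma forces $w-1$ or $w+1$ to lie in $\{w\}$, which is absurd. Hence $T$ must contain an edge between two consecutive positions. I expect the only subtleties to be bookkeeping: checking that the vertex removed at each step is a leaf at that moment (guaranteed by the invariant), that the crossed edge $\{w,u\}$ is non-consecutive (this is precisely where the assumption, together with its inheritance under vertex deletion, is used), and handling the degenerate final step where $P$ has two vertices so that $u$ ends up isolated rather than of degree $1$; there Lemma~\ref{lem_branch} still applies and the iteration simply stops. All the genuine combinatorial content, in particular the use of $|A|\ge 7$, has already been absorbed into Lemma~\ref{lem_branch}, so nothing heavy remains.
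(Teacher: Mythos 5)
Your proposal is correct and follows essentially the same route as the paper: starting from an endpoint, you alternate Lemma~\ref{lem_uv} (shift goodness to the neighbour) with Lemma~\ref{lem_branch} (prune the now-isolated leaf, using that the crossed edge is non-consecutive), until a single $v$-good vertex remains, contradicting the lemma that $v$-goodness forces $v-1$ or $v+1$ into the vertex set. The paper phrases this as an induction along the path $v_0,\ldots,v_{n-1}$ rather than a loop invariant, but the content is identical.
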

\begin{proof}
Assume for a contradiction that a path~$T$ generates~$\Eq_n(A)$, with no edge between consecutive vertices. Let~$v_0,v_1,\ldots,v_{n-1}$ be the list of vertices in the order of the path.


Let~$V_i=\{v_i,\ldots,v_{n-1}\}$ and~$T_i$ be the subpath over~$V_i$, made of edges~$(v_j,v_{j+1})$ for~$i\leq j<n-1$.
\begin{claim}
For each~$i<n$,~$T_i$ is~$v_i$-good.
\end{claim}
\begin{proof}
We prove the claim by induction on~$i$.

For~$i=0$, as~$T_0$ generates~$\Eq_n(A)$,~$T_0$ is~$v_0$-good.

Assume that~$T_i$ is~$v_i$-good for some~$i<n-1$. As~$v_i$ has degree~$1$ in~$T_i$ and~$v_{i+1}$ is its adjacent vertex,~$T_i$ is~$v_{i+1}$-good by Lemma \ref{lem_uv}. The forest~$T_i\setminus \{v_{i+1}\}$ is made of two trees: the singleton~$\{v_i\}$ and the rest. By assumption,~$v_i$ is not consecutive to~$v_{i+1}$ so by Lemma \ref{lem_branch},~$T_i\setminus \{v_i\}=T_{i+1}$ is~$v_{i+1}$-good over~$V_i\setminus \{v_i\}=V_{i+1}$, which proves the induction step.
\end{proof}

As a result,~$T_{n-1}$ is~$v_{n-1}$-good, which is a contradiction as~$v_{n-1}$ as no consecutive position in~$V_{n-1}=\{v_{n-1}\}$.
\end{proof}

When trying to generalize this technique to trees, this technique stops working when~$T$ is~$v$-good and~$v-1$ and~$v+1$ lie in two different subtrees of~$T\setminus \{v\}$. One can remove all the other subtrees by Lemma \ref{lem_branch}, but then~$v$ has degree~$2$ in the remaining subtree so Lemma \ref{lem_uv} cannot be applied.


\section{Relationship with distributed computing}\label{sec_distributed}
The problem of generating a language in a local way can be seen as a task to be solved in the framework of distributed computing. However, there are key differences with the traditional problems in this area which make our problem slightly unusual. Let us mention the main differences: 
\begin{itemize}
\item Relationship between inputs and outputs. Most of the time in distributed computing, a task involves a particular relationship between the outputs and the inputs, described in the specification of the task. For instance, in the consensus task, processed are assigned inputs bits and they should agree on a common output bit with an important condition: if the input bits all have a common value, then the output also takes that value. In our setting, we are only interested in the set of possible outputs, and not on their relationship with the inputs.
\item Explicit communication. In the usual models, communication is allowed either by message-passing, or by shared read-write memory. In both cases, the processes write pieces of information to be read by other processes, which is only possible if the protocol works in steps, often called layers. In our case, communication is only implicit and made possible by sharing parts of the input, and the whole computation is made in one step.
\item Distributed computing often focuses on the resilience of protocols to crash failures, to asynchronicity. In our setting, no unpredictable behavior is allowed.
\end{itemize}

Combinatorial topology has been an important way of reformulating problems in distributed computing, providing topological techniques to prove impossibility results \cite{HKR13}. We briefly recall the general idea and then show how the problem of language generation can be reformulated in this setting.

\subsection{Distributed computing via combinatorial topology}
The general problem in distributed computing is that several processes are required to solve a task together by collaboratively processing a global state though having only local views on this state. In the combinatorial topology formulation, the set of all possible configurations is represented by a chromatic labeled simplicial complex in which each simplex represents a possible global state, and its vertices represent the local views of the processes on that state.

More precisely, each process has a unique identifier (which is an element of a set~$C$ of colors) and a local view on the global state (which is an element of a set~$V$ of values, or labels). The set of all possible local views of the processes is a subset of~$C\times V$, seen as a vertex set~$V$. A global state is represented by a simplex whose vertices represent the local views of the processes on this state, so it contains exactly one vertex of each color, whose label is its local view. Computations are then represented by chromatic simplicial maps, which send vertices to vertices of the same color, and simplices to simplices.

This framework has enabled the application of topological techniques to prove impossibility results in distributed computing. For instance, Sperner's lemma, which is a combinatorial formulation of Brouwer's fixed-point theorem, has been applied to show that the so-called set agreement task is not solvable in certain models.



\subsection{Language generation via combinatorial topology}
We now explain how the problem of language generation can be formulated in this language.


Let~$L\subseteq A^I$ and~$K$ be a simplicial complex over~$I$. Let~$f:B^J\to A^I$ be a hypothetical function such that~$\im{f}=L$ and~$K_f\subseteq K$. We can assume that~$J$ is the set of maximal simplices of~$K$.

A language~$L\subseteq A^I$ induces an \textbf{output complex} $\O_L$, which is a chromatic labeled simplicial complex with~$I$ as color set and~$A$ as label set. 
For each element~$y\in L$, there is a simplex~$S=\{(i,y_i):i\in I\}$ and the vertex set is the set of pairs~$(i,a)\in I\times A$ appearing in these simplices. In other words, each simplex represents an element of~$L$, its vertices represent its values at each position.

Let~$B$ be a finite set and~$B_\bot=B\sqcup\{\bot\}$ and let~$K$ be a simplicial complex~$K$ over~$I$. Let~$J$ be the set of maximal simplices of~$K$. We define the the \textbf{input complex} $\II_K(B)$ which is a chromatic labeled simplicial complex with~$I$ as color set and~$B_\bot^J$ as label set. 
For each~$x\in B^J$, there is a simplex~$\{(i,\restr{x}{\W_f(i)})\}$, where the partial sequence~$\restr{x}{\W_f(i)}$ can be seen as an element of~$B_\bot^J$.

As explained above, we assume no relationship between the inputs and the outputs, so we do not need a carrier map as in the usual setting, which we therefore do not redefine.

Our goal is then to find a chromatic simplicial map~$F:\II_K(B)\to\O_L$ which is surjective in the sense that every output simplex is the image of some input simplex. More precisely,~$F$ is a function from the vertices of~$\II_K(B)$ to the vertices of~$\O_L$ that is color-preserving (i.e.~is the identity on the first component) and sends each input simplex to an output simplex, i.e.~each~$x\in B^J$ to an element~$y\in L$. We are requiring moreover that the map is surjective, i.e.~each simplex~$y\in L$ is the image of some simplex~$x\in B^J$.

\begin{proposition}
A complex~$K$ generates a language~$L$ if and only if there exists~$B$ and a surjective chromatic simplicial map~$F:\II_K(B)\to\O_L$.
\end{proposition}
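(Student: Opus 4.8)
The plan is to prove the two directions separately, unpacking the definitions of $\O_L$ and $\II_K(B)$ and checking that a generating function and a surjective chromatic simplicial map are literally the same data, up to bookkeeping.

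\medskip

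For the forward direction, suppose $K$ generates $L$. By Proposition~\ref{prop_canonical} (the canonical generation procedure), there is a function $f:L^J\to A^I$ with $J$ the set of maximal simplices of $K$, such that $K_f\subseteq K$ and $f(x,\dots,x)=x$ for all $x\in L$; take $B=L$. I would define $F$ on the vertices of $\II_K(B)$ as follows: a vertex of $\II_K(B)$ has the form $(i,\restr{x}{\W_f(i)})$ for some $x\in B^J$ and $i\in I$; since $\restr{x}{\W_f(i)}$ determines $f_i(x)$, the value $a=f_i(x)$ is well-defined from that vertex, and we set $F(i,\restr{x}{\W_f(i)})=(i,a)$. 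This is color-preserving by construction. To see it is simplicial, note that an input simplex is $\{(i,\restr{x}{\W_f(i)}):i\in I\}$ for a fixed $x\in B^J$, and its image is $\{(i,f_i(x)):i\in I\}=\{(i,f(x)_i):i\in I\}$, which is the output simplex corresponding to $f(x)\in L$ — here one uses $K_f\subseteq K$ only to know that the windows $\W_f(i)$ are compatible with $J$ being the maximal simplices of $K$, so that $\restr{x}{\W_f(i)}$ is a legitimate label. Surjectivity of $F$ onto output simplices is exactly surjectivity of $f$ onto $L$: given $y\in L$, the input simplex for $x=(y,\dots,y)$ maps to the output simplex for $f(y,\dots,y)=y$.

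\medskip

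For the converse, suppose $F:\II_K(B)\to\O_L$ is a surjective chromatic simplicial map for some $B$. Define $f:B^J\to A^I$ by letting $f_i(x)$ be the label of $F$ applied to the vertex $(i,\restr{x}{\W}_{?}(i))$ — but here there is a subtlety: before $f$ exists we do not yet have windows $\W_f(i)$, so the vertices of $\II_K(B)$ must be indexed by the maximal simplices incident to $i$ rather than by an a priori window. I would instead define, for $i\in I$ and $x\in B^J$, the vertex $v_i(x)\in\II_K(B)$ to be the $i$-colored vertex of the input simplex determined by $x$ (this is how $\II_K(B)$ was defined, with label $\restr{x}{\W_f(i)}$ in the paper's notation, which really means the restriction of $x$ to the maximal simplices of $K$ containing $i$). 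Then set $f_i(x)$ to be the label of $F(v_i(x))$. Since $F$ sends input simplices to output simplices, $\{(i,f_i(x)):i\in I\}$ is always a simplex of $\O_L$, i.e. $(f_i(x))_{i\in I}\in L$, so $\im f\subseteq L$; surjectivity of $F$ gives $\im f=L$. Finally, because $f_i(x)$ depends only on the restriction of $x$ to the maximal simplices of $K$ containing $i$, one gets $\W_f(i)\subseteq\{S\in J:i\in S\}$, and hence for any $S\in K_f$ there is a common input cell $j\in\bigcap_{i\in S}\W_f(i)$, which is a maximal simplex of $K$ containing every $i\in S$; therefore $S\subseteq j\in K$, so $K_f\subseteq K$, i.e. $K$ generates $L$.

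\medskip

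The main obstacle is purely notational: the paper writes the label of the $i$-colored input vertex as $\restr{x}{\W_f(i)}$, which presupposes the very function $f$ one is trying to construct in the converse direction. The careful step is to observe that the input complex $\II_K(B)$ can be (and implicitly is) defined intrinsically — the $i$-colored vertex records the restriction of $x$ to the maximal simplices of $K$ that contain $i$ — and that any chromatic simplicial map out of it automatically has the property that the $i$-th output coordinate depends only on those coordinates of $x$, which is exactly what forces $K_f\subseteq K$. Once this is articulated, both directions are routine unwinding of definitions, and the equivalence follows.
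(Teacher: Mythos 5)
Your proof is correct and takes essentially the same route as the paper's (much terser) argument: reduce to the canonical generation procedure of Proposition~\ref{prop_canonical} and observe that a generating function and a surjective chromatic simplicial map are the same data. Your explicit resolution of the circularity in the definition of $\II_K(B)$ --- reading the label of the $i$-colored vertex as the restriction of $x$ to the maximal simplices of $K$ containing $i$ --- is exactly the intended intrinsic reading (the paper uses it later when discussing connectedness of $\II_K(B)$), and it is what makes the converse direction go through.
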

\begin{proof}
We know from Proposition \ref{prop_canonical} that~$K$ generates~$L$ if and only if there exists~$B$ and~$F:B^J\to A^I$ such that~$\im{F}=L$ and~$K_F\subseteq K$. Such a map induces a chromatic simplicial map~$G:I\times B_\bot\times I\times A$ sending~$(i,x)$ to~$(i,F_i(x))$. Conversely, any such~$G$ induces a map from the simplices of~$\II_K(B)\to \O_L$, which is a function~$F:B^J\to L$ such that~$K_F\subseteq K$.
\end{proof}

Although this reformulation opens the possibility of applying techniques from combinatorial topology to our problem, it does not seem to provide new insight or simpler proofs for the results of this article. The geometrical intuition that could be gained by this translation quickly becomes elusive because the number of vertices and the dimension of the input and output complexes grow with the length of the strings. Whether topological techniques such as Sperner's lemma can be applied to our problem is left as an open question.

\subsection{Examples}
Let us illustrate the reformulation of the language generation problem on a few examples.

Let~$n=3$. We consider two languages in~$\{0,1\}^3$: the set~$\card_{\leq 1}$ of sequences having at most one occurrence of~$1$, and the set~$\U_3$ of sequences having exactly one occurrence of~$1$. We also consider two communication complexes over~$I_3$: the complete graph~$K_3$ and the graph~$G$ obtained from~$K_3$ by removing an edge. Table \ref{tab_gen} summarizes which complexes generate which languages.
\begin{table}[ht]
\[
\begin{array}{c|c|c}
&\card_{\leq 1}&\U_3\\\hline
K_3&\text{yes}&\text{no}\\\hline
G&\text{no}&\text{no}
\end{array}
\]
\caption{The table shows which communication complexes generate which languages}\label{tab_gen}
\end{table}

The corresponding output complexes~$\O_{\card_{\leq 1}}$ and~$\O_{\U_3}$ are shown in Figure \ref{fig_output_complex}, where the set~$I_3=\{0,1,2\}$ is identified with the color set~$\{\pic{chromatic-100},\raisebox{-.5mm}{\includegraphics{Images/chromatic-101}},\raisebox{-.5mm}{\includegraphics{Images/chromatic-102}}\}$, and the labels in~$\{0,1\}$ are written on the vertices.

\begin{figure}[ht]
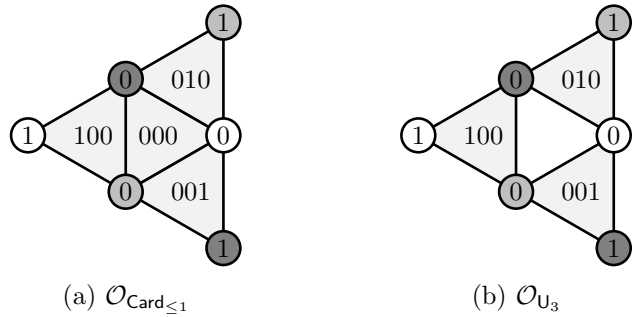

\centering
\subcaptionbox{$\O_{\card_{\leq 1}}$\label{fig_output_complex_card}}[.4\textwidth]{\includegraphics{Images/chromatic-0}}
\subcaptionbox{$\O_{\U_3}$\label{fig_output_complex_unique}}[.4\textwidth]{\includegraphics{Images/chromatic-3}}
\caption{The output complexes associated with two languages in~$\{0,1\}^3$}\label{fig_output_complex}
\end{figure}


The communication complexes~$K_3$ and~$G$ together with their associated input complexes with~$B=\{0,1\}$ are shown in Figures \ref{fig_clique_and_input} and \ref{fig_path_and_input}. In each case, the maximal simplices of~$K_3$ and~$G$ are edges and the lengths of the labels of the vertices in the input complexes correspond to the number of edges (note that the input complex of~$K_3$ is the same complex as in the example of the muddy children problem \cite[Section 1.3.1]{HKR13}).

\begin{figure}[ht]
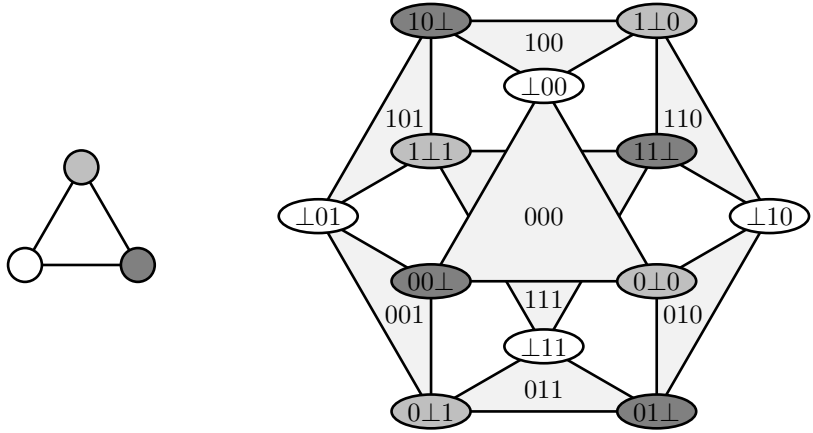

\subcaptionbox{The complete graph~$K_3$\label{fig_clique}}[.35\textwidth]{\includegraphics{Images/chromatic-20}}
\subcaptionbox{The input complex~$\II_{K_3}$ with~$B=\{0,1\}$\label{fig_input_clique}}[.6\textwidth]{\includegraphics{Images/chromatic-2}}
\caption{A communication complex and the associated input complex}\label{fig_clique_and_input}
\end{figure}
\begin{figure}[ht]
\centering
\subcaptionbox{A graph~$G$\label{fig_path}}[.35\textwidth]{\includegraphics{Images/chromatic-10}}
\subcaptionbox{The input complex~$\II_G$ with~$B=\{0,1\}$\label{fig_input_path}}[.6\textwidth]{\includegraphics{Images/chromatic-1}}
\caption{A communication complex and the associated input complex}\label{fig_path_and_input}
\end{figure}

As explained in Remark \ref{rmk_at_most1}, the complete graph~$K_3$ generates~$\card_{\leq 1}$, moreover there is a generation procedure with input alphabet~$B=\{0,1\}$. In other words, there is a surjective chromatic simplicial map from~$\II_{K_3}$ to~$\O_{\card_{\leq 1}}$: it sends the vertices \pic[-1mm]{chromatic-300}, \pic[-1mm]{chromatic-301} and \pic[-1mm]{chromatic-302} to the vertices of the corresponding colors labeled by~$1$, and the other vertices to the vertices labeled by~$0$ (see \cite[GL]{GL} for a visualization).

However, the graph~$G$ does not generate~$\card_{\leq 1}$. In other words, there is no surjective chromatic simplicial map from~$\II_G$ (for any~$B$) to~$\O_{\card_{\leq 1}}$. For~$B=\{0,1\}$, it can be easily seen from the pictures: in~$\II_G$, every pair of white and black vertices are joined by an edge, which is not the case in~$\O_{\card_{\leq 1}}$ (it corresponds to the fact that the white and black vertices are not connected by an edge in~$G$, but are not independent w.r.t.~$L$).

We know from Proposition \ref{prop_unique_5} that the only complex generating~$\U_3$ is the full complex. In particular, the complete graph~$K_3$ over~$I_3$ does not generate~$\U_3$ hence there is no surjective chromatic simplicial map from~$\II_{K_3}$ to~$\O_{\U_3}$.

\subsection{Some results revisited}
Some of the simplest concepts and results of the article can be translated in the setting of chromatic labeled simplicial complexes.

First, joins of input and output complexes reflects simple properties of communication complexes and languages:
\begin{itemize}
\item A language~$L$ is not irreducible, i.e.~is a product~$L_1\times L_2$ of two languages, if and only if the associated output complex~$\O_L$ is the join of~$\O_{L_1}$ and~$\O_{L_2}$,
\item A complex~$K$ is disconnected, i.e.~is the disjoint union of two complexes~$K_1$ and~$K_2$, if and only if the associated input complex~$\II_K$ is the join of~$\II_{K_1}$ and~$\II_{K_2}$.
\end{itemize}
Therefore, Proposition \ref{prop_irreducible} which states in particular that an irreducible language cannot be generated by a disconnected complex, can be reformulated as follows: a simplicial map sends the join of two input complexes to the join of their images, so a disconnected complex can only generate a non-irreducible language.

For a communication complex~$K$, the connectedness of its input complex~$\II_K$ reflects that~$K$ is not the full complex:
\begin{itemize}
\item If~$K$ is the full complex over~$I$, then~$\II_K(B)$ is a disjoint union of simplices, one for each element of~$B$, in particular it is disconnected if~$|B|\geq 2$,
\item If~$K$ is a complex over~$I$ and is not the full complex, then~$\II_K(B)$ is connected. The proof is similar to Proposition \ref{prop_well_defined}. Let~$J$ be the set of maximal simplices of~$K$ and for~$i\in I$, let~$\W(i)\subseteq J$ be the set of maximal simplices of~$K$ containing~$i$. Let~$x,y\in B^J$, which can be seen as simplices of~$\II_K(B)$. Define~$x=x_0,x_1,\ldots,x_n$ where~$x_{i+1}$ coincides with~$x_i$ on~$\W(i)$ and with~$y$ elsewhere. As~$\bigcap_i\W(i)=\emptyset$, one has~$x_n=y$. The simplices~$x_i$ and~$x_{i+1}$ of~$\II_K(B)$ share a vertex~$(i,\restr{x_i}{\W(i)})$, so the simplices~$x$ and~$y$ in~$\II_K(B)$ belong to the same connected component.
\end{itemize}
Therefore, Example \ref{ex_constant} which states that the only complex generating the language of constant sequences, is the full complex, can be derived in the following way. Let~$L$ be the language of constant sequences over~$A$, with~$|A|\geq 2$. Its output complex~$\O_L$ is the disjoint union of the simplices~$\{(i,a):i\in I\}$ for~$a\in A$, in particular it is disconnected. If~$K$ generates~$L$, then~$\II_K$ must be disconnected as well, implying that~$K$ is the full complex. 

%
%

It is not clear whether more complicated results are easier to obtain using this framework.

\section{Future directions}\label{sec_conclusion}
This study leaves many open questions, let us list a few ones:
\begin{itemize}
\item What are the minimal complexes generating~$\U_n$ for arbitrary~$n$?
\item What are the minimal complexes generating~$\Eq_n(A)$ for arbitrary~$n$ and~$A$?
\item Is there a characterization of the languages that can be generated by a complex which is not the full simplex, i.e.~by a non-trivial procedure in which the cells do not communicate all together. For~$n=2$, those languages are the non-irreducible ones, and the situation is already not clear for~$n=3$.
\item Is there a characterization of the languages that can be generated by a graph, i.e.~by a procedure in which cells only communicate in pairs?
\item Given finite sets~$A,I$, a language~$L\subseteq A^I$ and a complex~$K$ over~$I$, what is the computational complexity of deciding whether~$K$ generates~$L$? It is decidable because it can be answered by exhaustive search in a finite space (see Proposition \ref{prop_canonical} implies that the input space of a potential generating function can be chosen in advance). Is there a language~$L$ for which this problem is as difficult as the general case?
\item Are the topological properties of complexes generating a language reflected in intrinsic properties of the language? For instance, we have seen with Proposition \ref{prop_irreducible} that the connectedness of complexes reflects the irreducibility of the language. Are there other correspondences, involving for instance higher-dimensional connectedness notions from algebraic topology?
\end{itemize}

In a sequel article \cite{H25b}, we thoroughly investigate the language of binary monotonic sequences, whose analysis turns out to be much harder than the languages studied in the present article.

Finally, when proving that a complex does not generate a language, some of the arguments follow a common pattern: they consist in progressively deriving constraints that eventually lead to a conflict (see for instance Section \ref{sec_non_dec}). This type of arguments can be presented in a unified framework, for instance in the form of an inference system, and the search for such arguments can be partially automated. It can help analyzing certain languages but quickly reaches its limits due to the combinatorial explosion of the search space.


\begin{thebibliography}{GGPT21}

\bibitem[DFP12]{DFP12}
Alberto Dennunzio, Enrico Formenti, and Julien Provillard.
\newblock Non-uniform cellular automata: Classes, dynamics, and decidability.
\newblock {\em Information and Computation}, 215:32--46, 2012.

\bibitem[Die17]{Diestel17}
Reinhard Diestel.
\newblock {\em Graph Theory}.
\newblock Springer Publishing Company, Incorporated, 5th edition, 2017.

\bibitem[FG18]{FG18}
Bernd Finkbeiner and Paul G\"{o}lz.
\newblock {Synthesis in Distributed Environments}.
\newblock In Satya Lokam and R.~Ramanujam, editors, {\em 37th IARCS Annual
  Conference on Foundations of Software Technology and Theoretical Computer
  Science (FSTTCS 2017)}, volume~93 of {\em Leibniz International Proceedings
  in Informatics (LIPIcs)}, pages 28:1--28:14, Dagstuhl, Germany, 2018. Schloss
  Dagstuhl -- Leibniz-Zentrum f{\"u}r Informatik.

\bibitem[FH25]{FH24}
Tom Favereau and Mathieu Hoyrup.
\newblock Local generation of tilings.
\newblock {\em Ergodic Theory and Dynamical Systems}, 45(11):3377–3418, 2025.

\bibitem[Gad19]{Gad19}
Maximilien Gadouleau.
\newblock On the influence of the interaction graph on a finite dynamical
  system.
\newblock {\em Natural Computing}, 19(1):15–28, 2019.

\bibitem[GGPT21]{GGPT21}
Guilhem Gamard, Pierre Guillon, Kevin Perrot, and Guillaume Theyssier.
\newblock {Rice-Like Theorems for Automata Networks}.
\newblock In Markus Bl\"{a}ser and Benjamin Monmege, editors, {\em 38th
  International Symposium on Theoretical Aspects of Computer Science (STACS
  2021)}, volume 187 of {\em Leibniz International Proceedings in Informatics
  (LIPIcs)}, pages 32:1--32:17, Dagstuhl, Germany, 2021. Schloss Dagstuhl --
  Leibniz-Zentrum f{\"u}r Informatik.

\bibitem[GL]{GL}
Generation of a language.
\newblock \url{https://www.desmos.com/3d/2lseujl8dt}.
\newblock Animation.

\bibitem[GM90]{GM90}
Eric Goles and Servet Mart\'{\i}nez.
\newblock {\em Neural and automata networks: dynamical behavior and
  applications}.
\newblock Kluwer Academic Publishers, USA, 1990.

\bibitem[HKR13]{HKR13}
Maurice Herlihy, Dmitry Kozlov, and Sergio Rajsbaum.
\newblock {\em Distributed Computing Through Combinatorial Topology}.
\newblock Morgan Kaufmann Publishers Inc., San Francisco, CA, USA, 1st edition,
  2013.

\bibitem[Hoy25]{H25b}
Mathieu Hoyrup.
\newblock Local generation of languages: the binary monotonic sequences.
\newblock In preparation, 2025.

\bibitem[HS99]{HS99}
Maurice Herlihy and Nir Shavit.
\newblock The topological structure of asynchronous computability.
\newblock {\em J. ACM}, 46(6):858–923, November 1999.

\bibitem[Kar05]{Kari05}
Jarkko Kari.
\newblock Theory of cellular automata: A survey.
\newblock {\em Theoretical Computer Science}, 334(1):3--33, 2005.

\bibitem[Koz07]{Kozlov07}
D.~Kozlov.
\newblock {\em Combinatorial Algebraic Topology}.
\newblock Algorithms and Computation in Mathematics. Springer Berlin
  Heidelberg, 2007.

\bibitem[NS95]{NS95}
Moni Naor and Larry Stockmeyer.
\newblock What can be computed locally?
\newblock {\em SIAM Journal on Computing}, 24(6):1259--1277, 1995.

\bibitem[PR90]{PnueliR90}
Amir Pnueli and Roni Rosner.
\newblock Distributed reactive systems are hard to synthesize.
\newblock In {\em 31st Annual Symposium on Foundations of Computer Science, St.
  Louis, Missouri, USA, October 22-24, 1990, Volume {II}}, pages 746--757.
  {IEEE} Computer Society, 1990.

\bibitem[Rei15]{R15}
Fabian Reiter.
\newblock Distributed graph automata.
\newblock In {\em 2015 30th Annual ACM/IEEE Symposium on Logic in Computer
  Science}, pages 192--201, 2015.

\end{thebibliography}

\end{document}